\title{ {\bf Quantum Random \\  Self-Modifiable Computation} }
\author{Michael Stephen Fiske}
\date{May 3, 2019}
\chardef\bslash=`\\ 
\newtheorem{thm}{Theorem}[section]
\newtheorem{cor}[thm]{Corollary}
\newtheorem{lem}[thm]{Lemma}
\newtheorem{ax}{Axiom}
\theoremstyle{definition}
\newtheorem{defn}{Definition}[section]
\theoremstyle{remark}
\newtheorem{rem}{Remark}[section]
\newtheorem{example}{Example}
\newtheorem{Machine}{Ex-Machine}
\newtheorem{Instructions}{Machine Instructions}
\newcommand{\eval}[2][\right]{\relax
  \ifx#1\right\relax \left.\fi#2#1\rvert}
\begin{document}

\maketitle

\begin{abstract}
            Among the fundamental questions in computer science, at least two have a deep impact on mathematics.    
            What can computation compute?    
            How many steps does a computation require to solve an instance of the 3-SAT problem?
            Our work addresses the first question, by introducing a new model called the {\it ex-machine}.
            The ex-machine executes Turing machine instructions and two special types of instructions.
            {\it Quantum random instructions}  are physically realizable with a quantum random number generator.  
            {\it Meta instructions} can add new states and add new instructions to the 
            ex-machine.    
            A countable set of ex-machines is constructed,
            each with a finite number of states and instructions; each ex-machine 
            can compute a Turing incomputable language, whenever the quantum randomness measurements behave like 
            unbiased Bernoulli trials.  In 1936, Alan Turing posed the halting problem for Turing machines and 
            proved that this problem is unsolvable for Turing machines.   
            Consider an  enumeration $\mathcal{E}_a(i) = (\mathfrak{M}_i, T_i)$ of all Turing machines  
            $\mathfrak{M}_i$ and initial tapes $T_i$.  
            Does there exist an ex-machine $\mathfrak{X}$ that has at least one evolutionary path 
            $\mathfrak{X}$ $\rightarrow$ $\mathfrak{X}_1$ $\rightarrow$ 
            $\mathfrak{X}_2$ $\rightarrow$  $\dots$ $\rightarrow$  $\mathfrak{X}_m$, so at the 
            $m$th 
            stage ex-machine $\mathfrak{X}_m$ can correctly determine for $0 \le i \le m$  whether $\mathfrak{M}_i$'s 
            execution on tape $T_i$ eventually halts?  We demonstrate an ex-machine $\mathfrak{Q}(x)$ 
            that has one such evolutionary path.    The existence of this evolutionary path suggests that  
            David Hilbert was not misguided to propose in 1900 that mathematicians search for finite processes to 
            help construct mathematical proofs \cite{hilbert}.  
            Our refinement is that we cannot use a fixed computer program that 
            behaves according to a fixed set of mechanical rules.  We must pursue methods that exploit  
            randomness and self-modification so that the complexity of the program can increase as it computes.
\end{abstract}

\tableofcontents


\section{Introduction}

Consider two fundamental questions in computer science:

\begin{enumerate}

\item {  
            What can a computing machine compute?    
       }    

\medskip

\item {  
           How many computational steps does a computational machine require to solve an instance of the 3-SAT problem?   
           The 3-SAT problem \cite{cook_3SAT} is the basis for the  famous $P {\overset{?}= NP}$ problem \cite{cook_p_vs_np}.   
      }

\end{enumerate}

\noindent  These two questions are usually studied with    
the assumption that the Turing machine (TM) \cite{turing36} is the standard model of computation 
\cite{davis,garey,lewis,minsky,rogers,sipser}.  

      We introduce a new  model, called the {\it ex-machine}, and reexamine the first question. 
      The ex-machine model bifurcates the first question into two questions.   {\it What is computation?} 
      {\it What can computation compute?}  An ex-machine computation adds two special types of 
      instructions to the Turing machine instructions. 
      The name ex-machine --- derived from the latin {\it extra machinam} ---  
      was chosen because ex-machine computation generates new dynamical behaviors that one may no longer recognize as a machine.

       The {\it meta instruction} is one type of special instruction. 
       When an ex-machine executes a meta instruction, 
       the meta instruction can add new states and add new instructions or replace instructions.   
       The meta instruction enables the 
       complexity \cite{shannon,schmitt} of an ex-machine to increase during its execution,
       unlike a typical machine (e.g., the inclined plane, lever, pulley, wedge, 
       wheel and axle, Archimedean screw, Galilean telescope or bicycle).

      The {\it quantum random instruction} is the other special instruction.  It can be physically realized with a 
      quantum random number generator \cite{gabriel,ma,pironio,rohe,svozil_3_criteria,stefanov,stipcevic,wahl}.  
      Due to the {\it quantum random instructions}, the execution 
      behavior of two ex-machines may be distinct, even though the two ex-machines start their execution with the 
      same input on the tape, the same instructions, the same initial states, and so on.  Two distinct identical 
      ex-machines may exhibit different execution behaviors even when started with identical initial conditions.  
      When this property of the quantum random instructions is combined with the appropriate use of meta instructions, 
      two identical machines with the same initial conditions can evolve to two different ex-machines as the 
      execution of each respective machine proceeds.

      Some of the ex-machine programs provided here compute beyond the Turing barrier.  
      A countable set of ex-machines are explicitly defined.  Every  one of these 
      ex-machines can evolve to compute a Turing incomputable language with probability measure 1, 
      whenever the quantum random measurements (trials) behave like unbiased Bernoulli trials.  
      (A Turing machine cannot compute a Turing incomputable language.)

      In 1936, Alan Turing posed the halting problem and proved that the halting problem for Turing machines 
      is unsolvable by a Turing machine  \cite{davis,minsky,rogers,turing36}.  
      Consider the ex-machine halting problem:  Given an enumeration  
      $\mathcal{E}_a(i) = (\mathfrak{M}_i, T_i)$ of all Turing machines  $\mathfrak{M}_i$ and initial tapes $T_i$, 
      each finitely bounded and containing only blank symbols outside the bounds, 
      does there exist an ex-machine $\mathfrak{X}$ that has at least one evolutionary path
      $\mathfrak{X}$ $\rightarrow$
      $\mathfrak{X}_1$ $\rightarrow$ 
      $\mathfrak{X}_2$ $\rightarrow$  $\dots$ $\rightarrow$  $\mathfrak{X}_m$, so at stage $m$, the  
      ex-machine $\mathfrak{X}_m$ can correctly determine for $0 \le i \le m$  whether $\mathfrak{M}_i$'s 
      execution on tape $T_i$ eventually halts?  We demonstrate an 
      ex-machine $\mathfrak{Q}(x)$ that has one such evolutionary path.  
      At stage $m$, the self-modifying ex-machine's evolutionary path  
      $\mathfrak{Q}(h_{\mathcal{E}_a}(0)$   \hskip 0.2pc $x)$  $\rightarrow$  
      $\mathfrak{Q}(h_{\mathcal{E}_a}(0)$ $h_{\mathcal{E}_a}(1)$  \hskip 0.2pc $x)$  $\rightarrow$  $\dots$ 
      $\mathfrak{Q}(h_{\mathcal{E}_a}(0)$ $h_{\mathcal{E}_a}(1)$ $\dots$ $h_{\mathcal{E}_a}(m)$  \hskip 0.2pc $x)$ 
      has used a finite amount of computational resources and measured a finite amount of quantum randomness.
 

      Consider the Goldbach Conjecture \cite{goldbach} and the 
      Riemann Hypothesis \cite{riemann}, which are both famous, unsolved math problems.  
      Each of these problems can be expressed as an instance of Turing's halting problem with a 
      particular Turing machine.  
      (See machine instructions \ref{ins:goldbach} and \cite{turing_oracle}.) 
      A large scale, physical realization of an ex-machine and further research might present an opportunity 
      to study these mathematical problems and other difficult problems with new computational 
      and conceptual tools.


      \subsection{Related Work -- Computation}
            
      The rest of the introduction discusses some related results on computation using 
      quantum randomness, and the theory of quantum randomness.  
      Some related work on computation is in  \cite{fiske_tic} and \cite{fiske_qaem}.  
      In \cite{fiske_tic}, a parallel computational machine, 
      called the active element machine, uses its meta commands and quantum randomness to construct a 
      computational procedure that behaves like a quantum black box.  Using quantum randomness as 
      a source of unpredictability and the meta commands to self-modify the active element machine, 
      this procedure emulates a universal Turing machine so that an outside observer is unable to 
      comprehend what Turing machine instructions are being executed by the emulation of 
      the universal Turing machine.

      In \cite{fiske_qaem}, based on a Turing machine's states and alphabet symbols, 
      a transformation $\phi$ was defined from the Turing machine's instructions 
      to a finite number of affine functions in the two dimensional plane $\mathbb{Q} \times \mathbb{Q}$, 
      where  $\mathbb{Q}$ is the rational numbers.  Now for the details:  let states 
      $Q = \{q_1, \dots, q_{|Q|} \}$, alphabet $A = \{a_1, \dots, a_{|A|}\}$, a halt state $h$ that is not in $Q$,  
      and program $\eta : Q \times A \rightarrow Q \cup \{h\} \times A \times \{-1, +1\}$ be a Turing 
      machine.  This next part defines a one-to-one mapping $\phi$ from Turing program $\eta$ to a 
      finite set of affine functions, whose 
      domain is a bounded subset of $\mathbb{Q} \times \mathbb{Q}$.  Set $B = |A| + |Q| + 1$.  Define 
      symbol value function $\nu: \{h\} \cup Q \cup A \rightarrow \mathbb{N}$ \hskip 0.1pc as \hskip 0.1pc
      $\nu(h) = 0$, $\nu(a_i) = i$ and $\nu(q_i) = i + |A|$.

      $T_k$ is the alphabet symbol in the $k$th tape square.    
      $\phi$ maps right computational step $\eta(q, T_k) = (r, \alpha, +1)$ to the affine function 
      $f(x, y) = \Big{(} Bx - B^2 \nu(T_k),$ \hskip 0.3pc $\frac{1}{B}y + B \nu(r) + \nu(\alpha) - \nu(q) \Big{)}$.  During 
      this computational step, state $q$ moves to state $r$.  Alphabet symbol $\alpha$ replaces $T_k$ on tape square $k$, 
      and the tape head moves to tape square $k+1$, one square to the right.   

      Similarly, $\phi$ maps left computational step $\eta(q, T_k) = (r, \alpha, -1)$ to the affine function 
      $g(x, y) = \Big{(} \frac{1}{B}x + B \nu(T_{k-1}) + \nu(\alpha) - \nu(T_k),$ \hskip 0.4 pc 
      $By + B \nu(r) - B^2 \nu(q) - B \nu(T_{k-1}) \Big{)}$. 
      $\phi$ maps machine configuration  $(q, k, T) \in Q \times \mathbb{Z} \times A^{\mathbb{Z}}$ to the point 
      $\phi(q, k, T) = \Big{(} {\overset{\infty}{\underset{j=-1}\sum}} \nu(T_{k+j+1})  B^{-j},$  
      \hskip 0.4pc 
      $B \nu(q) + {\overset{\infty}{\underset{j=0}\sum}} \nu(T_{k-j-1}) B^{-j} \Big{)}$ in the 
      $\mathbb{Q} \times \mathbb{Q}$ plane.  Point $\phi(q, k, T)$ is in $\mathbb{Q} \times \mathbb{Q}$ 
      because only a finite number of tape squares contain non-blank symbols, so the tail of each 
      infinite sum is a geometric series.

      Each affine function's domain is a subset of some unit square 
      $\{ (x, y) \in \mathbb{Q} \times \mathbb{Q}:  m \le x \le m+1 $ \verb|and| $ n \le y \le n+1 \}$,  
      where $m$ and $n$ are integers.  Via the $\phi$ transformation, a finitely bounded initial tape and initial 
      state of the Turing machine are mapped to an initial point with rational cooordinates in one of the unit squares.  
      Hence, {\it $\phi$ transforms Turing's halting problem to the following dynamical systems problem.}
      If machine configuration $(q,k,T)$ halts after $n$ computational steps, then the orbit of $\phi(q,k,T)$ 
      exits one of the unit squares on the $n$th iteration.  If machine configuration $(r,j,S)$ is immortal 
      (i.e., never halts), then the orbit of $\phi(r,j,S)$ remains in these finite number of unit squares forever. 

      Dynamical system $\frac{dx}{dt} = F(x, y)$ and $\frac{dy}{dt} = G(x, y)$ is {\it autonomous} if the 
      independent variable $t$ does not appear in $F$ and $G$.  A discrete, autonomous dynamical system 
      is comprised of a function $f: X \rightarrow X$, where $X$ is a topological space and the orbits 
      $\mathfrak{O}(f, p) = \{ f^k(p): p \in X$ \verb|and| $k \in \mathbb{N} \}$ are studied.

       Consider the following augmentation of the discrete, autonomous dynamical system $(f, X)$.  
       After the 1st iteration,  $f$ is perturbed to $f_1$ where $f \ne f_1$ and after the
       second iteration $f_1$ is perturbed to $f_2$ so that $f_2 \ne f_1$ and $f_2, \ne f$ and so on 
       where $f_i \ne f_j$ for all $i \ne j$.  
       Then the dynamical system $(f_1, f_2, \dots f_k \dots, X)$ is a 
       {\it discrete, non-autonomous} dynamical system \cite{fiske_thesis}.

      For a particular Turing machine, set $X$ equal to the union of all the unit squares induced by $\phi$
      and define $f$ based on the finite number of affine functions, resulting from the $\phi$ transformation.   
      In terms of dynamical systems theory, the $\phi$ transformation shows that  
      {\it each Turing machine is a discrete, autonomous dynamical system}.  
      In \cite{fiske_qaem}, we stated that an active element 
      machine using quantum randomness was a non-autonomous dynamical system capable of generating non-Turing 
      computational behaviors;  however, no new specific machines exhibiting novel behaviors were provided,  
      except for a reference to procedure 2 in \cite{fiske_tic}.  In this sense, our research is a 
      continuation of \cite{fiske_tic,fiske_qaem}, but arguably provides a more transparent computational model 
      for studying what can be computed with randomness and self-modification.  

      \subsection{Related Work -- Quantum Randomness}

      Some other related work pertains to the theory of quantum randomness.  
      The classic EPR paper \cite{epr} presented a paradox that led Einstein, Podolsky and Rosen to conclude  
      that quantum mechanics is an incomplete theory and should be supplemented with additional variables. 
      They believed that the statistical predictions of quantum mechanics were correct, but only as a 
      consequence of the statistical distributions of these hidden variables.  Moreover, they believed 
      that the specification of these hidden variables could predetermine the result of measuring 
      any observable of the system.

      Due to an ambiguity in the EPR argument, Bohr \cite{bohr} explained that no paradox or contradiction 
      can be derived from the assumption that Schrodinger's wave function \cite{schrodinger} 
      contains a complete description of physical reality.  Namely, in the quantum theory, 
      it is impossible to control the interaction between the object being observed and the 
      measurement apparatus.   Per Heisenberg's uncertainty principle 
      \cite{heisenberg_1927}, momentum is transferred between them during position measurements, and the object is 
      displaced during momentum measurements.   Based on the link between the wave function and the probability 
      amplitude, first proposed by Born \cite{born_1926},  Bohr's response set the stage for the problem of 
      hidden variables and the development of quantum mechanics as a statistical scientific theory.  


      In \cite{bohm_aharanov}, Bohm and Aharanov advocated a Stern-Gerlach magnet \cite{gerlach1,gerlach2,gerlach3} 
      example to address the hidden variables problem.  Using a gedankenexperiment \cite{bohm_gedanken} of Bohm, 
      Bell showed that no local hidden variable theory can reproduce all of the statistical 
      predictions of quantum mechanics and maintain local realism \cite{bell}.  
      Clauser, Horne, Shimony and Holt derived a new form of Bell's inequality \cite{clauser}, 
      called the CHSH inequality, along with a proposed physically-realizable experiment 
      to test their inequality.

      In \cite{shalm}, their experiment tests the CHSH inequality.  Using entangled photon pairs, 
      their experiment found a loophole-free \cite{larsson} violation of local realism.   
      They estimated the degree to which a local realistic system could predict their measurement choices, 
      and obtained a smallest adjusted $p$-value equal to $2.3 \times 10^{-7}$.  Hence, they rejected the hypothesis 
      that local realism governed their experiment.  Recently, a quantum randomness expander has been 
      constructed, based on the CHSH inequality \cite{pironio}.

      By taking into account the algebraic structure of quantum mechanical observables, 
      Kochen and Specker \cite{kochen} provided a proof for the nonexistence of hidden variables. 
      In \cite{svozil_3_criteria}, Svozil proposed three criteria for building 
      quantum random number generators based on beam splitters:  
      (A) Have three or more mutually exclusive outcomes correspond to the invocation of Hilbert spaces with dimension at least 3; 
      (B) Use pure states in conjugated bases for preparation and detection;  
      (C) Use entangled singlet (unique) states to eliminate bias.

      By extending the theory of Kochen and Specker, Calude and Svozil developed an initial Turing incomputable theory of 
      quantum randomness \cite{calude_2008} --- applicable to beam splitters --- that has been recently advanced 
      further by  Abbott, Calude, and Svozil \cite{calude_qr2012,calude_qr2014,calude_qr2015}.
      A more comprehensive summary of their work  will be provided in section 3. 


\section{The Ex-Machine}\label{sect:qr_self_modify_xmachine}

We introduce a {\it quantum random, self-modifiable  machine} that adds two 
special types of instructions to the Turing machine \cite{turing36}.  Before the 
quantum random and meta instructions are defined, we present some preliminary
notation, the standard instructions, and a Collatz machine example. 
 


$\mathbb{Z}$ denotes the integers. $\mathbb{N}$ and $\mathbb{N}^+$ are the 
non-negative and positive integers, respectively.  
The finite set $Q = \{0, 1, 2, \dots, n-1 \} \subset \mathbb{N}$
 represents the ex-machine states.   This representation of the  
 ex-machine states helps specify how new states are added to $Q$ when 
 a meta instruction is executed.   Let  $\mathfrak{A} = \{ a_1, \dots, a_n \}$, 
 where each $a_i$ represents a distinct symbol.  
 The set $A = \{$\verb|0, 1, #|$\} \cup \mathfrak{A}$ consists of alphabet 
 (tape) symbols, where \verb|#| is the blank symbol and $\{0, 1,$ \verb|#|$\} \cap \mathfrak{A}$ 
 is the empty set.  In some  ex-machines, $A = \{$\verb|0, 1, #, Y, N, a|$\}$, 
 where $a_1 =$ \verb|Y|, $a_2 =$ \verb|N|, $a_3 =$ \verb|a|. In some ex-machines,  $A = \{$\verb|0, 1, #|$\}$, 
where  $\mathfrak{A}$ is the empty set.   The alphabet symbols are read from and written 
on the tape.   The ex-machine tape $T$ is a function $T : \mathbb{Z} \rightarrow A$ with an initial  
condition:  before the ex-machine starts executing, there exists an $N > 0$ so that $T(k) = $ \verb|#| when $|k| > N$.
In other words, before the ex-machine starts executing, all tape squares contain blank symbols, 
except for a finite number of tape squares.  When this initial condition holds for tape $T$, we say that 
tape $T$ is {\it finitely bounded}.

\subsection{Standard Instructions}

\begin{defn}\label{defn:standard_instruction}  \hskip 1pc  { Execution of Standard Instructions}

\smallskip

\noindent  The standard ex-machine instructions  $\mathcal{S}$ satisfy   
$\mathcal{S} \subset  Q \times A \times Q \times A \times \{-1, 0, 1 \}$ and 
a uniqueness condition:
If  $(q_1, \alpha_1, r_1, a_1, y_1)$ $\in$ $\mathcal{S}$ 
    \verb| and |  
    $(q_2, \alpha_2, r_2, a_2, y_2)$ $\in$ $\mathcal{S}$ 
    \verb| and | 
    $(q_1, \alpha_1, r_1, a_1, y_1) \ne (q_2, \alpha_2, r_2, a_2, y_2)$, 
    then $q_1 \ne q_2$\verb| or |$\alpha_1 \ne \alpha_2$.   
A standard instruction $I = (q, a, r, \alpha, y)$ is similar to a Turing machine tuple \cite{davis,post,turing36}. 
When the ex-machine is in state $q$ and the tape head is scanning alphabet symbol $a = T(k)$ at tape square $k$,  
instruction $I$ is executed as follows:

  \begin{itemize}

   \item[$\bullet$]  {  
                         The ex-machine state moves from state $q$ to state $r$.        
                     }

   \medskip

   \item[$\bullet$]  {  
                         The ex-machine replaces alphabet symbol $a$ with alphabet symbol $\alpha$ so that $T(k) = \alpha$.  
                         The rest of the tape remains unchanged.                     
                     }

   \medskip

   \item[$\bullet$] {   
                         If $y = -1$, the ex-machine moves its tape head one square to the left on the 
                         tape and is subsequently scanning the alphabet symbol $T(k-1)$ in tape square $k-1$.          
                    }

   \medskip

   \item[$\bullet$] {   
                         If $y = +1$, the ex-machine moves its tape head one square to the right on the tape 
                         and is subsequently scanning  the alphabet symbol $T(k+1)$ in tape square $k+1$.     
                    }
  
   \medskip

   \item[$\bullet$] {   
                        If $y = 0$, the ex-machine does not moves its tape head and is subsequently scanning 
                        the alphabet symbol $T(k) = \alpha$ in tape square $k$.    
                    }

   \end{itemize}

\end{defn}

\begin{rem}

A Turing machine \cite{turing36} has a finite set of states $Q$, a finite alphabet $A$, a finitely bounded tape,   
and a finite set of standard ex-machine instructions that are executed according to 
definition \ref{defn:standard_instruction}.   In other words, an ex-machine that uses only standard instructions 
is computationally equivalent to a Turing machine.    Hence, an ex-machine with only standard instructions 
will be called a standard machine or a Turing machine. 

\end{rem}



 The Collatz conjecture has an interesting relationship to Turing's halting problem, 
 which will be discussed further in section \ref{sect:two_research_problems}. 
 Furthermore, there is a generalization of the Collatz function that is unsolvable for a 
 standard machine \cite{conway_72}.   


\begin{defn}\label{defn:collatz_function}   \hskip 1pc {Collatz Conjecture}

\smallskip 

\noindent Define the Collatz function $f: \mathbb{N}^+ \rightarrow \mathbb{N}^+$, 
where $f(n) = \frac{n}{2}$ when $n$ is even and $f(n) = 3n+1$ when $n$ is odd.  
Zero iterations of $f$ is $f^0(n) = n$.  
$k$ iterations of $f$ is represented as $f^k(n)$.
The orbit of $n$ with respect to $f$ is 
$\mathcal{O}(f, n) = \{ f^k(n): k \in \mathbb{N} \}$.   
Observe that $f(5) = 16$, $f^2(5) = 8$, $f^3(5) = 4$, $f^4(5) = 2$, $f^5(5) = 1$, 
so $\mathcal{O}(f, 5) = \{5, 16, 8, 4, 2, 1\}$.     
The Collatz conjecture states that for any positive integer $n$,  $\mathcal{O}(f, n)$ contains $1$. 

\end{defn}

We specify a Turing machine that for each $n$  computes the orbit  $\mathcal{O}(f, n)$.
The standard machine halts if the orbit  $\mathcal{O}(f, n)$ contains 1.  
Set $A = \{$\verb|0|, \verb|1|, \verb|#|, \verb|E|$\}$.  
Set $Q = \{$\verb|a|, \verb|b|, \verb|c|, \verb|d|, \verb|e|, \verb|f|, \verb|g|, 
\verb|h|, \verb|i|, \verb|j|, \verb|k|, \verb|l|, \verb|m|, \verb|n|, \verb|p|, \verb|q| $\}$ 
where \verb|a| $= 0 $, \verb|b| $= 1$, \verb|c| $= 2$, $\dots$, \verb|n| $= 13$, \verb|p| $= 14$, and \verb|q| $= 15$.

Machine instructions \ref{ins:collatz_machine} shows a list of standard instructions that 
compute $\mathcal{O}(f, n)$.  The initial tape is  
{\small  \verb|# #| {\verb|1|}$^n$\verb|#|}, where it is 
understood that the remaining tape squares, beyond the leftmost \verb|#| and rightmost \verb|#|, 
contain only blank symbols. The space means the tape head is scanning the \hskip 0.1pc {\small \verb|#| } \hskip 0.05pc 
adjacent to the leftmost \verb|1|.    The initial state is \verb|q|.

\begin{Instructions}\label{ins:collatz_machine}  \hskip 1pc  {Collatz Machine}

{  \scriptsize  
\begin{verbatim} 
    ;;  Comments follow two semicolons. 
    (q,  #,  a,  #,  1) 
    (q,  0,  p,  0,  1) 
    (q,  1,  p,  1,  1) 
                
    (a,  #,  p,  #,  1) 
    (a,  0,  p,  0,  1) 
    (a,  1,  b,  1,  1) 
    
    (b,  #,  h,  #,  -1)   ;; Valid halt # 1#. The Collatz orbit reached 1. 
    (b,  0,  p,  0,  1) 
    (b,  1,  c,  1,  1) 

    (c,  #,  e,  #,  -1) 
    (c,  0,  p,  0,  1) 
    (c,  1,  d,  1,  1) 

    (d,  #,  k,  #,  -1) 
    (d,  0,  p,  0,  1) 
    (d,  1,  c,  1,  1) 

    ;;  n / 2  computation
    (e,  #,  g,  #,  1) 
    (e,  0,  p,  0,  1) 
    (e,  1,  f,  0,  -1) 

    (f,  #,  g,  #,  1) 
    (f,  0,  p,  0,  1) 
    (f,  1,  f,  1,  -1) 

    (g,  #,  j,  #,  -1) 
    (g,  0,  g,  1,  1) 
    (g,  1,  i,  #,  1) 

    (i,  #,  p,  #,  1) 
    (i,  0,  e,  0,  -1) 
    (i,  1,  i,  1,  1) 

    (j,  #,  a,  #,  1) 
    (j,  0,  p,  0,  1) 
    (j,  1,  j,  1,  -1) 

    ;;  3n + 1 computation
    (k,  #,  n,  #,  1) 
    (k,  0,  k,  0,  -1) 
    (k,  1,  l,  0,  1) 

    (l,  #,  m,  0,  1) 
    (l,  0,  l,  0,  1) 
    (l,  1,  p,  1,  1) 

    (m,  #,  k,  0,  -1) 
    (m,  0,  p,  0,  1) 
    (m,  1,  p,  1,  1) 

    ;;  Start  n / 2  computation
    (n,  #,  f,  0,  -1)   
    (n,  0,  n,  1,  1) 
    (n,  1,  p,  1,  1) 

    ;;  HALT with ERROR.  Alphabet symbol E represents an error.
    (p,  #,  h,  E,  0) 
    (p,  0,  h,  E,  0) 
    (p,  1,  h,  E,  0) 
\end{verbatim}
}

\end{Instructions}

\noindent  With input 
{\small  \verb|# #|{\verb|1|}$^n$\verb|#|}, 
the execution of the Collatz machine halts (i.e., moves to the halting state $h$) 
if the orbit $\mathcal{O}(f, n)$ reaches 1.  Below shows the Collatz machine executing 
the first ten instructions with initial tape 
{\small   \verb|# #11111#|  } 
and initial state \verb|q|.  
Each row shows the current tape and machine state after the instruction in that 
row has been executed. The complete execution of the Collatz machine is shown in the 
appendix \ref{appendix:collatz}.   It computes $\mathcal{O}(f, 5)$.  

\smallskip 


{  \scriptsize  
  \begin{verbatim} 
STATE     TAPE              TAPE HEAD     INSTRUCTION EXECUTED          COMMENT
a         ## 11111#####         1         (q, #, a, #, 1)   
b         ##1 1111#####         2         (a, 1, b, 1, 1)   
c         ##11 111#####         3         (b, 1, c, 1, 1)   
d         ##111 11#####         4         (c, 1, d, 1, 1)   
c         ##1111 1#####         5         (d, 1, c, 1, 1)     
d         ##11111 #####         6         (c, 1, d, 1, 1)   
k         ##1111 1#####         5         (d, #, k, #, -1)              Compute 3*5 + 1
l         ##11110 #####         6         (k, 1, l, 0, 1)              
m         ##111100 ####         7         (l, #, m, 0, 1)   
k         ##11110 00###         6         (m, #, k, 0, -1) 
  \end{verbatim}
}



\subsection{Quantum Random Instructions}



 Repeated independent trials are called {\it quantum random Bernoulli trials} \cite{feller_vol1} 
 if there are only two possible outcomes for each trial (i.e., quantum random measurement) 
 and the probability of each outcome remains constant for all trials.  
 {\it Unbiased} means the probability of both outcomes is the same.  
 Below are the formal definitions.

 \begin{ax}\label{axiom_qr1}  \hskip 1pc {  Unbiased Trials.} 

 \smallskip 

 \noindent Consider the bit sequence $(x_1 x_2 \dots )$ in the infinite product space $\{0, 1\}^{\mathbb{N}}$.  
 A single outcome $x_i$ of a bit sequence $(x_1 x_2 \dots )$ generated by quantum randomness  is unbiased. 
 The probability of measuring a 0 or a 1 are equal:     $P(x_i = 1) = P(x_i = 0) = \frac{1}{2}$. 

 \end{ax}

\begin{ax}\label{axiom_qr2}  \hskip 1pc {  Stochastic Independence.} 

\smallskip  

\noindent History has no effect on the next quantum random measurement.  Each outcome $x_i$ is independent of the history.  
No correlation exists between previous or future outcomes.   This is expressed in terms of the conditional probabilities:  
$P(x_i = 1$ $|$ $x_1 = b_1,$ $\dots,$ $x_{i-1} =  b_{i-1}) =  \frac{1}{2}$ and 
$P(x_i = 0$ $|$ $x_1 = b_1,$ $\dots,$ $x_{i-1} =   b_{i-1}) =  \frac{1}{2}$ for
each $b_i \in \{0, 1\}$.

\end{ax}

\noindent  In order to not detract from the formal description of the ex-machine, section 
\ref{sect:quantum_randomness} provides a physical basis for the axioms and a 
discussion of quantum randomness.

The quantum random instructions $\mathcal{R}$ are subsets of  
$Q \times A \times Q \times \{-1, 0, 1\}$ $=$ $\big{\{} (q, a, r, y):$  \hskip 0.05pc
$q,$  \hskip 0.05pc $r$\verb| are in| \hskip 0.1pc $Q$  \hskip 0.1pc
\verb|and |$a$\verb| in |$A$\verb| and| \hskip 0.1pc $y$\verb| in |$\{-1, 0, 1 \}$ $\big{\}}$ that satisfy 
a uniqueness condition defined below.

\begin{defn}\label{defn:qr_instruction}  \hskip 1pc  { Execution of Quantum Random Instructions}

\smallskip

\noindent  The quantum random instructions $\mathcal{R}$ satisfy 
$\mathcal{R} \subset Q \times A \times Q \times \{-1, 0, 1\}$  
and the following uniqueness condition:
If  $(q_1, \alpha_1, r_1, y_1)$ $\in$ $\mathcal{R}$ \verb| and |  
    $(q_2, \alpha_2, r_2, y_2)$ $\in$ $\mathcal{R}$ \verb| and |
    $(q_1, \alpha_1, r_1, y_1) \ne (q_2, \alpha_2, r_2, y_2)$, 
    then $q_1 \ne q_2$\verb| or |$\alpha_1 \ne \alpha_2$.  
When the tape head is scanning alphabet symbol $a$  
and the ex-machine is in state $q$, the quantum random instruction $(q, a, r, y)$ executes as follows:

\begin{itemize}

   \item[$\bullet$]  {  
                         The ex-machine measures a quantum random source that returns a random bit $b$ $\in$ $\{0, 1\}$.
                         (It is assumed that the quantum measurements satisfy unbiased Bernoulli trial  
                          axioms \ref{axiom_qr1} and \ref{axiom_qr2}.)
                     }

   \medskip

   \item[$\bullet$]  {    
                          On the tape, alphabet symbol $a$ is replaced with random bit $b$.  

                          \noindent (This is why $A$ always contains both symbols \verb|0| and \verb|1|.)
                     }  

   \medskip

   \item[$\bullet$]  {    The ex-machine state changes to state $r$.  }

   \medskip

   \item[$\bullet$]  {   
                          The ex-machine moves its tape head left if $y = -1$, right if $y = +1$, 
                          or the tape head does not move if  $y = 0$.  
                     }

\end{itemize}

\end{defn}


\smallskip 

Machine instructions \ref{ins:qr_walk_machine} lists a random walk machine that has only 
standard instructions and quantum random instructions. 
Alphabet {\small $A = \{$\verb|0|, \verb|1|, \verb|#|, \verb|E|$\}$.  
The states are  $Q = \{$\verb|0|, \verb|1|, \verb|2|, \verb|3|, 
\verb|4|, \verb|5|, \verb|6|, \verb|h|$\}$, }
where the halting state 
{\small \verb|h| $= 7$}. 
 A valid initial tape contains only blank symbols; that is, 
{\small \verb|# ##|}.   The valid initial state is {\small  \verb|0|}.

There are three quantum random instructions: {\footnotesize \verb|(0, #, 0, 0)|, 
\verb| (1, #, 1, 0)|}  and  {\footnotesize \verb| (4, #, 4, 0)|}.  The random instruction 
 {\footnotesize  \verb|(0, #, 0, 0)| } is executed first.   
If the quantum random source measures a 1, the machine jumps to state {\small \verb|4| }
and the tape head moves to the right of tape square 0.  If the quantum random source 
measures a 0, the machine jumps to state {\small \verb|1| } and the tape head moves to the 
left of tape square 0.  Instructions containing alphabet 
symbol {\small \verb|E| } provide error checking for an invalid initial tape or initial state;  
in this case, the machine halts with an error.

\begin{Instructions}\label{ins:qr_walk_machine}  \hskip 1pc  {Random Walk}

{  \scriptsize  
\begin{verbatim} 
    ;;  Comments follow two semicolons. 
    (0, #, 0, 0)
    (0, 0, 1, 0, -1)
    (0, 1, 4, 1, 1)

    ;;  Continue random walk to the left of tape square 0  
    (1, #, 1, 0)       
    (1, 0, 1, 0, -1)
    (1, 1, 2, #, 1)

    (2, 0, 3, #, 1)
    (2, #, h, E, 0)
    (2, 1, h, E, 0)

    ;; Go back to state 0.  Numbers of random 0's = number of random 1's. 
    (3, #, 0, #, -1)   

    ;; Go back to state 1.  Numbers of random 0's > number of random 1's. 
    (3, 0, 1, 0, -1)   
    (3, 1, h, E, 0)

    ;;  Continue random walk to the right of tape square 0  
    (4, #, 4, 0)         
    (4, 1, 4, 1, 1)
    (4, 0, 5, #, -1)

    (5, 1, 6, #, -1)
    (5, #, h, E, 0)
    (5, 0, h, E, 0)
    
    ;; Go back to state 0.  Numbers of random 0's = number of random 1's. 
    (6, #, 0, #, 1)    

    ;; Go back to state 4.  Numbers of random 1's > number of random 0's. 
    (6, 1, 4, 1, 1)     
    
    (6, 0, h, E, 0)
\end{verbatim}
}
\end{Instructions}

Below are 31 computational steps of the ex-machine's first execution.  
This random walk machine  never halts when the initial tape is blank 
and the initial state is \verb|0|.  The first quantum random instruction executed 
is  {\small \verb|(0, #, 0, 0)|}.  The quantum random source measured a \verb|0|, so the 
execution of this instruction is shown as  {\small  \verb|(0, #, 0, 0_qr, 0)| }.  
The second quantum random instruction executed is 
{\small  \verb|(1, #, 1, 0)| }.   
The quantum random source measured a \verb|1|, so the execution of instruction  
{\small  \verb|(1, #, 1, 0)|  }
is shown as 
{\small  \verb|(1, #, 1, 1_qr, 0)|  }.

\medskip 

\bigskip 

\noindent {\small  {\bf 1st Execution of Random Walk Machine.  Computational Steps 1-31.}  }

{  \scriptsize  
  \begin{verbatim} 
  STATE   TAPE             TAPE HEAD         INSTRUCTION EXECUTED               
   0      ##### 0##            0             (0, #,  0, 0_qr, 0)   
   1      #### #0##           -1             (0, 0,  1, 0, -1)   
   1      #### 10##           -1             (1, #,  1, 1_qr, 0)   
   2      ##### 0##            0             (1, 1,  2, #, 1)   
   3      ###### ##            1             (2, 0,  3, #, 1)   
   0      ##### ###            0             (3, #,  0, #, -1)   
   0      ##### 0##            0             (0, #,  0, 0_qr, 0)   
   1      #### #0##           -1             (0, 0,  1, 0, -1)   
   1      #### 00##           -1             (1, #,  1, 0_qr, 0)   
   1      ### #00##           -2             (1, 0,  1, 0, -1)   
   1      ### 000##           -2             (1, #,  1, 0_qr, 0)   
   1      ## #000##           -3             (1, 0,  1, 0, -1)   
   1      ## 1000##           -3             (1, #,  1, 1_qr, 0)   
   2      ### 000##           -2             (1, 1,  2, #, 1)   
   3      #### 00##           -1             (2, 0,  3, #, 1)   
   1      ### #00##           -2             (3, 0,  1, 0, -1)   
   1      ### 100##           -2             (1, #,  1, 1_qr, 0)   
   2      #### 00##           -1             (1, 1,  2, #, 1)   
   3      ##### 0##            0             (2, 0,  3, #, 1)   
   1      #### #0##           -1             (3, 0,  1, 0, -1)   
   1      #### 10##           -1             (1, #,  1, 1_qr, 0)   
   2      ##### 0##            0             (1, 1,  2, #, 1)   
   3      ###### ##            1             (2, 0,  3, #, 1)   
   0      ##### ###            0             (3, #,  0, #, -1)   
   0      ##### 0##            0             (0, #,  0, 0_qr, 0)   
   1      #### #0##           -1             (0, 0,  1, 0, -1)   
   1      #### 00##           -1             (1, #,  1, 0_qr, 0)   
   1      ### #00##           -2             (1, 0,  1, 0, -1)   
   1      ### 000##           -2             (1, #,  1, 0_qr, 0)   
   1      ## #000##           -3             (1, 0,  1, 0, -1)   
   1      ## 1000##           -3             (1, #,  1, 1_qr, 0)  
  \end{verbatim}
}

Below are the first 31 steps of the ex-machine's second execution.  
The first quantum random instruction
executed is {\small \verb|(0, #, 0, 0)| }.  The quantum random bit measured was 1, so the result of
this instruction is shown as  {\small \verb|(0, #, 0, 1_qr, 0)| }.  
The second quantum random  instruction executed is 
{\small \verb|(1, #, 1, 0)| }, which measured a 0, so the result of 
this instruction is shown as  {\small  \verb|(1, #, 1, 0_qr, 0)| }.

\medskip 

\bigskip 

\noindent  {\small  {\bf 2nd Execution of Random Walk Machine.  Computational Steps 1-31.} }

{ \scriptsize  
  \begin{verbatim} 
STATE     TAPE           TAPE HEAD         INSTRUCTION EXECUTED  
  0       ## 1#####          0             (0, #,  0, 1_qr, 0)   
  4       ##1 #####          1             (0, 1,  4, 1, 1)   
  4       ##1 0####          1             (4, #,  4, 0_qr, 0)   
  5       ## 1#####          0             (4, 0,  5, #, -1)   
  6       # #######         -1             (5, 1,  6, #, -1)   
  0       ## ######          0             (6, #,  0, #, 1)   
  0       ## 1#####          0             (0, #,  0, 1_qr, 0)   
  4       ##1 #####          1             (0, 1,  4, 1, 1)   
  4       ##1 1####          1             (4, #,  4, 1_qr, 0)   
  4       ##11 ####          2             (4, 1,  4, 1, 1)   
  4       ##11 1###          2             (4, #,  4, 1_qr, 0)   
  4       ##111 ###          3             (4, 1,  4, 1, 1)   
  4       ##111 1##          3             (4, #,  4, 1_qr, 0)   
  4       ##1111 ##          4             (4, 1,  4, 1, 1)   
  4       ##1111 0#          4             (4, #,  4, 0_qr, 0)   
  5       ##111 1##          3             (4, 0,  5, #, -1)   
  6       ##11 1###          2             (5, 1,  6, #, -1)   
  4       ##111 ###          3             (6, 1,  4, 1, 1)   
  4       ##111 0##          3             (4, #,  4, 0_qr, 0)   
  5       ##11 1###          2             (4, 0,  5, #, -1)   
  6       ##1 1####          1             (5, 1,  6, #, -1)   
  4       ##11 ####          2             (6, 1,  4, 1, 1)   
  4       ##11 0###          2             (4, #,  4, 0_qr, 0)   
  5       ##1 1####          1             (4, 0,  5, #, -1)   
  6       ## 1#####          0             (5, 1,  6, #, -1)   
  4       ##1 #####          1             (6, 1,  4, 1, 1)   
  4       ##1 0####          1             (4, #,  4, 0_qr, 0)   
  5       ## 1#####          0             (4, 0,  5, #, -1)   
  6       # #######         -1             (5, 1,  6, #, -1)   
  0       ## ######          0             (6, #,  0, #, 1)   
  0       ## 0#####          0             (0, #,  0, 0_qr, 0)   
  1       # #0#####         -1             (0, 0,  1, 0, -1)   
  \end{verbatim}
}

The first and second executions of the random walk ex-machine verify our 
statement in the introduction:  in contrast with the Turing machine, the execution behavior 
of the same ex-machine may be distinct at two different instances, even though each instance 
of the  ex-machine starts its execution with the same input on the tape, the same initial 
states and same initial instructions.  Hence, the ex-machine is a 
discrete, non-autonomous dynamical system.

\subsection{Meta Instructions}

Meta instructions are the second type of special instructions.  The execution of a meta instruction 
enables the ex-machine to self-modify its instructions.  
This means that an ex-machine's meta instructions can add new states, add new instructions 
or replace instructions.   Formally, the meta instructions $\mathcal{M}$ satisfy  
$\mathcal{M} \subset \{ (q, a, r, \alpha, y, J):$   
$q \in Q$  \hskip 0.3pc \verb|and| \hskip 0.3pc $r \in Q \cup \{|Q|\}$  \hskip 0.3pc  
\verb|and| \hskip 0.3pc  $a, \alpha \in A$  \hskip 0.3pc  \verb|and instruction| \hskip 0.3pc 
$J \in \mathcal{S} \cup  \mathcal{R}$$\}$.


Define $\mathcal{I} = \mathcal{S} \cup \mathcal{R} \cup \mathcal{M}$, as the set of standard, quantum random, and 
meta instructions.   To help describe how a meta instruction modifies $\mathcal{I}$, 
the {\it  unique state, scanning symbol condition} is defined:
for any two distinct instructions chosen from $\mathcal{I}$     
at least one of the first two coordinates must differ.  
More precisely, all 6 of the following uniqueness conditions must hold.

\newpage

\begin{enumerate}

\item  {    
            \hskip 0.5pc
            If $(q_1, \alpha_1, r_1, \beta_1, y_1)$ \hskip 0.3pc \verb|and| \hskip 0.3pc
             $(q_2, \alpha_2, r_2, \beta_2, y_2)$ are both in 
            $\mathcal{S}$, then $q_1 \ne q_2$ \hskip 0.3pc \verb|or| \hskip 0.3pc  $\alpha_1 \ne \alpha_2$.    
       }

\medskip 

\item  {    
            \hskip 0.5pc
            If  $(q_1, \alpha_1, r_1, \beta_1, y_1)$ $\in$ $\mathcal{S}$    
            \hskip 0.3pc   \verb|and|  \hskip 0.3pc 
            $(q_2, \alpha_2, r_2, y_2)$  $\in$  $\mathcal{R}$ or vice versa, 
            then $q_1 \ne q_2$ \hskip 0.3pc \verb|or| \hskip 0.3pc  $\alpha_1 \ne \alpha_2$.     
       }

\medskip 

\item    {    
              \hskip 0.5pc
              If  $(q_1, \alpha_1, r_1, y_1)$   
              \hskip 0.3pc   \verb|and|  \hskip 0.3pc 
              $(q_2, \alpha_2, r_2, y_2)$  are both in  $\mathcal{R}$, 
              then $q_1 \ne q_2$ \hskip 0.3pc \verb|or| \hskip 0.3pc  $\alpha_1 \ne \alpha_2$.     
         }

\medskip

\item    {   
             \hskip 0.5pc
             If  $(q_1, \alpha_1, r_1, y_1)$ $\in$ $\mathcal{R}$    
             \hskip 0.3pc   \verb|and|  \hskip 0.3pc  
             $(q_2, \alpha_2, r_2, a_2, y_2,  J_2)$  $\in$  $\mathcal{M}$ or vice versa, 
             then $q_1 \ne q_2$ \hskip 0.3pc \verb|or| \hskip 0.3pc  $\alpha_1 \ne \alpha_2$.     
         }

\medskip

\item   {    
             \hskip 0.5pc 
             If  $(q_1, \alpha_1, r_1, \beta_1, y_1)$ $\in$ $\mathcal{S}$    
             \hskip 0.3pc   \verb|and|  \hskip 0.3pc  
             $(q_2, \alpha_2, r_2, a_2, y_2,  J_2)$  $\in$  $\mathcal{M}$ or vice versa, 
             then $q_1 \ne q_2$ \hskip 0.3pc \verb|or| \hskip 0.3pc  $\alpha_1 \ne \alpha_2$.     
        }

\medskip 

\item  {  
             \hskip 0.5pc
             If $(q_1, \alpha_1, r_1, a_1, y_1,  J_1)$   
             \hskip 0.3pc   \verb|and|  \hskip 0.3pc 
             $(q_2, \alpha_2, r_2, a_2, y_2,  J_2)$  are both in  $\mathcal{M}$, 
             then $q_1 \ne q_2$ \hskip 0.3pc \verb|or| \hskip 0.3pc  $\alpha_1 \ne \alpha_2$.     
       }

\end{enumerate}

Before a valid machine execution starts, it is assumed that the standard, 
quantum random and meta instructions 
$\mathcal{S} \cup \mathcal{R} \cup \mathcal{M}$ always satisfy 
the unique state, scanning symbol condition.  This condition assures that 
there is no ambiguity on what instruction should be executed 
when the machine is in state $q$ and is scanning tape symbol $a$.  
Furthermore, the execution of a meta instruction preserves this uniqueness condition.

\begin{defn}\label{defn:execution_meta_instruction}   \hskip 1pc  Execution of Meta Instructions 

\smallskip 

\noindent A meta instruction $(q, a, r, \alpha, y, J)$ in $\mathcal{M}$ is executed as follows.   

\begin{itemize}

\item[$\bullet$]  {  
                      The first five coordinates $(q, a, r, \alpha, y)$ are executed as a standard instruction 
                      according to definition  \ref{defn:standard_instruction} with one caveat. 
                      State $q$ may be expressed as $|Q| - c_1$  and
                      state $r$ may be expressed as $|Q|$ or $|Q| - c_2$,  where  $0 < c_1, c_2 \le |Q|$.  
                      When  $(q, a, r, \alpha, y)$ is executed, if $q$ is expressed as  $|Q| - c_1$, the value of 
                      $q$ is instantiated to the current value of $|Q|$ minus $c_1$.  Similarly, 
                      if $r$ is expressed as $|Q|$ or $|Q| - c_2$, the value of state 
                      $r$ is instantiated to the current value of $|Q|$ or $|Q|$ minus $c_2$, respectively.
                  }

\medskip 

\item[$\bullet$]  {   
                      Subsequently, instruction $J$ modifies $\mathcal{I}$, where 
                      instruction $J$ has one of the two forms:  
                      $J = (q, a, r, \alpha, y)$ or $J = (q, a, r, y)$.   
                  }

\medskip

\item[$\bullet$] {  
                      For both forms, if $\mathcal{I}$ $\cup$ $\{ J \}$ still satisfies the unique state,
                      scanning symbol condition, then $\mathcal{I}$ is updated to $\mathcal{I} \cup \{ J \}$. 
                 }

\medskip

\item[$\bullet$] {  
                     Otherwise, there is an instruction $I$ in $\mathcal{I}$ whose 
                     first two coordinates $q$, $a$, are equal to instruction $J$'s 
                     first two coordinates.  In this case, instruction $J$ replaces 
                     instruction $I$ in $\mathcal{I}$.  That is, $\mathcal{I}$ is 
                     updated to $\mathcal{I} \cup \{ J \} - \{ I \}$.   
                 }

\end{itemize}

\end{defn}

\begin{rem}  \hskip 1pc  Ex-machine States and Instructions are Sequences of Sets 

\smallskip 

\noindent 
Now that the meta instruction has been defined, the purpose of this remark is to clarify, in terms of set theory,  
the definitions of machine states, standard, random, and meta instructions.  
In order to be compatible with the foundations of set theory, 
the machine states are formally a sequence of sets.  
Similarly, the standard instructions, random instructions and all 
ex-machine instructions are sequences of sets.  
Hence, the machine states should be expressed as $Q(m)$, 
where $m$ indicates that the ex-machine has executed its $m$th computational step.  
When our notation is formally precise, the standard instructions, random instructions 
and all ex-machine instructions should be expressed as 
$\mathcal{S}(m)$, $\mathcal{R}(m)$, and $\mathcal{I}(m)$, respectively.  
To simplify our notation and not detract from the main ideas, we usually do not include 
``$(m)$" when referring to $Q$, $\mathcal{S}$, $\mathcal{R}$, $\mathcal{M}$ or $\mathcal{I}$. 

\end{rem}

\medskip

\noindent  In regard to definition \ref{defn:execution_meta_instruction}, example \ref{ex:meta_instruction} shows how  
instruction $I$ is added to $\mathcal{I}$ and how new states are instantiated and added to $Q$.  

\bigskip 

\begin{example}\label{ex:meta_instruction}  \hskip 1pc   {Adding New States}

\smallskip 

\noindent Consider the meta instruction $(q, a_1, |Q|-1, \alpha_1, y_1, J)$, 
where $J = (|Q|-1, a_2, |Q|,$ $\alpha_2, y_2)$. 
After the standard instruction  $(q, a_1, |Q|-1, \alpha_1, y_1)$ is executed, 
this meta instruction adds one new state $|Q|$ to the machine states $Q$ and also 
adds the instruction $J$, instantiated with the current value of $|Q|$.    
Figure 1 
shows the execution of this meta instruction for the specific values  
$Q = \{0, 1, 2, 3, 4, 5, 6, 7\}$, $A = \{$\verb|#, 0, 1|$\}$, 
$q = 5$, $a_1 = 0$, $\alpha_1 = 1$, $y_1 = 0$, $a_2 = 1$, $\alpha_2 =$ \verb|#|, 
and $y_2 = -1$.  States and alphabet symbols are shown in red and blue, respectively.

\begin{figure}[h]\label{fig:meta_instruction}
   \centering
   \includegraphics[width=11 cm]{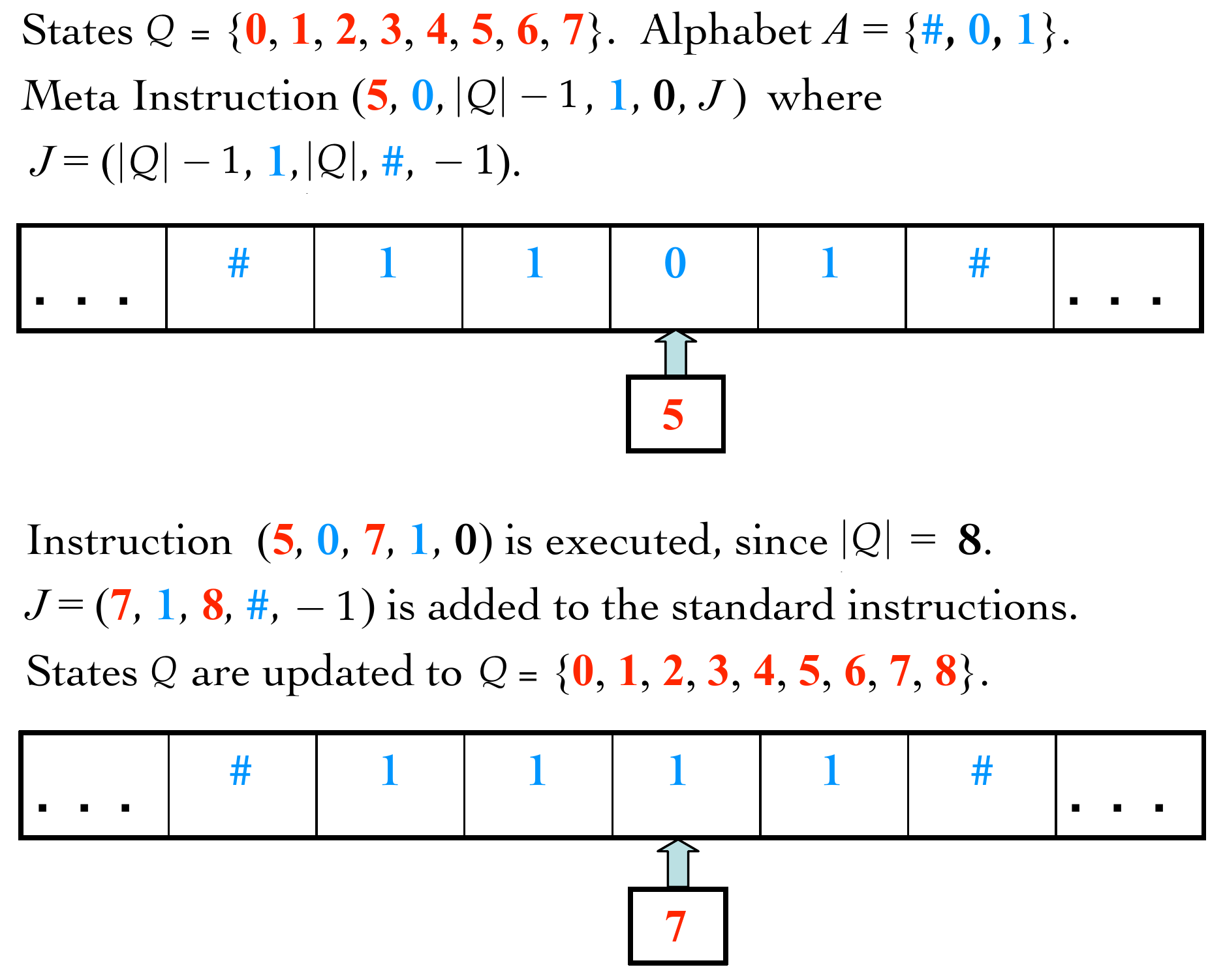}

   \bigskip 

   \includegraphics[width=11 cm]{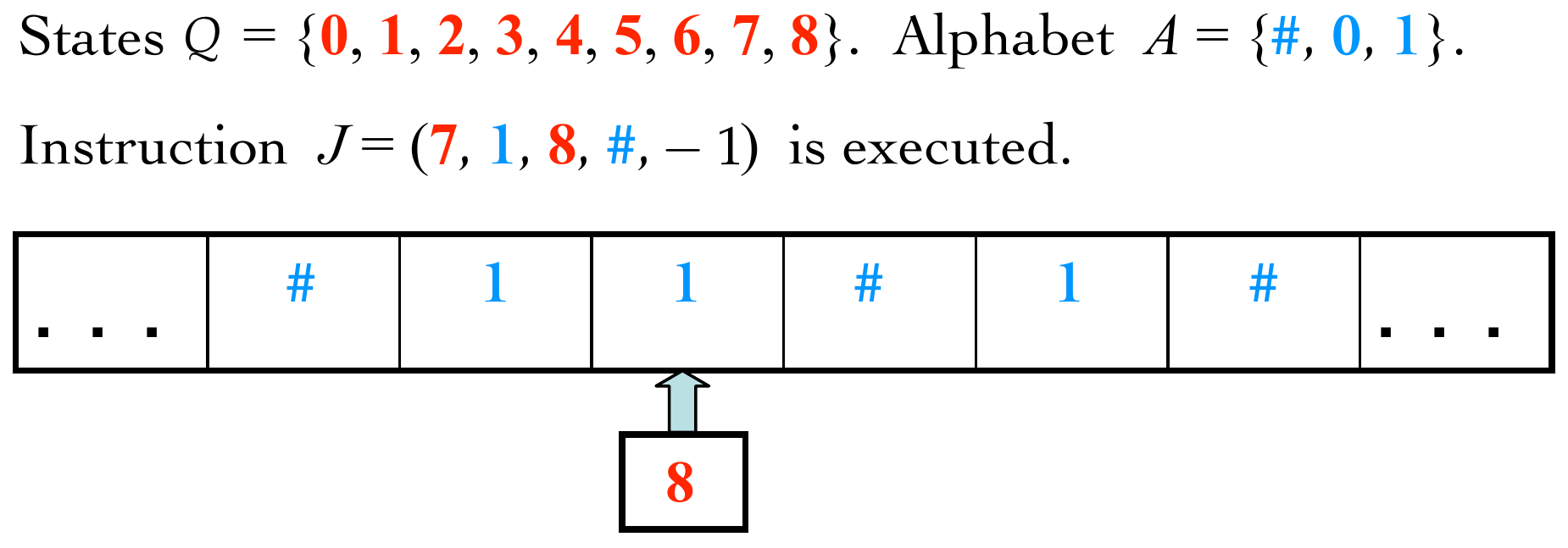}
   \caption{Meta Instruction Execution}
\end{figure}

\end{example}

Let $\mathfrak{X}$ be an ex-machine.  The instantiation of $|Q|-1$ and $|Q|$ in a meta instruction $I$ 
invokes {\it self-reflection} about $\mathfrak{X}$'s current number of states, at the moment when 
$\mathfrak{X}$ executes  $I$. This simple type of self-reflection poses no obstacles in physical realizations.   
In particular, a LISP implementation \cite{mueller} along with quantum random bits  
measured from \cite{wahl} simulates all executions of the ex-machines provided herein.


\bigskip 

\begin{defn}\label{defn:simple_meta_instruction}  \hskip 1pc {Simple Meta Instructions}

\smallskip 

\noindent A simple meta instruction has one of the forms 
$(q, a, |Q|-c_2, \alpha, y)$, $(q, a, |Q|, \alpha, y)$, $(|Q|-c_1, a, r, \alpha, y)$,
$(|Q|-c_1, a, |Q|-c_2, \alpha, y)$, $(|Q|-c_1, a, |Q|, \alpha, y)$, where 
$0 < c_1, c_2 \le |Q|$.  The expressions $|Q|-c_1$, $|Q|-c_2$ and $|Q|$ are instantiated to 
a state based on the current value of $|Q|$ when the instruction is executed.  

\end{defn}

\noindent  In this paper, ex-machines only self-reflect with the symbols 
$|Q|-1$ and  $|Q|$. 

\begin{example}\label{example:simple_meta_instructions}  \hskip 1pc  {Execution of Simple Meta Instructions}

\smallskip

\noindent Let $A = \{$\verb|0, 1, #|$\}$ and $Q = \{0 \}$.  ex-machine $\mathfrak{X}$ has 3 simple meta instructions.

\smallskip 

{  \scriptsize
  \begin{verbatim}
(|Q|-1, #, |Q|-1, 1, 0)
(|Q|-1, 1, |Q|, 0, 1)
(|Q|-1, 0, |Q|, 0, 0)
  \end{verbatim}
}
\end{example}

With an initial blank tape and starting state of 0, the first four computational steps are 
shown below.  In the first step,  $\mathfrak{X}$'s tape head is scanning a {\small \verb|#|}
and the ex-machine state is 0. Since $|Q| = 1$, simple meta instruction 
{\small \verb|(|$|$\verb|Q|$|$\verb|-1, #, |$|$\verb|Q|$|$\verb|-1, 1, 0)| } 
instantiates to {\small \verb|(0, #, 0, 1, 0)|}, and executes.  

\smallskip 

{  \scriptsize
  \begin{verbatim} 
 STATE    TAPE        TAPE HEAD      INSTRUCTION           NEW INSTRUCTION               
   0      ## 1###         0          (0, #, 0, 1, 0)       (0, #, 0, 1, 0)  
   1      ##0 ###         1          (0, 1, 1, 0, 1)       (0, 1, 1, 0, 1)
   1      ##0 1##         1          (1, #, 1, 1, 0)       (1, #, 1, 1, 0)
   2      ##00 ##         2          (1, 1, 2, 0, 1)       (1, 1, 2, 0, 1)   
   \end{verbatim}
}

\noindent  In the second step,  the tape head is scanning a $1$ and the state is 0.  
Since $|Q| = 1$, instruction 
{\small \verb|(|$|$\verb|Q|$|$\verb|-1, 1, |$|$\verb|Q|$|$\verb|, 0, 1)| } 
instantiates to  {\small \verb|(0, 1, 1, 0, 1)|},  
executes and updates $Q = \{0, 1\}$. In the third step, the tape head 
is scanning a {\small \verb|#|} and the state is 1.  Since $|Q| = 2$, instruction 
{\small \verb|(|$|$\verb|Q|$|$\verb|-1, #, |$|$\verb|Q|$|$\verb|-1, 1, 0)| }  
instantiates to  {\small \verb|(1, #, 1, 1, 0)|} 
and executes.  In the fourth step, the tape head is scanning a $1$ and the state is 1.
Since $|Q| = 2$, instruction  
{\small \verb|(|$|$\verb|Q|$|$\verb|-1, 1, |$|$\verb|Q|$|$\verb|, 0, 1)| }  instantiates to 
{\small \verb|(1, 1, 2, 0, 1)|}, executes and updates $Q = \{0, 1, 2\}$. 
During these four steps, two simple meta instructions create four new instructions 
and add new states 1 and 2.


\begin{defn}\label{defn:finite_initial_conditions}  \hskip 1pc  {Finite Initial Conditions}

\smallskip 

\noindent An ex-machine is said to have finite initial conditions if the following four conditions are satisfied before 
the ex-machine starts executing.

\begin{itemize}

\item[1.] { \hskip 1pc The number of states $|Q|$ is finite.  }

\smallskip

\item[2.] { \hskip 1pc The number of alphabet symbols $|A|$ is finite.  }

\smallskip

\item[3.]{  \hskip 1pc The number of machine instructions $|\mathcal{I}|$ is finite.  }

\smallskip 

\item[4.]{  \hskip 1pc The tape is finitely bounded.  }

\smallskip

\end{itemize}

\end{defn}

It may be useful to think about the initial conditions of an ex-machine as analogous to the boundary 
value conditions of a differential equation.    While trivial to verify, the purpose of 
remark \ref{rem:finite_conditions} is to assure that computations performed with an ex-machine 
are physically plausible.

\begin{rem}\label{rem:finite_conditions}  \hskip 1pc   { Finite Initial Conditions }

\smallskip 
 
\noindent If the machine starts its execution with finite initial conditions, then after the machine has 
 executed $l$ instructions for any positive integer $l$, the current number of states $Q(l)$ is 
 finite and the current set of instructions $\mathcal{I}(l)$ is finite.  Also, the tape $T$ is 
 still finitely bounded and the number of measurements obtained from the quantum random source is finite.  

\end{rem}

\begin{proof}
        The remark follows immediately from definition \ref{defn:finite_initial_conditions} 
        of finite initial conditions and machine instruction definitions \ref{defn:standard_instruction}, 
        \ref{defn:qr_instruction}, and  \ref{defn:execution_meta_instruction}.   
        In particular, the execution of one meta instruction adds at most one new 
        instruction and one new state to $Q$.  
\end{proof}

Definition \ref{defn:evolving} defines new ex-machines that may have evolved from 
computations of prior ex-machines that have halted.   The notion of evolving is 
useful because the quantum random and meta instructions can self-modify an 
ex-machine's instructions as it executes.  In contrast with the ex-machine, after 
a Turing machine halts, its instructions have not changed.

This difference motivates the next definition, which is illustrated by the following. 
Consider an initial ex-machine $\mathfrak{X}_0$  that has 9 initial states and 
15 initial instructions.  $\mathfrak{X}_0$ starts executing on a finitely bounded 
tape $T_0$ and halts.  When the ex-machine halts, it (now called $\mathfrak{X}_1$) has 14 states 
and 24 instructions and the current tape is $S_1$.  We say that ex-machine  
$\mathfrak{X}_0$ with tape $T_0$ {\it evolves to} ex-machine $\mathfrak{X}_1$ 
with tape $S_1$.

\begin{defn}\label{defn:evolving} \hskip 1pc  Evolving an ex-machine

\smallskip 

\noindent Let $T_0$, $T_1$, $T_2$ $\dots$ $T_{i-1}$ each be a finitely bounded tape.  
Consider ex-machine $\mathfrak{X}_0$
with finite initial conditions.  $\mathfrak{X}_0$ starts executing with tape $T_0$ and 
evolves to ex-machine $\mathfrak{X}_1$ with tape $S_1$.  
Subsequently, $\mathfrak{X}_1$ starts executing with tape $T_1$ and 
evolves to $\mathfrak{X}_2$ with tape $S_2$.  This means that when ex-machine 
$\mathfrak{X}_1$ starts executing on tape $T_1$, its 
instructions are preserved after the halt with tape $S_1$.  
The ex-machine evolution continues until $\mathfrak{X}_{i-1}$ starts executing with tape $T_{i-1}$ 
and evolves to  ex-machine $\mathfrak{X}_{i}$ with tape $S_{i}$.   One says that ex-machine 
$\mathfrak{X}_0$ with finitely bounded tapes $T_0$, $T_1$, $T_2$ $\dots$ $T_{i-1}$ 
evolves to  ex-machine $\mathfrak{X}_i$ after $i$ halts. 

\end{defn}

When ex-machine $\mathfrak{X}_0$ evolves to $\mathfrak{X}_1$ and subsequently $\mathfrak{X}_1$ evolves to 
$\mathfrak{X}_2$ and so on up to ex-machine  $\mathfrak{X}_n$, then ex-machine $\mathfrak{X}_i$ 
is called an {\it ancestor} of  ex-machine $\mathfrak{X}_j$ whenever $0 \le i < j \le n$. 
Similarly, ex-machine $\mathfrak{X}_j$ is called a {\it descendant} of  
ex-machine $\mathfrak{X}_i$ whenever $0 \le i < j \le n$.  The sequence of ex-machines 
$\mathfrak{X}_0$ $\rightarrow$ $\mathfrak{X}_1$ $\rightarrow$  $\dots$ $\rightarrow$  $\mathfrak{X}_n$  $\dots$ 
is called an {\it evolutionary path}.


\section{Quantum Randomness}\label{sect:quantum_randomness}

On a first reading, one may choose to skip this section, 
by assuming that there is adequate physical justification for axioms  
\ref{axiom_qr1}  and  \ref{axiom_qr2}.   Overall, the ex-machine uses 
quantum randomness as a computational tool.  Hence, part of our goal was to 
use axioms  \ref{axiom_qr1}  and  \ref{axiom_qr2} for our quantum random 
instructions, because the axioms are supported by the 
empirical evidence of various quantum random number generators 
\cite{abellan,bierhorst,kulikov,ma,rohe,stefanov,stipcevic,wahl,yang}.  
In practice, however, a physical implementation of a quantum random number generator can only 
generate a finite amount of data and only a finite number of statistical tests can be 
performed on the data.  Due to these limitations, one goal of quantum random theory 
\cite{pironio,calude_qr2012,calude_qr2014,calude_qr2015,svozil_3_criteria}, 
besides general understanding, is to certify the mathematical properties, assumed 
about actual quantum random number generators, and assure that the theory is 
a reasonable extension of quantum mechanics 
\cite{bohr,born_1925,born_1926,heisenberg_1926,heisenberg_1927,heisenberg_book,schrodinger,schrodinger_1935,epr,bell,clauser,kochen}.


We believe it is valuable to reach both a pragmatic (experimental) and theoretical viewpoint.  
In pure mathematics, the formal system and the logical steps in the mathematical proofs need 
to be checked.  In our situation, it is possible for a mathematical theory of quantum randomness 
(or for that matter any theory in physics) 
to be consistent (i.e., in the sense of mathematics) and have valid mathematical proofs, 
yet the theory still does not adequately model the observable properties of 
the underlying physical reality.  If just one subtle mistake or oversight is made 
while deriving a mathematical formalism from some physical assumptions, then the 
mathematical conclusions arrived at -- based on the theory --  
may represent physical nonsense.
Due to the infinite nature of randomness, this branch of science is faced with the 
challenging situation that the mathematical properties of randomness 
can only be provably tested with an infinite amount of experimental data and an 
infinite number of tests.   Since we only have the means to collect a finite amount of 
data and perform a finite amount of statistical tests, we must acknowledge   
that experimental tests on a quantum random number generator, designed according to a 
mathematical / physical theory, can only falsify the theory \cite{popper} for that class of 
quantum random number generators.  As more experiments are performed,  successful statistical tests 
calculated on longer sequences of random data may strengthen the empirical evidence, but 
they cannot scientifically prove the theory.  This paragraph provides at least some motivation 
for some of our pragmatic points, described in the next three paragraphs.

In sections \ref{msf:x_machine_languages} and \ref{msf:halting_problem}, the mathematical proofs 
rely upon the property that for any $m$, all $2^m$ binary strings are equally likely to 
occur when a quantum random number generator takes $m$ binary measurements.\footnote{One has to be careful not to 
misinterpret quantum random axioms \ref{axiom_qr1}  and  \ref{axiom_qr2}.  For example, 
the Champernowne sequence  $01$  \hskip 0.1pc $00$ $01$ $10$ $11$ \hskip 0.1pc 
$000$ $001$ $010$ $011$ $100$ $101$ $110$ $111$ \hskip 0.1pc $0000$ $\dots$     
is sometimes cited as a sequence that is Borel normal, yet still Turing computable.
However, based on the mathematics of random walks \cite{feller_vol1}, the Champernowne sequence 
catastrophically fails the expected number of changes of sign as $n \rightarrow \infty$.  
Since all $2^m$ strings are equally likely, the expected value of changes of sign follows 
from the reflection principle and simple counting arguments, as shown in III.5 of \cite{feller_vol1}. }  
In terms of the ex-machine computation performed, how one of these binary strings 
is generated from some particular type of quantum process is not the critical issue.

Furthermore, most of the $2^m$ binary strings have high Kolmogorov complexity 
\cite{solomonoff,kolmogorov,chaitin}.  This fact leads to the following mathematical intuition that 
enables new computational behaviors:  the execution of quantum random instructions working 
together with meta instructions enables the ex-machine to increase its program complexity 
\cite{shannon,schmitt} as it evolves.  In some cases, the increase in program complexity can 
increase the ex-machine's computational power as the ex-machine evolves.  Also, notice the 
distinction here between the program complexity of the ex-machine and Kolmogorov complexity.  
The definition of Kolmogorov complexity only applies to standard machines.  Moreover, the 
program complexity (e.g., the Shannon complexity $|Q||A|$ \cite{shannon}) stays fixed for 
standard machines.  In contrast, an ex-machine's program complexity can increase without 
bound, when the ex-machine executes quantum random and meta instructions that productively 
work together.  (For example, see ex-machine \ref{machine:meta_a_machine}, called $\mathfrak{Q}(x)$.)

With this intuition about complexity in mind, we provide a concrete example.  
Suppose the quantum random generator demonstrated in \cite{kulikov}, certified 
by the strong Kochen-Specker theorem \cite{svozil_3_criteria,calude_qr2012,calude_qr2014,calude_qr2015}, 
outputs the 100-bit string $a_0 a_1 \dots a_{99} =$ 
\verb|1011000010101111001100110011100010001110010101011011110000000010011001|

\noindent    \verb|000011010101101111001101010000| 
to ex-machine $\mathfrak{X_1}$.

Suppose a distinct quantum random generator using radioactive decay  \cite{rohe} outputs the 
same 100-bit string $a_0 a_1 \dots a_{99}$ to a distinct ex-machine $\mathfrak{X_2}$.  
Suppose $\mathfrak{X_1}$ and  $\mathfrak{X_2}$ have identical programs  with the same initial 
tapes and same initial state. Even though radioactive decay 
\cite{curie1,curie2,rutherford1,rutherford2,rutherford3} was discovered over 100 years ago 
and  its physical basis is still phenomenological, 
the execution behavior of  $\mathfrak{X_1}$ and  $\mathfrak{X_2}$ 
are indistinguishable for the first 100 executions of their quantum random instructions.
In other words, ex-machines $\mathfrak{X_1}$ and  $\mathfrak{X_2}$ exhibit 
execution behaviors that are independent of the quantum process that generated 
these two identical binary strings.



\subsection{Mathematical Determinism and Unpredictability }

Before some of the deeper theory on quantum randomness is reviewed, we take a step back 
to view randomness from a broader theoretical perspective.  While we generally agree with 
the philosophy of Eagle \cite{eagle} that  {\it randomness is unpredictablity},  
example \ref{ex:math_gedankenexperiment}  helps sharpen the differences 
between {\it indeterminism} and {\it unpredictability}.


\begin{example}\label{ex:math_gedankenexperiment}  \hskip 1pc {A Mathematical Gedankenexperiment }

\end{example}

Our gedankenexperiment demonstrates a deterministic system 
which exhibits an extreme amount of unpredictability.  Some work is needed to define the dynamical system 
and summarize its mathematical properties before we can present the gedankenexperiment.

Consider the quadratic map $f: \mathbb{R} \rightarrow  \mathbb{R}$, where 
$f(x) = \frac{9}{2} x(1 - x)$.  Set $I_0 = [0, \frac{1}{3}]$ and $I_1 = [\frac{1}{3}, \frac{2}{3}]$.   
Set $B = (\frac{1}{3}, \frac{2}{3})$. 
Define the set $\Lambda = \{ x \in [0, 1]:  f^n(x) \in I_0 \cup I_1$ 
\verb| for all | $n \in \mathbb{N} \}$.
 $0$ is a fixed point of $f$ and 
$f^2(\frac{1}{3}) = f^2(\frac{2}{3}) = 0$, so the boundary points of $B$ lie 
in $\Lambda$.   Furthermore, whenever $x \in B$, then $f(x) < 0$ and 
${\underset{n \rightarrow \infty} \lim} f^n(x) = -\infty$.  This means all orbits  
that exit $\Lambda$ head off to $- \infty$.

The inverse image $f^{-1}(B)$ is two open intervals $B_{0} \subset I_0$ and 
$B_{1} \subset I_1$ such that $f(B_{0}) = f(B_{1}) = B$.  Topologically,  $B_{0}$ behaves like  
Cantor's open middle third of $I_0$ and $B_{1}$ behaves like 
Cantor's open middle third of $I_1$.  Repeating the inverse image indefinitely, 
define the set $H = B \cup {\overset{\infty} {\underset{k = 1} \cup}} f^{-k}(B)$.  
Now $H \cup \Lambda = [0, 1]$ and $H \cap \Lambda = \emptyset$.

Using dynamical systems notation, set 
$\Sigma_2 = \{0, 1 \}^{\mathbb{N}}$.  Define the shift map 
$\sigma: \Sigma_2 \rightarrow \Sigma_2$, where 
$\sigma( a_0 a_1 \dots ) = (a_1 a_2 \dots)$.  For each $x$ in $\Lambda$, $x$'s 
trajectory in $I_0$ and $I_1$ corresponds to a unique point in $\Sigma_2$: 
define  $h : \Lambda \rightarrow \Sigma_2$ 
as  $h(x) = (a_0 a_1 \dots)$ such that for each $n \in  \mathbb{N}$, 
set $a_n = 0$ if $f^n(x) \in I_0$  and  $a_n = 1$  if $f^n(x) \in I_1$.

For any two points $(a_0 a_1 \dots)$ and $(b_0 b_1 \dots)$ in $\Sigma_2$, 
define the metric  $d\big{(} (a_0 a_1 \dots),$  $(b_0 b_1 \dots) \big{)}$ $=$ 
${\overset{\infty} {\underset{i = 0}\sum}} |a_i - b_i| 2^{-i}$.  Via 
the standard topology on $\mathbb{R}$ inducing  the subspace 
topology on $\Lambda$,  it is straightforward to verify that $h$ is a homeomorphism 
from $\Lambda$ to $\Sigma_2$.   Moreover, $h \circ f = \sigma \circ h$, so $h$ is 
a topological conjugacy.   The set $H$ and the topological conjugacy $h$ enable us to 
verify that $\Lambda$ is a Cantor set.  This means that $\Lambda$ is uncountable, 
totally disconnected, compact and every point of $\Lambda$ is a limit point of $\Lambda$.

We are ready to pose our {\it mathematical gedankenexperiment}.  We make 
the following assumption about our mathematical observer.  When our  
observer takes a physical measurement of a point $x$ in $\Lambda_2$, 
she measures a 0 if $x$ lies in $I_0$ and measures a 1 if $x$ lies in $I_1$.  
We assume that she cannot make her observation any more accurate based on 
our idealization that is analogous to the following:  measurements at the quantum level 
have limited resolution due to the wavelike properties of matter 
\cite{debroglie_1924,debroglie_1925,born_1926,heisenberg_1926,heisenberg_1927,schrodinger,feynman}.  
Similarly, at the second observation, our observer measures a 0 if 
$f(x)$ lies in $I_0$ and 1 if $f(x)$ lies in $I_1$.  
Our observer continues to make these observations until she has measured whether 
$f^{k-1}(x)$ is in $I_0$ or in $I_1$.  Before making her $k+1$st observation,   
can our observer make an effective prediction whether $f^{k}(x)$ lies in 
$I_0$ or $I_1$ that is correct for more than 50\% of her predictions?


The answer is no when $h(x)$ is a generic point (i.e., in the sense of Lebesgue measure) 
in $\Sigma_2$.    Set $\mathcal{R}$ to the Martin-L{\"o}f random points in $\Sigma_2$.  
Then  $\mathcal{R}$  has Lebesgue measure 1 in $\Sigma_2$ 
\cite{feller_vol1,chaitin_1969}, so its complement $\Sigma_2 - \mathcal{R}$ has Lebesgue measure 0.  
For any $x$ such that $h(x)$ lies in $\mathcal{R}$,   
then our observer cannot predict the orbit of $x$ with a Turing machine.  Hence, via the 
topological conjugacy $h$, we see that for a generic point $x$ in $\Lambda$, $x$'s orbit 
between $I_0$ and $I_1$ is Martin-L{\"o}f random   -- even though $f$ is mathematically 
deterministic and $f$ is a Turing computable function.

Overall, the dynamical system $(f, \Lambda)$ is mathematically deterministic and each real number $x$ in $\Lambda$ has a 
definite value.  However, due to the lack of resolution in the observer's measurements, 
the orbit of generic point $x$ is unpredictable -- in the sense of Martin-L{\"o}f random.


\subsection{Quantum Random Theory}

A deeper theory on quantum randomness stems from the seminal EPR paper \cite{epr}. 
Einstein, Podolsky and Rosen propose a necessary condition for a complete theory of quantum mechanics: 
{\it  Every element of physical reality must have a counterpart in the physical theory}.
Furthermore, they state that elements of physical reality must be found by the results of experiments 
and measurements.   While mentioning that there might be other ways of recognizing a physical reality, 
EPR propose the following as a reasonable criterion for a complete theory of quantum mechanics:

\begin{quote}
            {\it 
                 If, without in any way disturbing a system, one can predict with 
                 certainty (i.e., with probability equal to unity) the value of a physical quantity, 
                 then there exists an element of physical reality corresponding to this physical quantity.  
            }
\end{quote}

	They consider a quantum-mechanical description of a particle, having one degree of freedom.	
  After some analysis, they conclude that a {\it definite value} of the coordinate, for a particle in the 
  state given by $\psi = e^{\frac{2 \pi i} {h} p_o x}$, is not predictable, but may be obtained only 
  by a direct measurement.  However, such a measurement disturbs the particle and changes its state.  
  They remind us that in quantum mechanics, 
  {\it when the momentum of the particle is known, its coordinate has no physical reality}.
  This phenomenon has a more general mathematical condition that if the operators corresponding to two physical 
  quantities, say $A$ and $B$, do not commute, then a precise knowledge of one of them precludes a
   precise knowledge of the other.   Hence, EPR reach the following conclusion:  

\begin{itemize}

\item[(I)]  
               {  
                   The quantum-mechanical description of physical reality given by the wave function is not complete.  
               } 


\verb|OR|

 \smallskip 

\item[(II)]  
            {   
               When the operators corresponding to two physical quantities (e.g., position and momentum) 
               do not commute (i.e. $AB \ne BA$), the two quantities cannot have the same simultaneous reality. 
            }
               
\end{itemize}

EPR justifies this conclusion by reasoning that if both physical quantities had a simultaneous reality and 
consequently definite values, then these definite values would be part of the complete description.  
Moreover, if the wave function provides a complete description of physical reality, then 
the wave function would contain these definite values and the definite values would be predictable.

From their conclusion of I \verb|OR| II, 
EPR assumes the negation of  I -- that the wave function does give a complete description of physical 
reality.  They analyze two systems that interact over a finite interval of time.  And show 
by a thought experiment of measuring each system via wave packet reduction that it is possible to assign two 
different wave functions to the same physical reality.    Upon further analysis 
of two wave functions that are eigenfunctions of two non-commuting operators, they arrive at the conclusion 
that two physical quantities, with non-commuting operators can have simultaneous reality.
From this contradiction or paradox (depending on one's perspective), they conclude that the 
quantum-mechanical description of reality is not complete.

In \cite{bohr}, Bohr responds to the EPR paper.  Via an analysis involving single slit experiments and 
double slit (two or more) experiments, Bohr explains how during position measurements 
that  momentum is transferred between the object being observed and the measurement apparatus.  
Similarly, Bohr explains that during momentum measurements the object is  displaced.  
Bohr also makes a similar observation about time and energy:  
``{\it it is  excluded in principle to control the energy which goes into the clocks without interfering 
 essentially with their use as time indicators}''. 
 Because at the quantum level it is impossible to control the interaction 
between the object being observed and the measurement apparatus, Bohr argues for a 
``{\it final renunciation of the classical ideal of causality}'' 
and a ``{\it radical revision of physical reality}''.

From his experimental analysis, Bohr concludes  that the meaning of EPR's expression 
{\it without in any way disturbing the system} creates an ambiguity in their argument.   
Bohr states: ``There is essentially the question of {\it an influence on the very conditions 
which define the possible types of predictions regarding the future behavior of the system}.  
Since these conditions constitute an inherent element of the description of any phenomenon 
to which the term {\it physical reality} can be properly attached, we see that the argumentation 
of the mentioned authors does not justify their conclusion that quantum-mechanical description 
is essentially incomplete.''    Overall, the EPR versus Bohr-Born-Heisenberg position set the 
stage for the problem of hidden variables in the theory of quantum mechanics.

The Kochen-Specker \cite{specker,kochen} approach -- to a hidden variable theory in quantum mechanics --  
is addressed independently of any reference to locality or non-locality \cite{bell}.  
Instead, they assume a stronger condition than locality: 
hidden variables are only associated with the quantum system being measured.
No hidden variables are associated with the measurement apparatus.  
This is the physical (non-formal) notion of {\it non-contextuality}.


In their theory, a set of observables are defined, where in the case of 
quantum mechanics, the observables (more general) are represented by 
the self-adjoint operators on a separable Hilbert space. The Kochen-Specker 
theorem \cite{specker,kochen} proves that it is impossible for a 
non-contexual hidden variable theory to assign values to finite sets of observables, 
which is also consistent with the theory of quantum mechanics.  More precisely, 
the Kochen-Specker theorem demonstrates a contradiction between 
the following two assumptions:

\begin{itemize}

\item[($\mathcal{A}_1$)] {   
                              The set of observables in question have pre-assigned definite values.  Due to complementarity, 
                              the observables may not be all simultaneously co-measurable, where the formal definition 
                              of co-measurable means that the observables commute. 
                         }

\medskip

\item[($\mathcal{A}_2$)] {   
                             The measurement outcomes of observables are non-contextual.  This means the
                             outcomes are independent of the other co-measurable observables that are measured 
                             along side them, along with the requirement that in any ``complete'' set of 
                             mutually co-measurable yes-no propositions, exactly one proposition should be assigned 
                             the value ``yes''.
                         }

\end{itemize}

\noindent  Making assumption $\mathcal{A}_2$ more precise, the mutually co-measurable yes-no propositions are 
represented by mutually orthogonal projectors spanning the Hilbert space.

The Kochen-Specker theorem does not explicitly identify the observables that violate at least one of the 
assumptions $\mathcal{A}_1$ or $\mathcal{A}_2$.   The original Kochen-Specker theorem was not developed with a  
goal of locating the particular observable(s) that violate assumptions $\mathcal{A}_1$ or $\mathcal{A}_2$.

\medskip

In \cite{calude_qr2012}, Abbott, Calude and Svozil (ACS) advance beyond the Kochen-Specker theory, 
but also preserve the quantum logic formalism of von Neumann \cite{von_neumann1,von_neumann2} and 
Kochen-Specker  \cite{kochen_specker1,kochen_specker2}.  
Their work can be summarized as follows:

\begin{enumerate}

\item{   
         They explicitly formalize the physical notions of value definiteness (indefiniteness)
         and contextuality. 
     }

\smallskip 

\item{
         They sharpen in what sense the Kochen-Specker and Bell-like theorems imply the 
         violation of the non-contextuality assumption $\mathcal{A}_2$.
     }

\smallskip 

\item{
          They provide collections of observables that do not produce Kochen-Specker contradictions.  
     }

\smallskip

\item{
          They propose the reasonable statement that 
          {\it quantum random number generators\footnote{ There are quantum random number
          expanders \cite{pironio,shalm}, based on the Bell inequalities \cite{bell,clauser} and non-locality.  
          We believe random number expanders are better suited for cryptographic applications where an active adversary 
          may be attempting to subvert the generation of a cryptographic key.  
          In this paper, advancing cryptography is not one of our goals.} 
          should be certified by value indefiniteness, 
          based on the particular observables utilized for that purpose}.  
          Hence, their extension of the Kochen-Specker theory needs to 
          locate the violations of non-classicality.  
     }

\smallskip 

\end{enumerate}

The key intuition for quantum random number generators that are designed 
according to the ACS protocol is that value indefiniteness implies unpredictability.    
One of the primary strengths of ACS theory over the Kochen-Specker and Bell-type theorems 
is that it helps identify the particular observables that are value indefinite.  The identification 
of value indefinite observables helps design a quantum random number generator, based on mathematics with 
reasonable physical assumptions rather than ad hoc arguments. In particular, their results assure that, 
if quantum mechanics is correct, the particular observables 
-- used in the measurements that produce the random number sequences -- 
are provably value indefinite.  In order for a quantum random number generator to capture the 
value indefiniteness during its measurements, their main idea is to identify pairs of projection 
observables that satisfy the following:   if one of the projection observables is assigned the value 1 
by an admissible assignment function such that the projector observables $\mathcal{O}$
are non-contextual, then the other observable in the pair must be value indefinite.

Now, some of their formal definitions are briefly reviewed to clarify how the ACS corollary 
helps design a protocol for a dichotomic quantum random bit generator, 
operating in a three-dimensional Hilbert space.    
Vectors $\ket{\psi}$ lie in the Hilbert space  $\mathbb{C}^n$, 
where $\mathbb{C}$ is the complex numbers.    
The {\it observables} $\mathcal{O} \subset \{ P_{\psi}: \ket{\psi} \in \mathbb{C}^n \}$  
are a non-empty subset of the projection operators 
$P_{\psi} = \frac{\ket{\psi} \bra{\psi}}{\braket{\psi | \psi}}$, 
that project onto the linear subspace of $\mathbb{C}^n$, spanned by non-zero vector  $\ket{\psi}$.  
They only consider pure states.  
The set of {\it measurement contexts} over $\mathcal{O}$
is the set $\mathcal{C} \subset \{ \{P_1, P_2, \dots P_n \}$  $|$ 
$P_i \in \mathcal{O}$ \verb| and | $\braket{i | j} = 0$ \verb| for | $i \ne j \}$.
A {\it context} $C \in \mathcal{C}$ is a maximal set of {\it compatible projection observables}, 
where compatible means the observables can be simultaneously measured.

Let  $\nu : \{ (o, C)$ $|$ $o \in \mathcal{O}, C \in \mathcal{C}$
\verb|and|  $o \in \mathcal{C} \}$ 
${\overset{o} \rightarrow}$  $\{0, 1\}$ be a partial function.  
For some $o, o^{\prime} \in \mathcal{O}$ and $C, C^{\prime} \in \mathcal{C}$, then 
$\nu(o, C)= \nu(o^{\prime}, C^{\prime})$ means both $\nu(o, C)$  
and $\nu(o^{\prime}, C^{\prime})$ are defined and they have equal values in $\{0, 1\}$.
The expression $\nu(o, C) \ne \nu(o^{\prime}, C^{\prime})$  means either $\nu(o, C)$ 
or $\nu(o^{\prime}, C^{\prime})$ are not defined or they are both defined but have different values.  
$\nu$ is called an {\it assignment} function and $\nu$ formally expresses the notion of a hidden
variable.  $\nu$ specifies in advance the result obtained from the measurement of an observable.

An observable $o \in C$ is {\it value definite} in the context $C$ with respect to 
$\nu$ if $\nu(o, C)$ is defined.  Otherwise, $o$ is {\it value indefinite} in $C$.  
If $o$ is value definite in all contexts $C \in \mathcal{C}$ for which $o \in C$, 
then $o$ is called value definite with respect to $\nu$
If $o$ is value indefinite in all contexts $C$, then $o$ is called {\it value indefinite} 
with respect to $\nu$.

The set $\mathcal{O}$ is {\it value definite} with respect to 
$\nu$ if every observable $o \in \mathcal{O}$ is value definite with respect to $\nu$. 
 This formal definition of value definiteness corresponds to the 
classical notion of determinism.  Namely, $\nu$ assigns a definite value to an observable, which 
expresses that we are able to predict in advance the value obtained via measurement.

An observable $o \in \mathcal{O}$ is {\it non-contextual} with respect to $\nu$ if for all contexts 
$C, C^{\prime} \in \mathcal{C}$ such that $o \in C$ and $o \in C^{\prime}$, 
then $\nu(o, C) = \nu(o, C^{\prime})$.  
Otherwise, $\nu$ is {\it contextual}.  The set of observables $\mathcal{O}$ is {\it non-contextual}
 with respect to $\nu$ if every observable $o \in \mathcal{O}$ which is not value indefinite 
 (i.e. value definite in some context) is non-contextual with respect to $\nu$.   Otherwise, the 
 set of observables $\mathcal{O}$ is {\it contextual}.   
 Non-contextuality corresponds to the classical notion 
 that the value obtained via measurement is independent of other compatible observables 
 measured alongside it.    If an observable $o$ is non-contextual, then it is value definite. 
 However,  this is false for sets of observables:  $\mathcal{O}$ can be non-contextual but not 
 value definite if $\mathcal{O}$ contains an observable that is value indefinite.

An assignment function $\nu$ is {\it admissible} if the following two conditions hold for $C \in \mathcal{C}$:

\begin{itemize}

\item[1.]{  
                   If there exists an $o \in C$ with $\nu(o, C) = 1$, then $\nu(o^{\prime}, C) = 0$ 
                   for all $o^{\prime} \in C - \{ o\}$.
                }

\smallskip 

\item[2.]{  
                   If there exists an $o \in C$ such that $\nu(o^{\prime}, C) = 0$, 
                   for all $o^{\prime} \in C - \{ o\}$,  then $\nu(o, C) = 1$ 
                }

\end{itemize}

\noindent {\it Admissibility} is analogous to a {\it two-valued measure} used in quantum logic 
\cite{alda1,alda2,zierler,kalmbach,svozil_qlogic}.  
These definitions lead us to the ACS corollary which helps design a protocol
for a quantum random bit generator that relies upon value indefiniteness.

\medskip 

\noindent  {\bf ACS Corollary}.  \hskip 1pc  
{\it  Let $\ket{a}$, $\ket{b}$ in $\mathbb{C}^3$ be unit vectors such 
that $\sqrt{ \frac{5}{14} } \le |\braket{ a | b}| \le \frac{3}{\sqrt{14}}$.
Then there exists a set of projection observables $\mathcal{O}$ containing 
$P_{a}$ and $P_{b}$ and a set of contexts $\mathcal{C}$ over 
 $\mathcal{O}$ such that there is no admissible assignment function under 
 which $\mathcal{O}$ is non-contextual, $P_a$ has the value 1 and 
 $P_b$ is value definite.

 }

\medskip

The ACS experimental protocol starts with a spin-1 source and consists of two sequential measurements.  
Spin-1 particles are prepared in the  $S_z = 0$ state.  (From their eigenstate assumption, 
this operator has a definite value.)  Specifically, the first measurement puts the 
particle in the $S_z = 0$  eigenstate of the spin operator $S_z$.   
Since the preparation state is an eigenstate of the $S_x = 0$ projector observable 
with eigenvalue 0, this outcome has a  definite value and theoretically cannot be reached.  
The second measurement is performed in the eigenbasis of the $S_x$ operator with two 
outcomes $S_x = \pm 1$.  

\begin{figure}[h]\label{fig:beam_splitter}
   \centering
   \includegraphics[width=12 cm]{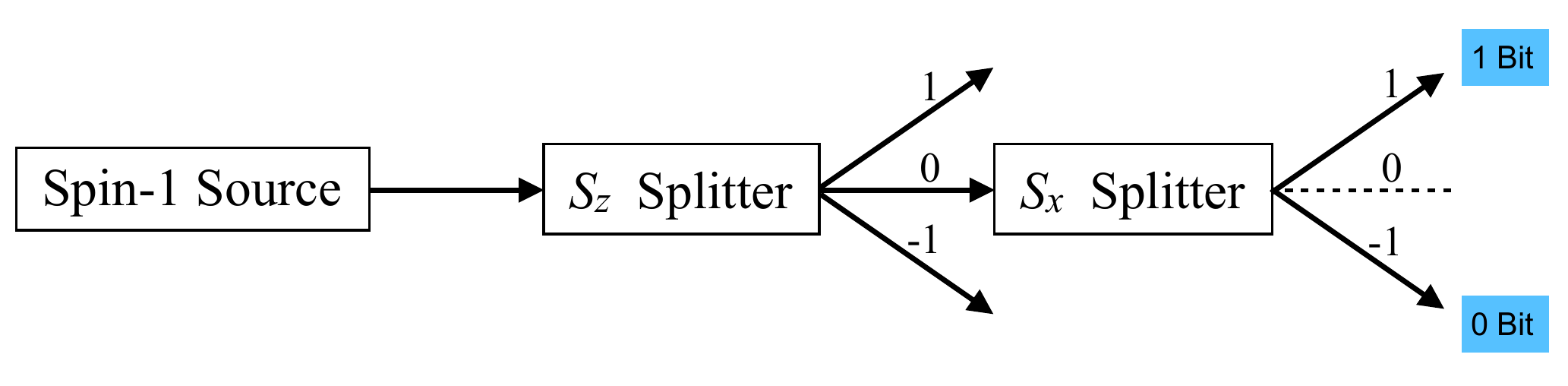}
   \caption[width=12 cm]{   \hskip 0.1pc  Quantum Observables Rendering Random Bits via Value Indefiniteness.  }  
\end{figure}

The $S_x = \pm 1$ outcomes can be assigned 0 and 1, respectively.  Moreover, since 
$\braket{ S_z = 0 | S_x = \pm 1} = \frac{1}{\sqrt{2}}$, the ACS corollary implies that 
neither of the $S_x = \pm 1$ outcomes can have a pre-determined definite value. 
This ACS design provides two key properties:  
(1) Bits 0 and 1 are generated by value indefiniteness. 
(2) Bits 0 and 1 are generated independently with a 50/50 probability.  
Quantum random axioms \ref{axiom_qr1}  and  \ref{axiom_qr2}  only require the 
second property.  It is worth noting, however, that the first property (value indefiniteness) 
helps sharpen the results in section  \ref{msf:x_machine_languages}.


In \cite{kulikov}, a quantum random number generator was built, empirically tested and 
implemented the ACS protocol that is shown in figure 2.  
During testing, the bias of the bits showed a  $50.001\%$ 
mean frequency of obtaining a 0 outcome and a standard deviation of $0.1\%$ 
that is consistent with a bucket size of 999302 bits.  
The entropy for unbiased random numbers obtained from 10 Gigabits of raw data 
was 7.999999 per byte.  The ideal value is, of course,  8.  
The data passed all standard NIST and Diehard statistical test suites \cite{NIST,diehard}. 
Furthermore, the quantum random bits were analyzed with a test \cite{calude_qr2017} 
directly related to algorithmic randomness:  the raw bits generated 
from \cite{kulikov} were used to test the primality of Carmichael numbers smaller 
than $54 \times 10^7$ with the Solovay-Strassen 
probabilistic algorithm.  The metric is the minimum number of random bits needed to 
confirm compositeness.  Ten sequences of raw quantum random bits of length $2^{29}$ 
demonstrated a significant advantage over sequences of the same length, produced 
from three modern pseudo-random generators -- Random123, PCG and xoroshilro128+.


In \cite{calude_qr2014}, Abbott, Calude and Svozil show that the occurrence of quantum randomness 
due to quantum indeterminacy is not a fluke.   They show that the breakdown of non-contextual hidden variable theories  
occurs almost everywhere and prove that quantum indeterminacy 
(i.e. contextuality or value indefiniteness) is a global property.  
They prove that after one arbitrary observable is fixed so that it occurs with certainty, 
almost all (Lebesgue measure one) remaining observables are value indefinite.


\section{Computing Ex-Machine Languages }\label{msf:x_machine_languages}

A class of ex-machines are defined as evolutions of the fundamental ex-machine 
$\mathfrak{Q}(x)$, whose 15 initial instructions are listed 
under ex-machine \ref{machine:meta_a_machine}.  These ex-machines compute 
languages $L$ that are subsets of 
$\{$\verb|a|${\}}^* = \{${\verb|a|}$^n : n \in \mathbb{N} \}$.  
The expression {\verb|a|}$^n$ represents a string of $n$ consecutive \verb|a|'s.  
For example, {\verb|a|}$^5$ $=$ \verb|aaaaa|  and  {\verb|a|}$^0$ is the empty string.


Define the set of languages $$\mathfrak{L} = {\underset{L \subset \{a\}^*} \bigcup } \mbox{\hskip 0.15pc} \{L\}.$$ 
 For each function $f: \mathbb{N} \rightarrow \{0, 1\}$, 
definition \ref{defn:language_L_f}  defines a unique language in  $\mathfrak{L}$.

\begin{defn}\label{defn:language_L_f}  \hskip 1pc  {\it Language $L_f$ }

\smallskip 

\noindent Consider any function $f: \mathbb{N} \rightarrow \{0, 1\}$.  
This means $f$ is a member of the set $\{0, 1\}^{\mathbb{N}}$.  
Function $f$ induces the language $L_f = \{${\verb|a|}$^n : f(n) = 1\}$.  
In other words, for each non-negative integer $n$, string  {\verb|a|}$^n$ 
is in the language $L_f$ if and only if $f(n) = 1$.  

\end{defn}

\noindent Trivially, $L_f$ is a language in $\mathfrak{L}$.  Moreover, these functions $f$ generate all of $\mathfrak{L}$.  

\smallskip

\begin{rem}  \hskip 1pc 
     $\mathfrak{L} = {\underset{f \in \{0, 1\}^{\mathbb{N}} } \bigcup  }  \mbox{\hskip 0.2pc} \{ L_f \}$
\end{rem}

In order to define the halting syntax for the language in $\mathfrak{L}$ that an ex-machine computes, 
choose alphabet set $A =$ $\{$\verb|#|, \verb|0|, \verb|1|, \verb|N|, \verb|Y|, \verb|a|$\}$.

\begin{defn}\label{defn:language_computed} \hskip 1pc 
{\it Language $L$ in $\mathfrak{L}$ that ex-machine $\mathfrak{X}$ computes}

\smallskip  

\noindent  Let $\mathfrak{X}$ be an ex-machine.  
The language $L$ in $\mathfrak{L}$ that $\mathfrak{X}$ computes 
is defined as follows.  A valid initial tape has the form 
\verb|#| \hskip 0.5pc \verb|#|{\verb|a|}$^n$\verb|#|.  
The valid initial tape  \verb|#| \hskip 0.5pc \verb|##| represents the empty string.  
After machine  $\mathfrak{X}$ starts executing with initial tape  
 \verb|#| \hskip 0.5pc \verb|#|{\verb|a|}$^n$\verb|#|,  \hskip 0.2pc
 string {\verb|a|}$^n$ is in $\mathfrak{X}$'s language 
 if ex-machine $\mathfrak{X}  $ halts with tape \verb|#|{\verb|a|}$^n$\verb|#| \hskip 0.5pc \verb|Y#|.  
 \hskip 0.3pc String {\verb|a|}$^n$ is not in $\mathfrak{X}$'s language if $\mathfrak{X}$ halts with tape 
 \verb|#|{\verb|a|}$^n$\verb|#| \hskip 0.5pc \verb|N#|.    

\end{defn}

\noindent  The use of special alphabet symbols \verb|Y| and \verb|N| -- to decide whether {\verb|a|}$^n$ 
is in the language or not in the language -- follows \cite{lewis}.

For a particular string \verb|#| \hskip 0.5pc \verb|#|{\verb|a|}$^m$\verb|# |,  some ex-machine $\mathfrak{X}$ 
could first halt with   \verb|#|{\verb|a|}$^m$\verb|#| \hskip 0.5pc \verb|N#| and in a second computation 
with input \verb|#| \hskip 0.5pc \verb|#|{\verb|a|}$^m$\verb|#| could halt with 
\verb|#|{\verb|a|}$^m$\verb|#| \hskip 0.5pc \verb|Y#|.  This oscillation of halting outputs could 
continue indefinitely and in some cases the oscillation can be aperiodic.  In this case, 
$\mathfrak{X}$'s language would not be well-defined according to definition \ref{defn:language_computed}.  
These types of ex-machines will be avoided in this paper.

There is a subtle difference between $\mathfrak{Q}(x)$ and an ex-machine 
$\mathfrak{X}$ whose halting output never stabilizes.  In contrast to the Turing machine, two different 
instances of the ex-machine $\mathfrak{Q}(x)$ can evolve to two different machines and compute distinct 
languages according to definition \ref{defn:language_computed}.   However, 
after  $\mathfrak{Q}(x)$ has evolved to a new machine $\mathfrak{Q}(a_0 a_1 \dots a_m$ $x)$ as a result of 
a prior execution with input tape \verb|#| \hskip 0.5pc \verb|#|{\verb|a|}$^m$\verb|#|, then 
for each $i$ with $0 \le i \le m$, machine  
$\mathfrak{Q}(a_0 a_1 \dots a_m$ $x)$ always halts with the same output when 
presented with input tape \verb|# #|{\verb|a|}$^i$\verb|#|.  In other words,  
$\mathfrak{Q}(a_0 a_1 \dots a_m$ $x)$'s halting output stabilizes on all input strings  
{\verb|a|}$^i$ where $0 \le i \le m$.   
Furthermore, it is the ability of $\mathfrak{Q}(x)$ to exploit the non-autonomous 
behavior of its two quantum random instructions that enables an evolution of  
$\mathfrak{Q}(x)$ to compute languages that are Turing incomputable.


We designed ex-machines that compute subsets of  $\{$\verb|a|${\}}^*$  
rather than subsets of $\{0, 1\}^*$  because the resulting specification of $\mathfrak{Q}(x)$ 
is much simpler and more elegant.  It is straightforward to list a standard machine 
that bijectively translates each {\verb|a|}$^n$  to a binary string in $\{0, 1\}^*$ as follows.  
The  empty string in $\{$\verb|a|${\}}^*$ maps to the  
empty string in $\{0, 1\}^*$.  Let $\psi$ represent this translation map.  
Hence,   \verb|a| ${\overset{\psi} \rightarrow}$ \verb|0|, 
\hskip 0.3pc  \verb|aa|  ${\overset{\psi} \rightarrow}$  \verb|1|,  
\hskip 0.3pc  \verb|aaa| ${\overset{\psi} \rightarrow}$  \verb|00|,  
\hskip 0.3pc  {\verb|a|}$^4$ ${\overset{\psi} \rightarrow}$  \verb|01|,  
\hskip 0.3pc  {\verb|a|}$^5$ ${\overset{\psi} \rightarrow}$  \verb|10|,  
\hskip 0.3pc  {\verb|a|}$^6$ ${\overset{\psi} \rightarrow}$  \verb|11|, 
\hskip 0.3pc  {\verb|a|}$^7$ ${\overset{\psi} \rightarrow}$  \verb|000|, and so on.     
Similarly, an inverse translation standard machine computes the inverse of $\psi$.
 Hence \verb|0| ${\overset{\psi^{-1}} \rightarrow}$  \verb|a|, 
 \hskip 0.3pc  \verb|1| ${\overset{\psi^{-1}} \rightarrow}$  \verb|aa|, 
 \hskip 0.3pc  \verb|00| ${\overset{\psi^{-1}} \rightarrow}$  \verb|aaa|, 
 and so on.  
The translation and inverse translation computations immediately transfer any results about the 
ex-machine computation of subsets of $\{$\verb|a|${\}}^*$ to corresponding subsets 
of $\{0, 1\}^*$ via $\psi$.  In particular, the following remark is relevant for our discussion.

\smallskip  

\begin{rem}
 
 Every subset of  $\{$\verb|a|${\}}^*$ is computable by some ex-machine if and only if 
 every subset of  $\{0, 1\}^*$  is computable by some ex-machine.  

\end{rem}

\begin{proof}
The remark immediately follows from the fact that the translation map $\psi$ and the inverse translation map $\psi^{-1}$ are 
computable with a standard machine.  
\end{proof}

\smallskip 





When the quantum randomness in  $\mathfrak{Q}$'s two quantum random instructions 
satisfy axiom \ref{axiom_qr1} (unbiased bernoulli trials) and axiom 
 \ref{axiom_qr2}  (stochastic independence), for each $n \in \mathbb{N}$, all $2^n$ finite paths of 
 length $n$ in the infinite, binary tree of figure 3 
  are equally likely.  (Feller  \cite{feller_vol1,feller_vol2} covers random walks.) 
  Moreover, there is a one-to-one correspondence between a function $f: \mathbb{N} \rightarrow \{0, 1\}$ and an 
infinite downward path in the infinite binary tree of figure 3.  
The beginning of an infinite downward path is shown in red, and starts as $(0, 1, 1, 0 \dots)$.

\begin{figure}[h]
   \label{fig:infinite_binary_tree}
   \centering
   \includegraphics[width=11 cm]{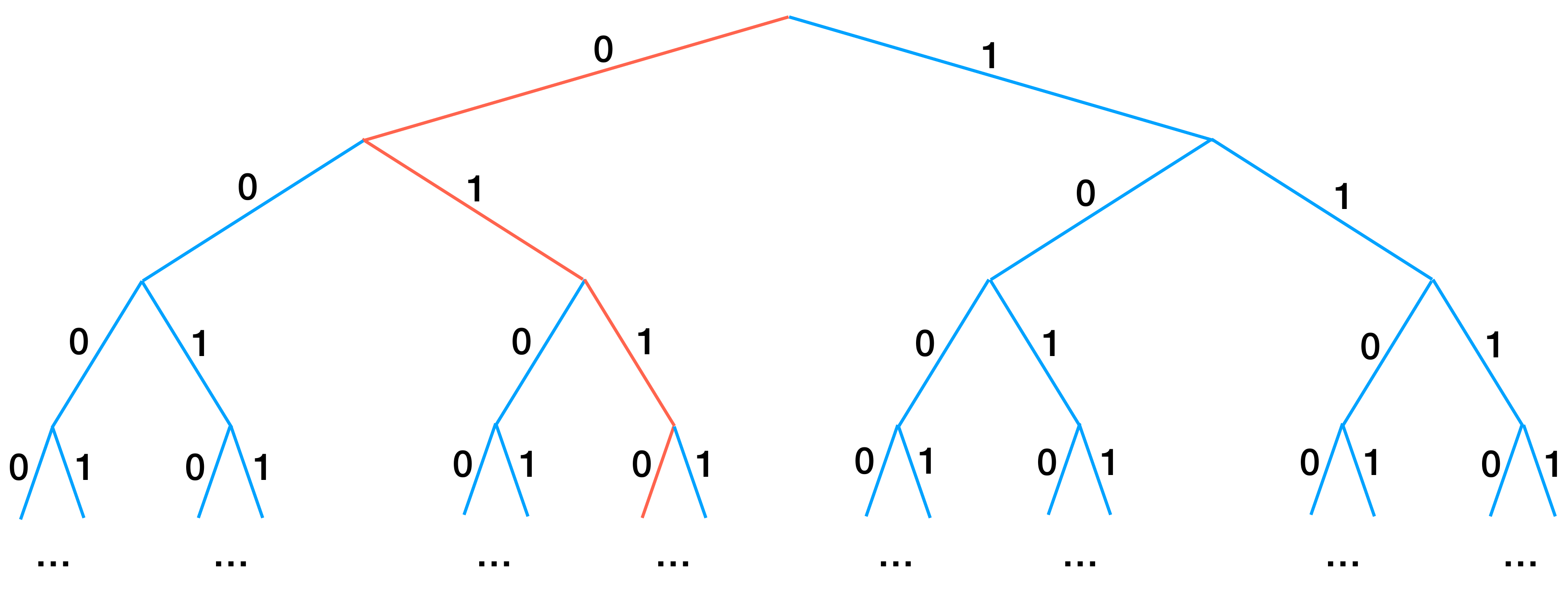}
   \caption{Infinite Binary Tree.  A Graphical Representation of $\{0, 1\}^{\mathbb{N}}$.  }
\end{figure}

Based on this one-to-one correspondence between functions  $f: \mathbb{N} \rightarrow \{0, 1\}$ 
and downward paths in the infinite binary tree, an examination of 
$\mathfrak{Q}(x)$'s execution behavior will show that $\mathfrak{Q}(x)$ can evolve to compute any language 
$L_f$ in $\mathfrak{L}$ when quantum random instructions {\small \verb|(x, #, x, 0)| } and
{\small \verb|(x, a, t, 0)|} satisfy axioms \ref{axiom_qr1} and \ref{axiom_qr2}.

 
\smallskip 

\begin{Machine}\label{machine:meta_a_machine}  \hskip 1pc    $\mathfrak{Q}(x)$

\smallskip 

\noindent {\small $A =$ $\{$\verb|#|, \verb|0|, \verb|1|, \verb|N|, \verb|Y|, \verb|a|$\}$}.  
States  
 {\small $Q =$ $\{ \verb|0|, \verb|h|,  \verb|n|, \verb|y|, \verb|t|, \verb|v|, \verb|w|, \verb|x|, \verb|8| \}$ }
where halting state {\small \verb|h| $= 1$}, 
and states {\small \verb|n| $= 2$, \verb|y| $= 3$, \verb|t| $= 4$, 
\verb|v| $= 5$, \verb|w| $= 6$, \verb|x| $= 7$}.   
The initial state is always $0$.  The letters represent states instead of explicit numbers 
because these states have special purposes.  (Letters are used solely for the reader's benefit.)
State $n$ indicates NO that the string is not in the language.   
State $y$ indicates YES that the string is in the language.    
State $x$ is used to generate a new random bit;  
this random bit determines  the string corresponding to the current value of $|Q|-1$. 
The fifteen instructions of $\mathfrak{Q}(x)$  are shown below.  

\end{Machine}


{ \footnotesize
  \begin{verbatim} 
    (0, #, 8, #, 1)
    (8, #, x, #, 0)
    (y, #, h, Y, 0)
    (n, #, h, N, 0)

    (x, #, x, 0)
    (x, a, t, 0)
    
    (x, 0, v, #, 0,  (|Q|-1, #, n, #, 1) )
    (x, 1, w, #, 0,  (|Q|-1, #, y, #, 1) )

    (t, 0, w, a, 0,  (|Q|-1, #, n, #, 1) )
    (t, 1, w, a, 0,  (|Q|-1, #, y, #, 1) )

    (v, #, n, #, 1,  (|Q|-1, a, |Q|, a, 1) )

    (w, #, y, #, 1, (|Q|-1, a, |Q|, a, 1) )
    (w, a, |Q|, a, 1, (|Q|-1, a, |Q|, a, 1) )

    (|Q|-1, a, x, a, 0)
    (|Q|-1, #, x, #, 0)
  \end{verbatim}
}

With initial state 0 and initial tape \verb|# #aaaa##|, an execution of machine $\mathfrak{Q}(x)$ 
is shown below.  


\bigskip 

{ \scriptsize
  \begin{verbatim} 
STATE   TAPE         HEAD   INSTRUCTION EXECUTED                        NEW INSTRUCTION                   
  8     ## aaaa###    1     (0, #, 8, #, 1)   
  x     ## aaaa###    1     (8, a, x, a, 0)   
  t     ## 1aaa###    1     (x, a, t, 1_qr, 0)   
  w     ## aaaa###    1     (t, 1, w, a, 0, (|Q|-1, #, y, #, 1))        (8, #, y, #, 1)
  9     ##a aaa###    2     (w, a, |Q|, a, 1, (|Q|-1, a, |Q|, a, 1))    (8, a, 9, a, 1)
  x     ##a aaa###    2     (9, a, x, a, 0)                             (9, a, x, a, 0)
  t     ##a 1aa###    2     (x, a, t, 1_qr, 0)   
  w     ##a aaa###    2     (t, 1, w, a, 0, (|Q|-1, #, y, #, 1) )       (9, #, y, #, 1)
 10     ##aa aa###    3     (w, a, |Q|, a, 1, (|Q|-1, a, |Q|, a, 1))    (9, a, 10, a, 1)
  x     ##aa aa###    3     (10, a, x, a, 0)                            (10, a, x, a, 0)   
  t     ##aa 0a###    3     (x, a, t, 0_qr, 0)   
  w     ##aa aa###    3     (t, 0, w, a, 0, (|Q|-1, #, n, #, 1) )       (10, #, n, #, 1)
 11     ##aaa a###    4     (w, a, |Q|, a, 1, (|Q|-1, a, |Q|, a, 1))    (10, a, 11, a, 1)
  x     ##aaa a###    4     (11, a, x, a, 0)                            (11, a, x, a, 0)   
  t     ##aaa 1###    4     (x, a, t, 1_qr, 0)   
  w     ##aaa a###    4     (t, 1, w, a, 0, (|Q|-1, #, y, #, 1) )       (11, #, y, #, 1)
 12     ##aaaa ###    5     (w, a, |Q|, a, 1, (|Q|-1, a, |Q|, a, 1))    (11, a, 12, a, 1)
  x     ##aaaa ###    5     (12, #, x, #, 0)                            (12, #, x, #, 0) 
  x     ##aaaa 0##    5     (x, #, x, 0_qr, 0)   
  v     ##aaaa ###    5     (x, 0, v, #, 0, (|Q|-1, #, n, #, 1) )       (12, #, n, #, 1)
  n     ##aaaa# ##    6     (v, #, n, #, 1, (|Q|-1, a, |Q|, a, 1))      (12, a, 13, a, 1)
  h     ##aaaa# N#    6     (n, #, h, N, 0)   
  \end{verbatim}
}

During this execution, $\mathfrak{Q}(x)$ replaces instruction 
{\small \verb|(8, #, x, #, 0)|} with {\small \verb|(8, #, y, #, 1)|}.  
Meta instruction  
{\small \verb|(w, a, |$|$\verb|Q|$|$\verb|, a, 1, (|$|$\verb|Q|$|$\verb|-1, a, |$|$\verb|Q|$|$\verb|, a, 1) )| } 
executes and replaces {\small \verb|(8, a, x, a, 0)|} 
with new instruction 
{\small \verb|(8, a, 9, | 

\noindent \verb|a, 1)|}. 
Also, simple meta instruction  
{\small \verb|(|$|$\verb|Q|$|$\verb|-1, a, x, a, 0)|  }
temporarily added instructions 
{\small \verb|(9, a, x, a, 0)|},  
{\small \verb|(10, a, x, a, 0)|}, 
and 
{\small \verb|(11, a, x, a, 0)|}.   

Subsequently, these new instructions were replaced by  
{\small \verb| (9, a, 10, a, 1)|}, 
{\small \verb| (10, a, 11, a, 1)|}, and 
{\small \verb|(11, a, 12, | 

\noindent \verb|a, 1)|}, 
respectively.    Similarly, simple meta instruction 
{\small  \verb|(|$|$\verb|Q|$|$\verb|-1, #, x, #, 0)|} 
added instruction {\small \verb|(12, #, x, #, 0)|}
and this instruction was replaced by instruction 
{\small  \verb|(12, #, n, #, 1)|}.
Lastly, instructions 
{\small  \verb| (9, #, y, #, 1)|}, 
{\small  \verb| (10, #, | 

\noindent \verb|n, #, 1)|},
{\small  \verb|(11, #, y, #, 1)|}, and 
{\small  \verb|(12, a, 13,|} 
{\small  \verb|a, 1)|}  were added.

Furthermore, five new states 
{\small  \verb|9|}, {\small  \verb|10|}, {\small \verb|11|}, 
{\small \verb|12|}  and {\small  \verb|13|} were 
added to $Q$.  After this computation halts, the machine states are 
{\small  $Q =$ $\{$\verb|0|, \verb|h|,  \verb|n|, \verb|y|, \verb|t|, \verb|v|, \verb|w|, \verb|x|, \verb|8|, 
\verb|9|, \verb|10|, \verb|11|, \verb|12|, \verb|13|$\}$} and the resulting 
ex-machine evolved to  has 24 instructions.  It is called $\mathfrak{Q}(11010$ $x)$.


\begin{Machine}\label{instructions:Q_x_11010}  \hskip 1pc  $\mathfrak{Q}(11010$ $x)$

{  \footnotesize
   \begin{verbatim} 
    (0, #, 8, #, 1)
    
    (y, #, h, Y, 0)
    (n, #, h, N, 0)

    (x, #, x, 0)
    (x, a, t, 0)
    
    (x, 0, v, #, 0,  (|Q|-1, #, n, #, 1) )
    (x, 1, w, #, 0,  (|Q|-1, #, y, #, 1) )

    (t, 0, w, a, 0,  (|Q|-1, #, n, #, 1) )
    (t, 1, w, a, 0,  (|Q|-1, #, y, #, 1) )

    (v, #, n, #, 1,  (|Q|-1, a, |Q|, a, 1) )

    (w, #, y, #, 1, (|Q|-1, a, |Q|, a, 1) )
    (w, a, |Q|, a, 1, (|Q|-1, a, |Q|, a, 1) )

    (|Q|-1, a, x, a, 0)
    (|Q|-1, #, x, #, 0)

    (8, #, y, #, 1)
    (8, a, 9, a, 1)

    (9, #, y, #, 1)
    (9, a, 10, a, 1)

    (10, #, n, #, 1)
    (10, a, 11, a, 1)

    (11, #, y, #, 1)
    (11, a, 12, a, 1)

    (12, #, n, #, 1)
    (12, a, 13, a, 1)
  \end{verbatim}
}

\end{Machine}

\noindent   New instructions {\small  \verb|(8, #, y, #, 1)| },
{\small  \verb|(9, #, y, #, 1)|}, and  {\small  \verb|(11, #, y, #, 1)|}  
help $\mathfrak{Q}(11010$ $x)$ compute that the 
empty string, {\small \verb|a| } and {\small \verb|aaa| } are in its language, respectively.  
Similarly, the new instructions {\small  \verb|(10, #, n, #, 1)|} 
and 
{\small \verb|(12, #, n, #, 1)| }  
help  
$\mathfrak{Q}(11010$ $x)$ compute that \verb|aa| and 
 \verb|aaaa|   
are not in its language, respectively.

The zeroeth, first, and third $1$ in $\mathfrak{Q}(11010$ $x)$'s name indicate that 
the empty string,   \verb|a|  and   \verb|aaa| 
are in  $\mathfrak{Q}(11010$ $x)$'s language.  
The second and fourth 0 indicate strings  \verb|aa|  
and   \verb|aaaa|  are not in its language.  
The symbol $x$ indicates that all strings {\verb|a|}$^n$
with $n \ge 5$ have not yet been determined 
whether they are in $\mathfrak{Q}(11010$ $x)$'s language or not in its language.

Starting at state 0, ex-machine $\mathfrak{Q}(11010$ $x)$ computes that the empty string is 
in $\mathfrak{Q}(11010$ $x)$'s language.    

{ \scriptsize
  \begin{verbatim} 
STATE      TAPE        TAPE HEAD        INSTRUCTION EXECUTED  
  8        ## ###          1            (0, #, 8, #, 1)   
  y        ### ##          2            (8, #, y, #, 1)   
  h        ### Y#          2            (y, #, h, Y, 0)   
  \end{verbatim}
}

\noindent Starting at state 0, ex-machine $\mathfrak{Q}(11010$ $x)$ computes that string \verb|a| is  in 
$\mathfrak{Q}(11010$ $x)$'s language.    

{ \scriptsize
  \begin{verbatim} 
STATE      TAPE        TAPE HEAD        INSTRUCTION EXECUTED  
  8        ## a###         1            (0, #, 8, #, 1)   
  9        ##a ###         2            (8, a, 9, a, 1)   
  y        ##a# ##         3            (9, #, y, #, 1)   
  h        ##a# Y#         3            (y, #, h, Y, 0)   
  \end{verbatim}
}

\noindent Starting at state 0, $\mathfrak{Q}(11010$ $x)$ computes that string  \verb|aa| is not in 
$\mathfrak{Q}(11010$ $x)$'s language.

{ \scriptsize
  \begin{verbatim} 
STATE     TAPE          TAPE HEAD      INSTRUCTION EXECUTED                                                          
  8       ## aa###          1          (0, #, 8, #, 1)   
  9       ##a a###          2          (8, a, 9, a, 1)   
 10       ##aa ###          3          (9, a, 10, a, 1)   
  n       ##aa# ##          4          (10, #, n, #, 1)   
  h       ##aa# N#          4          (n, #, h, N, 0)   
  \end{verbatim}
}

\noindent Starting at state 0, $\mathfrak{Q}(11010$ $x)$ computes that   \verb|aaa|
is in $\mathfrak{Q}(11010$ $x)$'s language.

{ \scriptsize
  \begin{verbatim} 
STATE     TAPE          TAPE HEAD      INSTRUCTION EXECUTED                                                          
  8       ## aaa###         1          (0, #, 8, #, 1)   
  9       ##a aa###         2          (8, a, 9, a, 1)   
 10       ##aa a###         3          (9, a, 10, a, 1)   
 11       ##aaa ###         4          (10, a, 11, a, 1)   
  y       ##aaa# ##         5          (11, #, y, #, 1)   
  h       ##aaa# Y#         5          (y, #, h, Y, 0)   
  \end{verbatim}
}

\noindent  Starting at state 0, $\mathfrak{Q}(11010$ $x)$ computes that   \verb|aaaa|  
is not in $\mathfrak{Q}(11010$ $x)$'s language.

{ \scriptsize
  \begin{verbatim} 
STATE     TAPE          TAPE HEAD      INSTRUCTION EXECUTED                                                          
  8       ## aaaa####       1          (0, #, 8, #, 1)   
  9       ##a aaa####       2          (8, a, 9, a, 1)   
 10       ##aa aa####       3          (9, a, 10, a, 1)   
 11       ##aaa a####       4          (10, a, 11, a, 1)   
 12       ##aaaa ####       5          (11, a, 12, a, 1)   
  n       ##aaaa# ###       6          (12, #, n, #, 1)   
  h       ##aaaa# N##       6          (n, #, h, N, 0)   
  \end{verbatim}
}

Note that for each of these executions, no new states were added and no instructions 
were added or replaced. Thus, for all subsequent executions, ex-machine 
$\mathfrak{Q}(11010$ $x)$ computes that the empty string,   \verb|a| and 
 \verb|aaa|  are in its language.  Similarly, strings  \verb|aa| 
and  \verb|aaaa|
are not in $\mathfrak{Q}(11010$ $x)$'s language for all subsequent 
executions of $\mathfrak{Q}(11010$ $x)$.

Starting at state 0, we examine an execution of 
ex-machine $\mathfrak{Q}(11010$ $x)$ on input tape  \verb|# #aaaaaaa##|.

\medskip 

{  \scriptsize
   \begin{verbatim} 
STATE   TAPE            HEAD   INSTRUCTION EXECUTED                        NEW INSTRUCTION 
  8     ## aaaaaaa###     1    (0, #, 8, #, 1)   
  9     ##a aaaaaa###     2    (8, a, 9, a, 1)   
 10     ##aa aaaaa###     3    (9, a, 10, a, 1)   
 11     ##aaa aaaa###     4    (10, a, 11, a, 1)   
 12     ##aaaa aaa###     5    (11, a, 12, a, 1)   
 13     ##aaaaa aa###     6    (12, a, 13, a, 1)   
  x     ##aaaaa aa###     6    (13, a, x, a, 0)   
  t     ##aaaaa 0a###     6    (x, a, t, 0_qr, 0)   
  w     ##aaaaa aa###     6    (t, 0, w, a, 0, (|Q|-1, #, n, #, 1) )      (13, #, n, #, 1)
 14     ##aaaaaa a###     7    (w, a, |Q|, a, 1, (|Q|-1, a, |Q|, a, 1))   (13, a, 14, a, 1)
  x     ##aaaaaa a###     7    (14, a, x, a, 0)                           (14, a, x, a, 0)  
  t     ##aaaaaa 1###     7    (x, a, t, 1_qr, 0)   
  w     ##aaaaaa a###     7    (t, 1, w, a, 0, (|Q|-1, #, y, #, 1) )      (14, #, y, #, 1)
 15     ##aaaaaaa ###     8    (w, a, |Q|, a, 1, (|Q|-1, a, |Q|, a, 1))   (14, a, 15, a, 1)
  x     ##aaaaaaa ###     8    (15, #, x, #, 0)                           (15, #, x, #, 0)  
  x     ##aaaaaaa 1##     8    (x, #, x, 1_qr, 0)   
  w     ##aaaaaaa ###     8    (x, 1, w, #, 0, (|Q|-1, #, y, #, 1))       (15, #, y, #, 1)
  y     ##aaaaaaa# ##     9    (w, #, y, #, 1, (|Q|-1, a, |Q|, a, 1))     (15, a, 16, a, 1)
  h     ##aaaaaaa# Y#     9    (y, #, h, Y, 0)   
   \end{verbatim}
}

\noindent Overall, during this execution ex-machine $\mathfrak{Q}(11010$ $x)$ evolved to 
ex-machine $\mathfrak{Q}(11010$ $011$ $x)$.  Three quantum random instructions were executed.   
The first quantum random instruction  {\small \verb|(x, a, t, 0)|} measured a 0 so it is shown above as 
{\small \verb|(x, a, t, 0_qr, 0)|}.  The result of this 0 bit measurement adds the instruction 
{\small \verb|(13, #, n, #, 1)|}, so that in all subsequent executions of ex-machine 
$\mathfrak{Q}(11010$ $011$ $x)$, string {\verb|a|}${^5}$ is not in  $\mathfrak{Q}(11010$ $011$ $x)$'s language.  
Similarly, the second quantum random instruction {\small  \verb|(x, a, t, 0)|} measured a 1 so it is shown above as 
{\small \verb|(x, a, t, 1_qr, 0)|}.  The result of this 1 bit measurement adds the instruction 
{\small \verb|(14, #, y, #, 1)|}, so that in all subsequent executions, 
 string {\verb|a|}${^6}$ is in  $\mathfrak{Q}(11010$ $011$ $x)$'s language.  
 Finally, the third quantum random instruction {\small \verb|(x, #, x, 0)|} 
 measured a 1 so it is shown above as 
{\small \verb|(x, #, x, 1_qr, 0)|}.   The result of this 1 bit measurement adds the instruction 
{\small \verb|(15, #, y, #, 1)|}, so that in all subsequent executions, string {\verb|a|}${^7}$ is in  
$\mathfrak{Q}(11010$ $011$ $x)$'s language.  

\smallskip   

 Lastly, starting at state 0, we examine a distinct execution of $\mathfrak{Q}(11010$ $x)$ on 
 input tape \verb|# #aaaaaaa##|.  A distinct execution of $\mathfrak{Q}(11010$ $x)$ evolves to ex-machine  
  $\mathfrak{Q}(11010$ $000$ $x)$.


{ \scriptsize
  \begin{verbatim} 
STATE  TAPE           HEAD   INSTRUCTION EXECUTED                         NEW INSTRUCTION 
  8    ## aaaaaaa###    1    (0, #, 8, #, 1)   
  9    ##a aaaaaa###    2    (8, a, 9, a, 1)   
 10    ##aa aaaaa###    3    (9, a, 10, a, 1)   
 11    ##aaa aaaa###    4    (10, a, 11, a, 1)   
 12    ##aaaa aaa###    5    (11, a, 12, a, 1)   
 13    ##aaaaa aa###    6    (12, a, 13, a, 1)   
  x    ##aaaaa aa###    6    (13, a, x, a, 0)   
  t    ##aaaaa 0a###    6    (x, a, t, 0_qr, 0)   
  w    ##aaaaa aa###    6    (t, 0, w, a, 0, (|Q|-1, #, n, #, 1))        (13, #, n, #, 1)
 14    ##aaaaaa a###    7    (w, a, |Q|, a, 1, (|Q|-1, a, |Q|, a, 1))    (13, a, 14, a, 1)
  x    ##aaaaaa a###    7    (14, a, x, a, 0)                            (14, a, x, a, 0) 
  t    ##aaaaaa 0###    7    (x, a, t, 0_qr, 0)   
  w    ##aaaaaa a###    7    (t, 0, w, a, 0, (|Q|-1, #, n, #, 1))        (14, #, n, #, 1)
 15    ##aaaaaaa ###    8    (w, a, |Q|, a, 1, (|Q|-1, a, |Q|, a, 1))    (14, a, 15, a, 1)
  x    ##aaaaaaa ###    8    (15, #, x, #, 0)                            (15, #, x, #, 0)
  x    ##aaaaaaa 0##    8    (x, #, x, 0_qr, 0)   
  v    ##aaaaaaa ###    8    (x, 0, v, #, 0, (|Q|-1, #, n, #, 1))        (15, #, n, #, 1)
  n    ##aaaaaaa# ##    9    (v, #, n, #, 1, (|Q|-1, a, |Q|, a, 1))      (15, a, 16, a, 1)
  h    ##aaaaaaa# N#    9    (n, #, h, N, 0)   
  \end{verbatim}
}

\medskip 

Based on our previous examination of ex-machine $\mathfrak{Q}(x)$ evolving to $\mathfrak{Q}(11010$  $x)$
and then subsequently $\mathfrak{Q}(11010$ $x)$ evolving to $\mathfrak{Q}(11010$ $011$ $x)$, ex-machine  
\ref{defn:Q_a0_a1_am_listing}  specifies $\mathfrak{Q}(a_0 a_1 \dots a_m$ $x)$ 
in terms of initial states and  initial instructions.  

\medskip 

\begin{Machine}\label{defn:Q_a0_a1_am_listing}  \hskip 1pc  $\mathfrak{Q}(a_0 a_1 \dots a_m$ $x)$

\smallskip  

\noindent  Let $m \in \mathbb{N}$.  
 Set $Q = \{$\verb|0|, \verb|h|,  \verb|n|, \verb|y|, \verb|t|, \verb|v|, \verb|w|, \verb|x|, \verb|8|, 
             \verb|9|, \verb|10|,  $\dots$ $m+8, m+9$ $\}$.  For $0 \le i \le m$, each $a_i$ is 0 or 1.  
ex-machine $\mathfrak{Q}(a_0 a_1 \dots a_m$ $x)$'s instructions are shown below. 
Symbol $b_8 =$ \verb|y | if $a_0 = 1$.  Otherwise, symbol 
    $b_8 =$ \verb|n | if $a_0 = 0$.  Similarly, symbol 
    $b_9 =$ \verb|y | if $a_1 = 1$.  Otherwise, symbol 
    $b_9 =$ \verb|n | if $a_1 = 0$.  
And so on until reaching the second to the last instruction 
{\small  \verb|(|$m+8$\verb|, #,| $b_{m+8}$\verb|, #, 1)|}, 
 symbol $b_{m+8} =$ \verb|y | if $a_m = 1$.  
Otherwise, symbol $b_{m+8} =$ \verb|n | if $a_m = 0$.

\end{Machine}


{ \footnotesize
  \begin{verbatim} 
     (0, #, 8, #, 1)
    
     (y, #, h, Y, 0)
     (n, #, h, N, 0)

     (x, #, x, 0)
     (x, a, t, 0)
    
     (x, 0, v, #, 0,  (|Q|-1, #, n, #, 1) )
     (x, 1, w, #, 0,  (|Q|-1, #, y, #, 1) )

     (t, 0, w, a, 0,  (|Q|-1, #, n, #, 1) )
     (t, 1, w, a, 0,  (|Q|-1, #, y, #, 1) )

     (v, #, n, #, 1,  (|Q|-1, a, |Q|, a, 1) )

     (w, #, y, #, 1, (|Q|-1, a, |Q|, a, 1) )
     (w, a, |Q|, a, 1, (|Q|-1, a, |Q|, a, 1) )

     (|Q|-1, a, x, a, 0)
     (|Q|-1, #, x, #, 0)
\end{verbatim}

\medskip 

    \verb|  (8, #,| $b_{8}$\verb|, #, 1)|

    \verb|  (8, a, 9, a, 1)|

    \medskip 

    \verb|  (9, #,| $b_{9}$\verb|, #, 1)|

    \verb|  (9, a, 10, a, 1)|

    \medskip 

    \verb|  (10, #,| $b_{10}$\verb|, #, 1)|

    \verb|  (10, a, 11, a, 1)|
}

\medskip 

 \hskip  0.8pc    . . .

 \medskip 

{ \footnotesize

    \verb|  (|$i+8$\verb|, #,| $b_{i+8}$\verb|, #, 1)|

    \verb|  (|$i+8$\verb|, a,| $i+9$\verb|, a, 1)|

}

\medskip

\hskip 0.8pc    . . .

\medskip 

{ \footnotesize

    \verb|  (|$m+7$\verb|, #,| $b_{m+7}$\verb|, #, 1)|

    \verb|  (|$m+7$\verb|, a,| $m+8$\verb|, a, 1)|

    \medskip 

    \verb|  (|$m+8$\verb|, #,| $b_{m+8}$\verb|, #, 1)|

    \verb|  (|$m+8$\verb|, a,| $m+9$\verb|, a, 1)|

}



\begin{lem}\label{lemma:Q_a0_a1_am}

\noindent Whenever $i$ satisfies $0 \le i \le m$, string 
{\small  {\verb|a|}$^{i}$ } is in 
$\mathfrak{Q}(a_0 a_1 \dots a_m$ $x)$'s language 
if $a_i = 1$;  string {\small  {\verb|a|}$^{i}$ } is not in $\mathfrak{Q}(a_0 a_1 \dots a_m$ $x)$'s language if $a_i = 0$.  
Whenever $n > m$, it has not yet been determined whether string 
{\small  {\verb|a|}$^{n}$ }
is in 
$\mathfrak{Q}(a_0 a_1 \dots a_m$ $x)$'s language or not in its language.

\end{lem}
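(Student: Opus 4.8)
The plan is to prove both assertions by directly simulating $\mathfrak{Q}(a_0 a_1 \dots a_m\ x)$, started in state $0$ on a valid initial tape \verb|#| \verb|#|{\verb|a|}$^{i}$\verb|#|, and reading off the halting configuration; by the unique state, scanning symbol condition the execution is deterministic except at quantum random instructions, so each such simulation is unambiguous. First I would dispose of the empty-string case $i=0$: instruction \verb|(0, #, 8, #, 1)| puts the machine in state $8$ scanning a blank; the instruction of ex-machine~\ref{defn:Q_a0_a1_am_listing} whose first two coordinates are state $8$ and \verb|#| then puts it in state $b_8$ still scanning \verb|#|; and then \verb|(y, #, h, Y, 0)| (when $b_8 =$ \verb|y|, i.e.\ $a_0=1$) or \verb|(n, #, h, N, 0)| (when $b_8 =$ \verb|n|, i.e.\ $a_0=0$) writes \verb|Y| or \verb|N| and halts. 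Hence the empty string is in the language exactly when $a_0=1$.

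For $1\le i\le m$ I would induct on the number of \verb|a|'s already scanned. After \verb|(0, #, 8, #, 1)| the machine is in state $8$ scanning the first \verb|a|, and for each $k$ with $0\le k\le i-1$ the unique applicable instruction in state $8+k$ scanning \verb|a| moves it to state $9+k$ while rewriting the \verb|a| and stepping right; it is present in ex-machine~\ref{defn:Q_a0_a1_am_listing} because $8\le 8+k\le m+7$. After the $i$-th such step the machine is in state $i+8$ scanning the \verb|#| just right of {\verb|a|}$^{i}$, and since $8\le i+8\le m+8$ the instruction with first coordinates state $i+8$ and \verb|#| applies and moves it to state $b_{i+8}$; one more step (\verb|(y, #, h, Y, 0)| or \verb|(n, #, h, N, 0)|) halts with tape \verb|#|{\verb|a|}$^{i}$\verb|# Y#| when $b_{i+8}=$ \verb|y|, i.e.\ $a_i=1$, and with \verb|#|{\verb|a|}$^{i}$\verb|# N#| when $b_{i+8}=$ \verb|n|, i.e.\ $a_i=0$ --- precisely the halting syntax of definition~\ref{defn:language_computed}. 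No meta or quantum random instruction is executed along this path, so the instruction set is unchanged, the halting output is the same on every execution, and the membership of {\verb|a|}$^{i}$ is well defined and has the claimed value.

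For the last sentence, take $n>m$. The same initial walk applies, using the instructions with first states $8,9,\dots,m+8$; after \verb|(0, #, 8, #, 1)| and these $m+1$ steps the machine is in state $m+9$ --- which equals $|Q|-1$ here, since this ex-machine has $|Q|=m+10$ --- having consumed $m+1$ \verb|a|'s. If $n=m+1$ it is scanning \verb|#|, and the instantiated simple meta instruction \verb|(|$m+9$\verb|, #, x, #, 0)| sends it to state \verb|x| scanning \verb|#|; if $n\ge m+2$ it is scanning \verb|a|, and \verb|(|$m+9$\verb|, a, x, a, 0)| sends it to state \verb|x| scanning \verb|a|. Either way the very next instruction is a quantum random one, \verb|(x, #, x, 0)| or \verb|(x, a, t, 0)|. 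Continuing the trace (exactly as in the displayed run of $\mathfrak{Q}(11010\ x)$ on {\verb|a|}$^{7}$) shows that $\mathfrak{Q}(a_0 \dots a_m\ x)$ then evolves, measuring one quantum random bit per string, to a descendant that decides the strings {\verb|a|}$^{m+1}$ through {\verb|a|}$^{n}$ in order, and that the verdict on {\verb|a|}$^{n}$ is exactly the quantum random bit measured when the head first scans the \verb|#| immediately right of the $n$-th \verb|a| --- namely \verb|Y| if that bit is $1$ and \verb|N| if it is $0$. By axioms~\ref{axiom_qr1} and \ref{axiom_qr2} that bit is $0$ or $1$ with equal probability, independently of all history, so $\mathfrak{Q}(a_0 \dots a_m\ x)$ itself does not determine whether {\verb|a|}$^{n}$ is in its language; the two executions of $\mathfrak{Q}(11010\ x)$ on {\verb|a|}$^{7}$ displayed above, evolving to $\mathfrak{Q}(11010\ 011\ x)$ and to $\mathfrak{Q}(11010\ 000\ x)$, exhibit both verdicts explicitly.

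The individual steps are all routine bookkeeping. The points requiring care are: the off-by-one state indexing (the input {\verb|a|}$^{i}$ drives the machine to state $i+8$, and the deciding instruction has first coordinate $i+8$); checking that the walk instructions and the deciding instructions really appear in ex-machine~\ref{defn:Q_a0_a1_am_listing} for every $i$ with $0\le i\le m$; treating $i=0$ by itself; and --- the only genuine subtlety --- giving ``not yet determined'' a precise meaning, which I would do by identifying the verdict on {\verb|a|}$^{n}$ (for $n>m$) with a single, as-yet-unmeasured quantum random bit governed by axioms~\ref{axiom_qr1} and \ref{axiom_qr2}.
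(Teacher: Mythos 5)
Your proposal is correct and follows essentially the same route as the paper's own proof: a direct simulation showing that for $0 \le i \le m$ the machine walks deterministically through states $8,\dots,i+8$ and the instruction $(i+8, \texttt{\#}, b_{i+8}, \texttt{\#}, 1)$ forces the \texttt{Y}/\texttt{N} verdict according to $a_i$, while for $n > m$ the machine reaches state $|Q|-1$ and the verdicts on \texttt{a}$^{m+1},\dots,$\texttt{a}$^{n}$ are fixed only by the quantum random instructions $(x, \texttt{a}, t, 0)$ and $(x, \texttt{\#}, x, 0)$ as the machine evolves to $\mathfrak{Q}(a_0 \dots a_n\ x)$. Your version just adds explicit bookkeeping (induction, the $i=0$ case, the $|Q|=m+10$ count) and compresses the $n>m$ case analysis by pointing to the displayed trace, which the paper instead spells out by cases on the measured bit.
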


\begin{proof}[Proof]
         When $0 \le i \le m$,  the first consequence follows immediately from the definition of
          {\small  {\verb|a|}$^{i}$ } being in 
          $\mathfrak{Q}(a_0 a_1 \dots a_m$ $x)$'s language and from ex-machine \ref{defn:Q_a0_a1_am_listing}.
           In instruction  
           {\small   \verb|(|$i+8$\verb|, #,| $b_{i+8}$\verb|, #, 1)| }   
           the state value of $b_{i+8}$ is 
            \verb|y|  if $a_i = 1$ and $b_{i+8}$ is 
            \verb|n|  if $a_i = 0$.


          For the indeterminacy of strings {\small  {\verb|a|}$^{n}$ } when $n > m$, ex-machine $\mathfrak{Q}(a_0 \dots a_m$ $x)$
          executes its last instruction
          {\small  \verb|(|$m+8$\verb|, a,| $m+9$\verb|, a, 1)| } when it is scanning the $m$th 
          {\small \verb|a| } in 
          {\small {\verb|a|}$^{n}$}.  Subsequently, for each {\small  \verb|a|} on the tape to the right of 
          {\small  \verb| #|{\verb|a|}$^m$,\verb| |
          } ex-machine $\mathfrak{Q}(a_0  \dots a_m$ $x)$ 
          executes the quantum random instruction {\small  \verb|(x, a, t, 0)|}.

          If the execution of {\small \verb|(x, a, t, 0)| } measures a {\small  \verb|0|}, the two meta instructions 
          {\small   \verb|(t, 0, w, a, 0, (|$|$\verb|Q|$|$\verb|-1, #, n, #, 1) )| }
          and
          {\small   \verb|(w, a, |$|$\verb|Q|$|$\verb|, a, 1, (|$|$\verb|Q|$|$\verb|-1, a, |$|$\verb|Q|$|$\verb|,|  
                    \verb|a, 1) )| 
          }
          are executed.  If the next alphabet symbol to the right is an \verb|a|, then a  
          new standard instruction is executed that is instantiated from the 
          simple meta instruction 
          {\small  \verb|(|$|$\verb|Q|$|$\verb|-1, a, x, a, 0)| }.   
          If the tape head was scanning the last {\small  \verb|a| } in 
          {\small  {\verb|a|}$^{n}$}, 
          then a new standard instruction is executed that is instantiated from 
          the simple meta instruction 
          {\small  \verb|(|$|$\verb|Q|$|$\verb|-1, #, x, #, 0)| }.

          If the execution of {\small \verb|(x, a, t, 0)| } measures a {\small  \verb|1|}, the two meta instructions 
          {\small  \verb|(t, 1, w, a, 0, (|$|$\verb|Q|$|$\verb|-1, #, y, #, 1) )| } and  
          {\small  \verb|(w, a, |$|$\verb|Q|$|$\verb|, a, 1, (|$|$\verb|Q|$|$\verb|-1, a, |$|$\verb|Q|$|$\verb|,|  \verb|a, 1) )| 
          }
          are executed.   If the next alphabet symbol to the right is an 
          {\small \verb|a|}, then a  
          new standard instruction is executed that is instantiated from the 
          simple meta instruction 
          {\small  \verb|(|$|$\verb|Q|$|$\verb|-1, a, x, a, 0)| }.   
          If the tape head was scanning the last {\small \verb|a| } in 
          {\small  {\verb|a|}$^{n}$}, 
          then a new standard instruction is executed that is instantiated from 
          the simple meta instruction 
          {\small  \verb|(|$|$\verb|Q|$|$\verb|-1, #, x, #, 0)|}.

          In this way, for each
          {\small  \verb|a|  } on the tape to the right of 
          {\small  \verb| #|{\verb|a|}$^m$,\verb| | }
          the execution of the quantum random instruction 
          {\small  \verb|(x, a, t, 0)|  } 
          determines whether each string
          {\small  {\verb|a|}$^{m+k}$}, satisfying $1 \le k \le n - m$,  
          is in or not in $\mathfrak{Q}(a_0 a_1 \dots a_n$ $x)$'s language.

          After the execution of 
          {\small  \verb|(|$|$\verb|Q|$|$\verb|-1, #, x, #, 0)| }, 
          the tape head is scanning a blank symbol, so the quantum random instruction
          {\small  \verb|(x, #, x, 0)| } is executed.  If a 
          {\small  \verb|0| } is measured by the quantum random source, 
          the meta instructions  
          {\small  \verb|(x, 0, v, #, 0, (|$|$\verb|Q|$|$\verb|-1, #, n, #, 1) )| } 
          and 
          {\small \verb|(v, #, n, #, 1, (|$|$\verb|Q|$|$\verb|-1, a, |$|$\verb|Q|$|$\verb|,| \verb| a,|  \verb|1) )| } 
          are executed.   
          Then the last instruction executed is  {\small  \verb|(n, #, h, N, 0)| } which indicates 
          that {\small {\verb|a|}$^{n}$} is not in  $\mathfrak{Q}(a_0 a_1 \dots a_n$ $x)$'s  language.

          If the execution of {\small  \verb|(x, #, x, 0)| } measures a 
          {\small \verb|1|}, the meta instructions
          {\small   \verb|(x, 1, w,| \verb|#, 0, (|$|$\verb|Q|$|$\verb|-1, #, y, #, 1) )|
          } and           
          {\small   \verb|(w, #, y, #, 1, (|$|$\verb|Q|$|$\verb|-1, a, |$|$\verb|Q|$|$\verb|,|
                    \verb| a, 1) )| 
          }
          are executed.   Then the last instruction executed is {\small  \verb|(y, #, h, Y, 0)| } which 
          indicates that  {\small  {\verb|a|}$^{n}$ } is in $\mathfrak{Q}(a_0 a_1 \dots a_n$ $x)$'s language.

          During the execution of the instructions,  for each  {\small \verb|a| } on the tape to the right of  
          {\small  \verb| #|{\verb|a|}$^m$,  }
          $\mathfrak{Q}(a_0 a_1 \dots a_m$ $x)$ evolves to $\mathfrak{Q}(a_0 a_1 \dots a_n$ $x)$ 
          according to the specification in ex-machine \ref{defn:Q_a0_a1_am_listing}, where one 
          substitutes $n$ for $m$.
\end{proof}

\begin{rem}\label{rem:Q_x_constructible}

\smallskip

When the binary string $a_0 a_1 \dots a_m$ is presented as input, 
the ex-machine instructions for $\mathfrak{Q}(a_0 a_1 \dots a_m$ $x)$, specified in 
ex-machine \ref{defn:Q_a0_a1_am_listing}, are constructible (i.e., can be printed) with a standard machine.  
\end{rem}

\medskip  

\noindent  In contrast with lemma \ref{lemma:Q_a0_a1_am}, 
$\mathfrak{Q}(a_0 a_1 \dots a_m$ $x)$'s instructions 
 are not executable with a standard machine when the input tape 
\verb|# #a|$^i$\verb|#| satisfies $i > m$ because meta and quantum random instructions are required.  Thus, 
remark \ref{rem:Q_x_constructible}  distinguishes the construction of  
$\mathfrak{Q}(a_0 a_1 \dots a_m$ $x)$'s instructions from the execution 
of $\mathfrak{Q}(a_0 a_1 \dots a_m$ $x)$'s  instructions.  


\begin{proof}[Proof]
When given a finite list $(a_0$  \verb| | $a_1$  \verb| | $\dots$ \verb| | $a_m)$, where each $a_i$ is 0 or 1, 
the code listing below constructs $\mathfrak{Q}(a_0 a_1 \dots a_m$ $x)$'s instructions.   
Starting with comment \verb| ;;  Qx_builder.lsp|, \verb| | the code listing is 
expressed in a dialect \cite{mueller} of LISP.   The LISP 
language \cite{mccarthy_symbolic,mccarthy_LISP,mccarthy_math} originated from  
the lambda calculus  \cite{church_unsolvable,church_lambda,kleene}.  
The appendix of \cite{turing36} outlines a proof that the lambda calculus is computationally equivalent 
to a standard machine.  The following 3 instructions print the ex-machine instructions for  
$\mathfrak{Q}(11010$ $x)$, listed in ex-machine  \ref{instructions:Q_x_11010}.  

\medskip 

\noindent {\footnotesize  \verb|(set  'a0_a1_dots_am  (list 1 1 0 1 0) )  |  

\noindent \verb|(set  'Qx_machine  (build_Qx_machine  a0_a1_dots_am) )  |

\noindent \verb|(print_xmachine  Qx_machine) | 
}
\end{proof}



\smallskip 

{  \footnotesize  
\begin{verbatim}
;;  Qx_builder.lsp  
(define (make_qr_instruction  q_in  a_in  q_out   move)
   (list (string q_in) (string a_in)  (string q_out) (string move))  )

(define (make_instruction  q_in  a_in  q_out  a_out  move)
   (list (string q_in) (string a_in) 
         (string q_out) (string a_out) (string move)) )

(define (make_meta_instruction  q_in  a_in  q_out  a_out  move_standard  r_in  b_in  r_out  b_out  move_meta)
     (list (string q_in) (string a_in) 
           (string q_out) (string a_out) (string move_standard)   
           (make_instruction r_in b_in r_out b_out move_meta) )  )

(define  (initial_Qx_machine)
  (list
    (make_instruction  "0"  "#"  "8"  "#"  1)
    (make_instruction  "8"  "#"  "x"  "#"  0)

    (make_instruction  "y"  "#"  "h"  "Y"  0)   ;;  This means string a^i  is in the  language.  
    (make_instruction  "n"  "#"  "h"  "N"  0)   ;;  This means string a^i  is NOT in the language. 
      
    (make_qr_instruction  "x"  "#"  "x"  0)
    (make_qr_instruction  "x"  "a"  "t"  0)
  
    (make_meta_instruction "x"  "0"  "v"  "#"  0   "|Q|-1"  "#"  "n"  "#"  1)
    (make_meta_instruction "x"  "1"  "w"  "#"  0   "|Q|-1"  "#"  "y"  "#"  1)
     
    (make_meta_instruction "t"  "0"  "w"  "a"  0  "|Q|-1"  "#"  "n"  "#"  1)
    (make_meta_instruction "t"  "1"  "w"  "a"  0  "|Q|-1"  "#"  "y"  "#"  1)

    (make_meta_instruction "v"  "#"  "n"  "#"  1  "|Q|-1" "a" "|Q|" "a" 1)
    (make_meta_instruction "w"  "#"  "y"    "#"  1  "|Q|-1" "a" "|Q|" "a" 1)
    (make_meta_instruction "w"  "a"  "|Q|"  "a"  1  "|Q|-1" "a" "|Q|" "a" 1)
      
    (make_instruction "|Q|-1"  "a"  "x"  "a"  0)
    (make_instruction "|Q|-1"  "#"  "x"  "#"  0)
))

(define (add_instruction  instruction  q_list)
   (append  q_list  (list  instruction) )  )

(define  (check_a0_a1_dots_am   a0_a1_dots_am)
   (if (list?  a0_a1_dots_am)
       (dolist  (a_i  a0_a1_dots_am)   
          (if (member a_i  (list 0 1) )
              true
              (begin 
                 (println "ERROR! (build_Qx_machine a0_a1_dots_am). a_i = " a_i)
                 (exit)  
              )
          )
          a0_a1_dots_am  )
       nil   
))
\end{verbatim}

\begin{verbatim}
;;;;;;;;;;;;;;;;   BUILD  MACHINE  Q(a0 a1 . . .  am x)
;;   a0_a1_dots_am  has to be a list of 0's and 1's  or  '() 
(define  (build_Qx_machine  a0_a1_dots_am)
  (let 
    (  (Qx_machine  (initial_Qx_machine))
       (|Q|  8)
       (b_|Q|  nil)
       (ins1  nil)
       (ins2  nil)
    )
    (set  'check  (check_a0_a1_dots_am   a0_a1_dots_am)  )

    ;;  if nil OR check is an empty list, remove instruction  (8, #, x, #, 0)
     (if (or  (not check)  (empty? check) )
        true
        (set 'Qx_machine (append (list (Qx_machine 0)) (rest (rest Qx_machine))))
     )

     (if (list?  a0_a1_dots_am)
         (dolist  (a_i  a0_a1_dots_am)   
            (if (= a_i 1)
                (set  'b_|Q|  "y")
                (set  'b_|Q|  "n") )

            (set  'ins1        (make_instruction |Q| "#"  b_|Q| "#"  1) )
            (set  'Qx_machine  (add_instruction  ins1  Qx_machine) )

            (set  'ins2        (make_instruction |Q| "a"  (+ |Q| 1) "a"  1))
            (set  'Qx_machine  (add_instruction  ins2  Qx_machine) )
            (set '|Q|  (+ |Q| 1) )
         )
      )

      Qx_machine
))
\end{verbatim}


\begin{verbatim}
(define  (print_elements  instruction)
  (let
     (  (idx_ub   (min  4  (-  (length instruction) 1))  )
        (i  0)
     )
     
     (for (i  0  idx_ub)  
        (print  (instruction i) )
        (if  (< i idx_ub)  (print ", "))
     )
))

(define (print_instruction  instruction)
   (print "(")
   (if  (<= (length instruction) 5)
        (print_elements  instruction)
        (begin 
             (print_elements instruction)
             (print ", (")
             (print_elements (instruction 5) )
             (print ") ")  )  )
   (println ")")
)

(define  (print_xmachine  x_machine)
   (println)
   (dolist (instruction  x_machine)
       (print_instruction  instruction))
   (println)
)
\end{verbatim}
}


\begin{defn}

Define $\mathfrak{U}$ as the union of $\mathfrak{Q}(x)$ and all ex-machines 
$\mathfrak{Q}(a_0  \dots a_m$ $x)$ for each  $m \in \mathbb{N}$ and for each $a_0 \dots a_m$ in  $\{0, 1\}^{m+1}$.   
In other words, $$\mathfrak{U} = \big{\{} \mathfrak{Q}(x) \big{\} }  \mbox{\hskip 0.4pc}    \bigcup    \mbox{\hskip 0.8pc}
{\overset{\infty}{\underset{m=0} \bigcup}}  \mbox{\hskip 0.6pc}
{\underset{a_0 \dots a_m \in \{0, 1\}^{m+1}} \bigcup}  \big{\{} \mathfrak{Q}(a_0 a_1 \dots a_m \mbox{\hskip 0.2pc} x) \big{\}}.$$

\end{defn}

\begin{thm}\label{theorem:Q_x_evolves_to_L_f}
Each language $L_f$ in $\mathfrak{L}$ can be computed by the evolving sequence of ex-machines \verb| | 
$\mathfrak{Q}(x)$, \verb|| $\mathfrak{Q}(f(0)$ $x)$,  $\mathfrak{Q}(f(0)f(1)$ $x)$, \verb| | 
$\dots$, \verb| | $\mathfrak{Q}(f(0)f(1)$ $\dots f(n)$ $x)$, $\dots$.  
\end{thm}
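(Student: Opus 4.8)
The plan is to pin down a single explicit evolutionary path of $\mathfrak{Q}(x)$ and check that the machines along it decide membership in $L_f$ correctly, and that each verdict, once reached, is never revised. Fix a function $f \colon \mathbb{N} \to \{0,1\}$ and feed inputs to the evolving machine in order of increasing length: first the empty string on the valid initial tape \verb|# ##|, then \verb|# #|{\verb|a|}$^1$\verb|#|, then \verb|# #|{\verb|a|}$^2$\verb|#|, and so on.

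First I would extract from the proof of Lemma \ref{lemma:Q_a0_a1_am} the precise one-step evolution: when $\mathfrak{Q}(a_0 \dots a_m\, x)$ is run on the input tape \verb|# #|{\verb|a|}$^n$\verb|#| with $n > m$, it executes exactly $n-m$ quantum random instructions, recording bits $c_{m+1}, \dots, c_n$, halts, and has evolved to $\mathfrak{Q}(a_0 \dots a_m c_{m+1} \dots c_n\, x)$ (and likewise $\mathfrak{Q}(x)$ on \verb|# #|{\verb|a|}$^n$\verb|#| records $c_0, \dots, c_n$ and evolves to $\mathfrak{Q}(c_0 \dots c_n\, x)$). Specializing to $m = n-1$: running $\mathfrak{Q}(f(0) \dots f(n-1)\, x)$ on input \verb|# #|{\verb|a|}$^n$\verb|#| triggers a single quantum measurement producing a bit $c_n$; by Lemma \ref{lemma:Q_a0_a1_am} and Definition \ref{defn:language_computed} the machine halts with output \verb|Y| when $c_n = 1$ and with output \verb|N| when $c_n = 0$, and it has evolved to $\mathfrak{Q}(f(0)\dots f(n-1)c_n\, x)$.

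Next I would invoke the quantum random axioms \ref{axiom_qr1} and \ref{axiom_qr2}: a quantum random instruction can return either bit (each finite string of $k$ outcomes occurring with positive probability $2^{-k}$ and hence being attainable), so one may select the execution in which the bit measured at the $n$-th stage is exactly $f(n)$, for every $n$. This realizes the evolutionary path $\mathfrak{Q}(x) \to \mathfrak{Q}(f(0)\, x) \to \mathfrak{Q}(f(0)f(1)\, x) \to \cdots \to \mathfrak{Q}(f(0)\dots f(n)\, x) \to \cdots$ in the sense of Definition \ref{defn:evolving}. Applying Lemma \ref{lemma:Q_a0_a1_am} once more, for every $n$ and every $i \le n$ the machine $\mathfrak{Q}(f(0)\dots f(n)\, x)$ halts on input \verb|# #|{\verb|a|}$^i$\verb|#| with tape \verb|#|{\verb|a|}$^i$\verb|#| \verb|Y#| precisely when $f(i)=1$, i.e.\ precisely when {\verb|a|}$^i \in L_f$ (Definition \ref{defn:language_L_f}); and since running on such a short input adds no state and replaces no instruction, this verdict is unchanged in every descendant further along the path. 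Hence, fixing any string {\verb|a|}$^i$ and choosing any stage $n \ge i$, the evolving sequence returns the correct membership answer, so it computes $L_f$, which is the assertion.

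\emph{Main obstacle.} The step I expect to require the most care is the bookkeeping already quarantined in the first paragraph: verifying from the instruction list of ex-machine \ref{defn:Q_a0_a1_am_listing}, together with the execution semantics of Definitions \ref{defn:standard_instruction}, \ref{defn:qr_instruction}, and \ref{defn:execution_meta_instruction}, that feeding inputs in order of increasing length causes \emph{exactly one} fresh quantum bit to be read per execution, so that the machine's name grows by precisely one symbol per stage and matches the sequence in the statement, and confirming that a verdict symbol $b_{i+8}$ (which is \verb|y| or \verb|n|) is only ever installed, never overwritten, by the meta instructions, so earlier answers are permanent. Everything beyond that is a direct appeal to Lemma \ref{lemma:Q_a0_a1_am} and to the attainability of every finite sequence of quantum outcomes granted by axioms \ref{axiom_qr1} and \ref{axiom_qr2}.
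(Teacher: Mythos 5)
Your proposal is correct and follows essentially the same route as the paper: the paper's proof simply cites ex-machines \ref{machine:meta_a_machine} and \ref{defn:Q_a0_a1_am_listing} together with lemma \ref{lemma:Q_a0_a1_am}, and your argument is just a fleshed-out version of that — using the lemma's evolution mechanism and correctness/permanence of the verdicts, plus axioms \ref{axiom_qr1} and \ref{axiom_qr2} to guarantee that the bit sequence $f(0)f(1)\dots$ is an attainable sequence of quantum outcomes, hence an existing evolutionary path. The extra bookkeeping you flag (one fresh bit per stage when inputs are fed in increasing length, verdict instructions never overwritten) is consistent with the paper's lemma and worked examples.
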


\begin{proof}[Proof]   
The theorem follows from ex-machine \ref{machine:meta_a_machine}, 
ex-machine \ref{defn:Q_a0_a1_am_listing} and lemma \ref{lemma:Q_a0_a1_am}. 
\end{proof}

\begin{cor}
Given function $f : \mathbb{N} \rightarrow \{0, 1\}$, for any arbitrarily large $n$, 
the evolving sequence of ex-machines 
$\mathfrak{Q}(f(0)f(1) \dots f(n)$ $x)$, $\mathfrak{Q}(f(0)f(1)$ 
$\dots$ $f(n) f(n+1)$ $x)$, $\dots$.
computes language $L_f$.  
\end{cor}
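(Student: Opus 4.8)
The plan is to obtain the corollary as an essentially immediate consequence of \thmref{theorem:Q_x_evolves_to_L_f} together with \lemref{lemma:Q_a0_a1_am}, via the observation that the evolutionary path witnessing the theorem has a cofinal tail that begins at any prescribed stage $\mathfrak{Q}(f(0)f(1)\cdots f(n)\,x)$, and that the membership decisions already recorded in that machine's instruction set are frozen for the remainder of the path. Once $\mathfrak{Q}(x)$ has evolved, along the distinguished path of \thmref{theorem:Q_x_evolves_to_L_f}, to $\mathfrak{Q}(f(0)\cdots f(n)\,x)$, this machine is exactly the one specified in ex-machine~\ref{defn:Q_a0_a1_am_listing} with $m=n$ and $a_i=f(i)$; so nothing new must be constructed, and the task reduces to checking that continuing to evolve it reproduces $L_f$ one string at a time.

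First I would record the base case. By \lemref{lemma:Q_a0_a1_am} applied with $m=n$, for every $i$ with $0\le i\le n$ the string $\mathtt{a}^i$ lies in the language of $\mathfrak{Q}(f(0)\cdots f(n)\,x)$ precisely when $f(i)=1$, and that decision is carried by a fixed standard instruction of ex-machine~\ref{defn:Q_a0_a1_am_listing} with source state $i+8$ scanning a blank. An execution on the valid initial tape for $\mathtt{a}^i$ with $i\le n$ uses only these standard instructions and the fixed instructions sending states $\mathtt{y}$ and $\mathtt{n}$ to the halting state, so it neither adds nor replaces any instruction; hence those decisions are already stable for every machine further along the path. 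Next I would carry out the inductive step: for an input $\mathtt{a}^{N}$ with $N>n$, \lemref{lemma:Q_a0_a1_am} shows that $\mathfrak{Q}(f(0)\cdots f(n)\,x)$ reaches state $\mathtt{x}$ and executes the quantum random instruction $(\mathtt{x},\mathtt{a},\mathtt{t},0)$ once for each $\mathtt{a}$ to the right of position $n$, and then the quantum random instruction $(\mathtt{x},\mathtt{\#},\mathtt{x},0)$. Along the distinguished path I select the branch in which the $(n{+}1)$st of these measurements returns $f(n+1)$, the $(n{+}2)$nd returns $f(n+2)$, and so on through $f(N)$; by the evolution described in \lemref{lemma:Q_a0_a1_am}, the machine then halts with the correct $\mathtt{Y}$/$\mathtt{N}$ verdict for $\mathtt{a}^N$ and has become $\mathfrak{Q}(f(0)\cdots f(N)\,x)$, again an instance of ex-machine~\ref{defn:Q_a0_a1_am_listing}. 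Iterating, for every $N$ the path passes through the stage $\mathfrak{Q}(f(0)\cdots f(N)\,x)$, whose verdict on each $\mathtt{a}^i$ with $i\le N$ equals $f(i)$ and, by the base-case argument, is frozen thereafter. This is precisely what it means for the evolving sequence to compute $L_f$ in the stabilizing sense of Definition~\ref{defn:language_computed}; and since every such execution starts from finite initial conditions, Remark~\ref{rem:finite_conditions} guarantees that only finitely many states, instructions, and quantum measurements are ever consumed.

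The one delicate point --- the part I expect to need the most care in the write-up --- is conceptual, not computational: the statement is existential. A \emph{typical} execution of $\mathfrak{Q}(f(0)\cdots f(n)\,x)$ on a longer input will decide $\mathtt{a}^{n+1}$ according to whatever random bit it happens to measure, which need not be $f(n+1)$; so I must be explicit that the corollary asserts the existence of a single fixed evolutionary path --- equivalently, one fixed infinite sequence of measurement outcomes extending $f(0)\cdots f(n)$ --- that simultaneously forces every finite stage to agree with $L_f$, and that ``computes $L_f$'' is read in the stage-wise stabilization sense discussed after Definition~\ref{defn:language_computed}. Under quantum random axioms~\ref{axiom_qr1} and~\ref{axiom_qr2}, each length-$k$ prefix of this path is realized with probability $2^{-k}>0$, which is what legitimizes the phrasing ``can be computed''; I would flag this explicitly so the reader does not misread the corollary as a probability-one claim.
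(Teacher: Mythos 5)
Your proposal is correct and follows the same route the paper takes: the paper treats this corollary as an immediate consequence of Theorem~\ref{theorem:Q_x_evolves_to_L_f}, whose proof rests on Lemma~\ref{lemma:Q_a0_a1_am} and the specification of ex-machine~\ref{defn:Q_a0_a1_am_listing}, exactly the ingredients you use. You simply spell out what the paper leaves implicit — the frozen verdicts for $i\le n$, the choice of the measurement branch matching $f(n+1), f(n+2),\dots$, and the existential (not probability-one) reading of ``computes'' — which is a faithful elaboration rather than a different argument.
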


\begin{cor}\label{corollary:Q_x_finite_resources}
 Moreover, for each $n$, all ex-machines $\mathfrak{Q}(x)$, \verb| | 
 $\mathfrak{Q}(f(0) x)$,  \verb| | 
 $\mathfrak{Q}(f(0)f(1)$ $x)$, \verb| | 
 $\dots$, \verb|| 
 $\mathfrak{Q}(f(0)f(1) \dots$ $f(n)$ $x)$ combined 
have used only a finite amount of tape, finite number of states, finite number of instructions, 
finite number of executions of instructions and only a finite amount of quantum random information  
measured by the quantum random instructions.  
\end{cor}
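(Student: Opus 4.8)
The plan is to derive this purely from bookkeeping already available in the excerpt: Remark~\ref{rem:finite_conditions}, which says that finite initial conditions are preserved for any finite number of executed instructions, together with the explicit halting analysis carried out in the proof of Lemma~\ref{lemma:Q_a0_a1_am}. First I would note that $\mathfrak{Q}(x)$, as listed in ex-machine~\ref{machine:meta_a_machine}, satisfies Definition~\ref{defn:finite_initial_conditions}: it has $|A| = 6$, $|Q| = 9$, $|\mathcal{I}| = 15$, and each valid input tape is finitely bounded. Each subsequent machine $\mathfrak{Q}(f(0)\dots f(k)\,x)$ in the evolutionary sequence again has a finite state set, a finite instruction set, and starts on a finitely bounded tape --- it inherits the finite $Q$ and $\mathcal{I}$ left behind by its ancestor's halting computation, per Definition~\ref{defn:evolving} --- so Remark~\ref{rem:finite_conditions} applies verbatim to every computation in the sequence.

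Second, I would argue that each computation in the sequence actually halts; this is what upgrades ``finitely many instructions executed so far'' to ``finitely many instructions executed in the completed computation.'' The evolution step from $\mathfrak{Q}(f(0)\dots f(k-1)\,x)$ to $\mathfrak{Q}(f(0)\dots f(k)\,x)$ is a single run of the ancestor on the tape holding the block $\mathtt{a}^{k}$; the execution trace dissected in the proof of Lemma~\ref{lemma:Q_a0_a1_am} shows that such a run sweeps rightward across the block of $\mathtt{a}$'s, performing a bounded amount of work at each tape square, then reads a blank symbol and after a bounded number of further steps enters the halting state $h$. Hence this run executes $O(k)$ instructions, measures $O(k)$ quantum random bits (in the step-by-step path, in fact the single new bit $f(k)$), adds at most one new state and one new instruction per quantum measurement, and leaves a finitely bounded tape. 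So each individual computation in the sequence consumes only finitely many resources of each of the five kinds named in the corollary.

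Finally, I would sum over the sequence. For fixed $n$ there are only $n+1$ evolution steps leading from $\mathfrak{Q}(x)$ to $\mathfrak{Q}(f(0)\dots f(n)\,x)$; the terminal machine, by the listing in ex-machine~\ref{defn:Q_a0_a1_am_listing}, has $n+10$ states and $2n+16$ instructions, and each intermediate machine has fewer of each. A finite union of finite sets is finite and a finite sum of finite nonnegative integers is finite, so the total tape used, the total number of states and instructions ever present, the total number of executed instructions, and the total number of quantum random measurements are each finite. I expect no genuine obstacle here: the only step requiring care is the appeal to Lemma~\ref{lemma:Q_a0_a1_am} to guarantee that each stage truly halts instead of entering an unbounded loop, since Remark~\ref{rem:finite_conditions} by itself only bounds resources within a prefix of an execution of prescribed finite length.
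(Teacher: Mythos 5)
Your proposal is correct and follows essentially the same route as the paper, whose proof simply cites remark \ref{rem:finite_conditions}, definition \ref{defn:evolving}, and the specification of ex-machine \ref{defn:Q_a0_a1_am_listing} and then notes that a finite sum of finite resources is finite. Your extra step of invoking the halting analysis in lemma \ref{lemma:Q_a0_a1_am} to certify that each stage of the evolution terminates is a reasonable elaboration of what the paper leaves implicit in the definition of evolving (which already presupposes a halt at each stage), not a different argument.
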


\begin{proof}[Proof]
For each $n$, the finite use of computational resources follows immediately from remark \ref{rem:finite_conditions},  
definition \ref{defn:evolving}  
and the specification of ex-machine \ref{defn:Q_a0_a1_am_listing}.   
\end{proof}

A set $X$ is called countable if there exists a bijection between $X$ and $\mathbb{N}$.    
Since the set of all Turing machines is countable and each Turing machine only 
recognizes a single language 
{\it most  (in the sense of Cantor's heirarchy of infinities \cite{cantor_transfinite}) 
languages $L_f$  are not computable with a Turing machine}.  More precisely, 
the set of languages $L_f$  computable with a Turing machine is a countable set,         
while the set of all languages $\mathfrak{L}$ is an uncountable set.  

\smallskip 

For each non-negative integer $n$, define the language tree 
$\mathcal{L}(a_0 a_1 \dots a_n)$ 
$= \{ L_f : f \in \{0, 1\}^{\mathbb{N}}$ \verb|and | $f(i) = a_i$ \verb| for | 
$i$ \verb| satisfying | $0 \le i \le n \}$.
Define the corresponding subset of  $\{0, 1\}^{\mathbb{N}}$ as  
$\mathcal{S}(a_0 a_1 \dots a_n)$ 
$= \{ f \in \{0, 1\}^{\mathbb{N}}:$ $f(i) = a_i$ \verb| for | 
$i$ \verb| satisfying | $0 \le i \le n \}$.  
Let $\Psi$ denote this 1-to-1 correspondence, where   
$\mathfrak{L}$ ${\overset{\Psi}\leftrightarrow}$ $\{0, 1\}^{\mathbb{N}}$  and 
$\mathcal{L}(a_0 a_1 \dots a_n) \mbox{\hskip 0.5pc}  {\overset{\Psi}\leftrightarrow}  \mbox{\hskip 0.5pc}
\mathcal{S}(a_0 a_1 \dots a_n).$

Since the two quantum random axioms \ref{axiom_qr1} and \ref{axiom_qr2} 
are satisfied, each finite path $f(0)f(1) \dots f(n)$ is equally likely and 
there are $2^{n+1}$ of these paths.  Thus, each path of length $n+1$ has 
probability $2^{-(n+1)}$.  These uniform probabilities on finite strings of the same length 
can be extended to the Lebesgue measure $\mu$ on 
probability space $\{0, 1\}^{\mathbb{N}}$ \hskip 0.2pc \cite{feller_vol1,feller_vol2}.   
Hence, each subset $\mathcal{S}(a_0 a_1 \dots a_n)$ has measure $2^{-(n+1)}$.  
That is, $\mu \big{(} \mathcal{S}(a_0 a_1 \dots a_n) \big{)} = 2^{-(n+1)}$   and  
$\mu(\{0, 1\}^{\mathbb{N}}) = 1$.  Via the $\Psi$ correspondence between each language tree
$\mathcal{L}(a_0 a_1 \dots a_n)$ and subset $\mathcal{S}(a_0 a_1 \dots a_n)$, 
uniform probability measure  $\mu$ induces a uniform probability measure $\nu$ on $\mathfrak{L}$, 
where $\nu \big{(} \mathcal{L}(a_0 a_1 \dots a_n) \big{)} = 2^{-(n+1)}$ 
and $\nu( \mathfrak{L} ) = 1$.

\begin{thm}\label{thm:L_f_incomputable_measure_1}
For functions $f : \mathbb{N} \rightarrow \{0, 1\}$, the probability that language $L_f$ is Turing 
incomputable has measure 1 in  $(\nu, \mathfrak{L})$.  
\end{thm}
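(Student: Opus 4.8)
The plan is to show that the set of functions $f$ for which $L_f$ is Turing computable forms a $\mu$-null subset of $\{0,1\}^{\mathbb{N}}$, and then to transport this fact through the correspondence $\Psi$ (defined just before the theorem) to conclude that the Turing-incomputable languages have full measure in $(\nu,\mathfrak{L})$. First I would isolate the ``bad'' set: let $\mathcal{D}\subseteq\mathfrak{L}$ be the collection of languages $L_f$ that are Turing computable, and let $C=\Psi(\mathcal{D})\subseteq\{0,1\}^{\mathbb{N}}$ be the corresponding set of functions. Since the set of all Turing machines is countable (as already noted in this section) and each Turing machine recognizes at most one language, and since $f\mapsto L_f$ is a bijection from $\{0,1\}^{\mathbb{N}}$ onto $\mathfrak{L}$, the set $\mathcal{D}$ is countable; hence so is $C$.

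Next I would show that every singleton $\{f\}$ in $\{0,1\}^{\mathbb{N}}$ is $\mu$-null. Writing $\{f\}=\bigcap_{n\in\mathbb{N}}\mathcal{S}(f(0)f(1)\dots f(n))$ as a decreasing intersection of measurable cylinder sets with $\mu\big(\mathcal{S}(f(0)\dots f(n))\big)=2^{-(n+1)}$, continuity of $\mu$ from above gives $\mu(\{f\})=\lim_{n\to\infty}2^{-(n+1)}=0$. In particular each singleton is measurable, so $C$ is measurable as a countable union of singletons, and countable additivity yields $\mu(C)=\sum_{f\in C}\mu(\{f\})=0$.

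Finally I would push this forward through $\Psi$. Because $\nu$ is, by construction, the probability measure on $\mathfrak{L}$ induced from $\mu$ via $\Psi$ --- it agrees with $\mu$ on every language tree $\mathcal{L}(a_0\dots a_n)$, and these trees generate the ambient $\sigma$-algebra --- the map $\Psi$ is a measure isomorphism, so $\nu(\mathcal{D})=\mu(C)=0$. Hence $\nu(\mathfrak{L}\setminus\mathcal{D})=1-\nu(\mathcal{D})=1$, and $\mathfrak{L}\setminus\mathcal{D}$ is exactly the set of languages $L_f$ that are Turing incomputable, which is the claim.

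The step requiring the most care is not any estimate but the bookkeeping: one must be sure that Turing-computability genuinely singles out only countably many of the $L_f$ (this rests on the countability of Turing machines together with the fact that a fixed machine fixes at most one language), and that $\nu$ really is the pushforward of $\mu$ under $\Psi$, so that $\mu$-null sets map to $\nu$-null sets. The latter should be checked against the definition of $\nu$ on language trees, but it is essentially forced by the uniqueness clause of the Carathéodory extension already invoked to build $\mu$. Everything else is routine measure theory: singletons are null, and a countable union of null sets is null.
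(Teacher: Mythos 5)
Your proposal is correct and follows essentially the same route as the paper: the paper's proof likewise observes that the Turing machines, and hence the Turing-computable functions $f$, are countable, and then transfers measure zero through the correspondence $\Psi$ to conclude that the computable languages $L_f$ are $\nu$-null in $\mathfrak{L}$. You merely make explicit the measure-theoretic bookkeeping (singletons as decreasing intersections of cylinders of measure $2^{-(n+1)}$, countable additivity, and $\nu$ as the pushforward of $\mu$) that the paper leaves implicit.
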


\begin{proof}   
The Turing machines are countable and therefore the number of functions 
$f: \mathbb{N} \rightarrow \{0, 1\}$ that are Turing computable is countable.  
Hence, via the $\Psi$ correspondence, the Turing computable languages 
$L_f$ have measure 0 in $\mathfrak{L}$.  
\end{proof}

\noindent Moreover, the Martin-L{\"o}f random sequences $f: \mathbb{N} \rightarrow \{0, 1\}$ 
have Lebesgue measure 1 in $\{0, 1\}^{\mathbb{N}}$ and are a proper subset of the Turing 
incomputable sequences.  See \cite{calude,martin_lof}.

\begin{cor}\label{cor:Q_x_not_a_Turing_machine} 
$\mathfrak{Q}(x)$ is not a Turing machine.  
Each ex-machine $\mathfrak{Q}(a_0 a_1$ $\dots a_m$ $x)$ in $\mathfrak{U}$ is not a Turing machine.
\end{cor}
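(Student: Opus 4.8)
\section*{Proof plan for Corollary~\ref{cor:Q_x_not_a_Turing_machine}}

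The plan is to argue by contradiction: assume $\mathfrak{Q}(x)$, or some $\mathfrak{Q}(a_0 a_1 \dots a_m\,x)$ in $\mathfrak{U}$, is a Turing machine (in either the syntactic sense of having only standard instructions or the semantic sense of being computationally equivalent to one), and exhibit an evolutionary path rooted at that ex-machine which computes a Turing incomputable language. This contradicts the facts recalled in the excerpt: no Turing machine computes a Turing incomputable language, and a Turing machine's state set and instruction set never change during or after an execution. The two structural facts I would isolate at the outset are: (i) a Turing machine uses only standard instructions, hence executes no meta instruction and admits no nontrivial evolutionary path; and (ii) a Turing machine decides exactly one language, and that language lies in the countable class of Turing computable languages in $\mathfrak{L}$.

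For the claim about $\mathfrak{Q}(x)$: by Theorem~\ref{thm:L_f_incomputable_measure_1} the functions $f:\mathbb{N}\to\{0,1\}$ with $L_f$ Turing incomputable have $\nu$-measure $1$ in $\mathfrak{L}$, so at least one such $f$ exists; Theorem~\ref{theorem:Q_x_evolves_to_L_f} then produces the evolving sequence $\mathfrak{Q}(x)\rightarrow\mathfrak{Q}(f(0)\,x)\rightarrow\mathfrak{Q}(f(0)f(1)\,x)\rightarrow\cdots$ that computes $L_f$. Since a Turing machine neither evolves nor computes a Turing incomputable language, $\mathfrak{Q}(x)$ is not a Turing machine. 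One can also note directly that the fifteen instructions of ex-machine~\ref{machine:meta_a_machine} include quantum random and meta instructions, so $\mathfrak{Q}(x)$ is not an ex-machine all of whose instructions are standard; but I would foreground the evolutionary-path argument, since it additionally rules out computational equivalence to a Turing machine.

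For the claim about $\mathfrak{Q}(a_0 a_1 \dots a_m\,x)$: fix the prefix $a_0\dots a_m\in\{0,1\}^{m+1}$. The set of $f\in\{0,1\}^{\mathbb{N}}$ extending $a_0\dots a_m$ is uncountable, while the $f$ with $L_f$ Turing computable form a countable set (equivalently, the cylinder of extensions has $\mu$-measure $2^{-(m+1)}>0$ and the Turing computable functions have measure $0$); hence some $f$ extending $a_0\dots a_m$ has $L_f$ Turing incomputable. By Theorem~\ref{theorem:Q_x_evolves_to_L_f} together with lemma~\ref{lemma:Q_a0_a1_am}, the ex-machine $\mathfrak{Q}(a_0\dots a_m\,x)=\mathfrak{Q}(f(0)\dots f(m)\,x)$ sits on the evolving sequence computing $L_f$, so the tail of that sequence, rooted at $\mathfrak{Q}(a_0\dots a_m\,x)$, still computes the Turing incomputable $L_f$; therefore $\mathfrak{Q}(a_0 a_1 \dots a_m\,x)$ is not a Turing machine. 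The main obstacle is conceptual: I must resist phrasing the contradiction as ``$\mathfrak{Q}(x)$ computes a Turing incomputable language,'' since the quantum random instructions mean $\mathfrak{Q}(x)$ decides no fixed language at all, different measurement outcomes sending it to different descendants with different languages. The contradiction has to run through the \emph{evolutionary path}, using both that within the reach of every root $\mathfrak{Q}(a_0\dots a_m\,x)$ there lies a Turing incomputable $L_f$ and the structural impossibility of a Turing machine simultaneously self-modifying and leaving the countable class of Turing computable languages.
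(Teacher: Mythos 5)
Your proposal is correct and follows essentially the same route as the paper's own proof: both arguments contrast the fact that a Turing machine computes exactly one (Turing computable, measure-zero) language with the fact, via Theorem~\ref{theorem:Q_x_evolves_to_L_f} and the measure statements of Theorem~\ref{thm:L_f_incomputable_measure_1}, that $\mathfrak{Q}(x)$ and each $\mathfrak{Q}(a_0 a_1 \dots a_m\,x)$ admit evolutionary paths computing Turing incomputable languages (on sets of measure $1$ and $2^{-(m+1)}$, respectively). Your explicit existence argument for an incomputable $L_f$ extending a fixed prefix, and your care in phrasing the contradiction through the evolutionary path rather than through a single fixed language, simply make precise what the paper's shorter proof asserts.
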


\begin{proof}[Proof]
$\mathfrak{Q}(x)$ can evolve to compute Turing incomputable languages on a set of probability measure 1 
with respect to  $(\nu, \mathfrak{L})$.  Also,  $\mathfrak{Q}(a_0 a_1$ $ \dots a_m$ $x)$ 
can evolve to compute Turing incomputable languages on a set of 
measure $2^{-(m+1)}$  with respect to  $(\nu, \mathfrak{L})$.  
In contrast, each Turing machine only recognizes a single language, which has measure 0.  In fact, the 
measure of all Turing computable languages is 0 in $\mathfrak{L}$.  
\end{proof}

\begin{rem}

The statements in theorem \ref{thm:L_f_incomputable_measure_1} 
and corollary \ref{cor:Q_x_not_a_Turing_machine} can be sharpened 
when deeper results are obtained for the quantum random source 
\cite{calude_qr2012,calude_qr2014,calude_qr2015,kulikov}
used by the quantum random instructions.  

\end{rem}



\section{Some  $\mathfrak{Q}(x)$  Observations  Based on Cantor and G{\"o}del  }

At first glance, the results from the prior section may seem paradoxical.  Even though
there are only a countable number of initial ex-machines in $\mathfrak{U}$, 
the ex-machines evolving from $\mathfrak{Q}(x)$ can compute languages $L_f$ where each 
$f: \mathbb{N} \rightarrow \{0, 1\}$ corresponds to a particular instance selected from 
an uncountable number of infinite paths in the infinite binary tree    
(i.e, $\{0, 1 \}^{\mathbb{N}}$ is uncountable \cite{cantor_diagonal}).  
With initial state 0 and initial tape   
\verb|#| \hskip 0.5pc \verb|#|{\verb|a|}$^n$\verb|#|, 
for every $n$ and $m$ with $n > m$, each ex-machine 
$\mathfrak{Q}(a_0 a_1 \dots a_m$ $x)$  has an uncountably infinite number 
of possible execution behaviors.  On the other hand, a Turing machine with the same initial 
state 0 and initial tape 
\verb|#| \hskip 0.5pc \verb|#|{\verb|a|}$^n$\verb|#| always 
has exactly one execution behavior.  Hence, a Turing machine can only have 
a countable number of execution behaviors for all initial tapes 
\verb|#| \hskip 0.5pc \verb|#|{\verb|a|}$^n$\verb|#|, where $n > m$.

It may seem peculiar that the countable set $\mathfrak{U}$ of ex-machines can evolve to 
compute an uncountable number of languages $L_f$.  However, there is an analogous 
phenomenon in elementary  analysis 
that mathematicians routinely accept.   The rational numbers $\mathbb{Q}$ are countable.       
The set $\mathbb{Q} \cap [0, 1]$ is dense in the closed interval $[0, 1]$ of real numbers.   
Any real number $r \in [0, 1]$ can be expressed as 
$r = {\underset{i = 1}{\overset{\infty} \sum}} b_i 2^{-i}$, where each $b_i \in \{0, 1\}$. 
Set the $m$th rational number $q_m  = {\underset{i = 1}{\overset{m} \sum}} b_i 2^{-i}$.  
Then ${\underset{m \rightarrow \infty}\lim} q_m = r$.   Thus, each real number can be realized as a 
sequence of rational numbers, even though the real numbers are uncountable.   Furthermore, each 
rational number in that sequence is representable with a finite amount of information 
(bits).  Similar to the sequence of rationals $q_m$ converging to a real number, each language $L_f$ 
can be computed (i.e., realized) by the evolving sequence of ex-machines 
$\mathfrak{Q}(x)$, \hskip 0.1pc $\mathfrak{Q}(f(0) x)$,  \hskip 0.1pc  $\mathfrak{Q}(f(0)f(1)$ $x)$, \hskip 0.1pc  
$\dots$, \hskip 0.1pc  $\mathfrak{Q}(f(0)f(1) \dots f(n)$ $x)$, $\dots$, where for each $n$, all ex-machines 
$\mathfrak{Q}(x)$, \hskip 0.1pc  $\mathfrak{Q}(f(0) x)$,  \hskip 0.1pc  
$\mathfrak{Q}(f(0)f(1)$ $x)$, \hskip 0.1pc  $\dots$, 
\hskip 0.1pc  $\mathfrak{Q}(f(0)f(1) \dots f(n)$ $x)$
have used only a finite amount of tape, finite number of states, finite number of instructions, 
finite number of executions of instructions and a finite amount of quantum random information 
has been measured.


Finally, our attention turns to an insightful remark by G{\"o}del, entitled 
{\it A philosophical error in Turing's work } \cite{godel_72}.   G{\"o}del states:

\begin{quote}
Turing in his [1936 \cite{turing36}, section 9] gives an argument which is supposed to show that mental procedures cannot go beyond 
mechanical procedures.  However, this argument is inconclusive.  What Turing disregards completely is the fact that 
{\it mind, its use, is not static, but constantly developing}, i.e., that we understand abstract terms more and more precisely 
as we go on using them, and that more and more abstract terms enter the sphere of our understanding.  
There may exist systematic methods of actualizing this development, which could form part of the procedure. 
Therefore, although at each stage the number and precision of the abstract terms at our disposal may be {\it finite}, both 
(and, therefore, also Turing's number of {\it distinguishable states of mind}) may {\it converge toward infinity} in the
course of the application of the procedure.  Note that something like this indeed seems to happen in the process of forming stronger and 
stronger axioms of infinity in set theory.  This process, however, today is far from being sufficiently understood to form a well-defined 
procedure which could actually be carried out (and would yield a non-recursive number-theoretic function).  
\end{quote}

\medskip 

Although we make no claim whatsoever that the execution of $\mathfrak{Q}(x)$ functions anything 
like a mental procedure, G{\"o}del attributes some properties to mental procedures that are strikingly similar 
to the ex-machine.   First, the ex-machine $\mathfrak{Q}(a_0 a_1 \dots a_m$ $x)$ is not static and 
constantly develops each time it is queried with a string {\verb|a|}$^n$ such that $n > m$.  (String {\verb|a|}$^n$  
is longer than any prior string that an ancestor of $\mathfrak{Q}(a_0 a_1 \dots a_m$ $x)$  has executed upon.)

After $\mathfrak{Q}(a_0 a_1 \dots a_n$ $x)$'s computation halts, the resulting ex-machine always has a finite number of 
states and a finite number of instructions, so at each stage of the evolution, the ex-machine is finite.   
Lastly, consider G{\"o}del's comment:  
``(and, therefore, also Turing's number of {\it distinguishable states of mind}) may {\it converge toward infinity} in the 
course of the application of the procedure''.  G{\"o}del's insight seems to foresee the ex-machine's ability to 
add new states;  moreover, to compute a Turing incomputable language $L_f$, any ex-machine 
$\mathfrak{Q}(f(0) f(1) \dots f(n)$ $x)$ must have an evolutionary path that has an unbounded number of states.


\section{An Ex-Machine Halting Problem }\label{msf:halting_problem}

In \cite{turing36}, Alan Turing posed the halting problem for Turing machines.  
Does there exist a Turing machine $\mathcal{D}$ that can determine 
for any given Turing machine $\mathcal{M}$ and finitely bounded initial tape $T$ 
whether $\mathcal{M}$'s execution on tape $T$ eventually halts?  
In the same paper \cite{turing36}, Turing proved that no single Turing machine could 
solve his halting problem.

Next, we explain what Turing's seminal result relies upon in terms 
of abstract computational resources.  Turing's result means that there does not 
exist a single Turing machine $\mathcal{H}$ -- regardless of the size of $\mathcal{H}$'s 
finite state set $Q$ and finite alphabet set $A$  -- so that when this special machine 
$\mathcal{H}$ is presented with any Turing machine $\mathcal{M}$ with a finitely bounded initial tape $T$ 
and initial state $q_0$,  then $\mathcal{H}$ can execute a finite number of computational 
steps, halt and correctly determine whether $\mathcal{M}$ halts or does not halt 
with a tape $T$ and initial state $q_0$.  In terms of {\it definability}, the statement 
of Turing's halting problem ranges over all possible Turing machines and all possible 
finitely bounded initial tapes.  
This means:  for each tape $T$ and machine $\mathcal{M}$, there are finite initial 
conditions imposed on tape $T$ and machine $\mathcal{M}$. However, as tape $T$ and 
machine $\mathcal{M}$ range over all possibilities, the computational resources 
required for tape $T$ and machine $\mathcal{M}$ are unbounded.   
Thus, the computational resources required by $\mathcal{H}$ are unbounded 
as its input ranges over all finitely bounded initial tapes $T$ and machines $\mathcal{M}$.  


The previous paragraph provides some observations about Turing's halting problem because   
any philosophical objection to $\mathfrak{Q}(x)$'s unbounded computational resources during an 
evolution should also present a similar philosophical objection to Turing's assumptions 
in his statement and proof of his halting problem.  Notice that 
corollary \ref{corollary:Q_x_finite_resources} supports our claim.


Since $\mathfrak{Q}(x)$  and every other  ex-machine $\mathfrak{Q}(a_0 a_1 \dots a_m$ $x)$ 
in $\mathfrak{U}$ is not a Turing machine, there is a natural extension of 
Turing's halting problem.  Does there exist an ex-machine $\mathfrak{G}(x)$ such that 
for any given Turing machine $\mathcal{M}$ and finite initial tape $T$, then 
$\mathfrak{G}(x)$ can sometimes compute whether $\mathcal{M}$'s execution on tape 
$T$ will eventually halt?     Before we call this the ex-{\it machine halting problem}, 
the phrase {\it can sometimes compute whether} must be defined so that this 
problem is well-posed.  A reasonable definition requires some work.

From the universal Turing machine / enumeration theorem \cite{downey,rogers}, 
there exists a Turing computable enumeration
$\mathcal{E}: \mathbb{N} \rightarrow$ 
$\{$\verb|all Turing machines| $\mathcal{M} \}$ $\times$ 
$\{$\verb|Each of |$\mathcal{M}$\verb|'s states as an initial state|$\}$ 
of every Turing machine.  Similar to ex-machines, for each machine $\mathcal{M}$, 
the set $\{$\verb|Each of |$\mathcal{M}$\verb|'s states as|
\verb|an initial state|$\}$  can be realized as a 
finite subset $\{0, \dots, n-1\}$ of $\mathbb{N}$.  Since $\mathcal{E}(n)$ 
is an ordered pair, the phrase ``Turing machine $\mathcal{E}(n)$" refers to the first coordinate of 
$\mathcal{E}(n)$.  Similarly, the ``initial state $\mathcal{E}(n)$" refers to    
the second coordinate of $\mathcal{E}(n)$.

Recall that the Turing machine halting problem is equivalent 
to the blank-tape halting problem.  (See pages 150-151 of \cite{minsky}).   
For our discussion, the blank-tape halting problem translates to:  for each Turing machine 
$\mathcal{E}(n)$, does Turing machine $\mathcal{E}(n)$ halt  
when $\mathcal{E}(n)$ begins its execution with a blank initial tape  
and initial state $\mathcal{E}(n)$?

Lemma \ref{lemma:Q_a0_a1_am} implies that the same 
initial ex-machine can evolve to two different ex-machines; furthermore, 
these two ex-machines will never compute the same language no matter what 
descendants they evolve to.   For example, $\mathfrak{Q}(0$ $x)$ and 
$\mathfrak{Q}(1$ $x)$ can never compute the same language in $\mathfrak{L}$.
Hence, {\it sometimes} means that for each $n$, there exists an evolution of $\mathfrak{G}(x)$  
to  $\mathfrak{G}(a_0 x)$, and then to $\mathfrak{G}(a_0 a_1 x)$ and so on up to 
$\mathfrak{G}(a_0 a_1 \dots a_n$ $x)$ $\dots$, where for each $i$ with $0 \le i \le n$, 
then  $\mathfrak{G}(a_0 a_1 \dots a_n$ $x)$ correctly computes whether Turing machine 
$\mathcal{E}(n)$ -- executing on an initial blank tape with initial state $\mathcal{E}(n)$  -- 
halts or does not halt.

In the prior sentence, the word {\it computes} means that 
$\mathfrak{G}(a_0 a_1 \dots a_i$ $x)$  halts after a finite number of instructions 
have been executed and the halting output written by $\mathfrak{G}(a_0 a_1 \dots a_i$ $x)$  
on its tape indicates whether machine $\mathcal{E}(n)$ halts or does not halt.  
For example, if the input tape is \verb|# #|\verb|a|$^i$\verb|#|, then 
enumeration machine $\mathcal{M}_{\mathcal{E}}$ writes the representation of $\mathcal{E}(i)$ on the tape, 
and then   $\mathfrak{G}(a_0 a_1 \dots a_m$ $x)$ with $m \ge i$ halts with \verb|# Y#| written 
to the right of the representation for machine $\mathcal{E}(i)$.  
Alternatively $\mathfrak{G}(a_0 a_1 \dots a_m$ $x)$ with $m \ge i$ halts with \verb|# N#| written 
to the right of the representation for machine $\mathcal{E}(i)$. 
The word {\it correctly} means that ex-machine $\mathfrak{G}(a_0 a_1 \dots a_m$ $x)$  halts with 
\verb|# Y#| written on the tape if machine  $\mathcal{E}(i)$ halts and  
ex-machine $\mathfrak{G}(a_0 a_1 \dots a_m$ $x)$  halts with  \verb|# N#| written on the tape 
if machine  $\mathcal{E}(i)$ does not halt.

Next, our goal is to transform the ex-machine halting problem to a form 
so that the results from the previous section can be applied.  
 Choose the alphabet as $\mathcal{A} = \{$\verb|#|, \verb|0|, \verb|1|, \verb|a|, 
\verb|A|, \verb|B|, \verb|M|, \verb|N|, \verb|S|, \verb|X|, \verb|Y|$\}$.   
As before, for each Turing machine, it is helpful to identify the set of machine 
states $Q$ as a finite subset of $\mathbb{N}$.  
Let $\mathcal{M}_{\mathcal{E}}$ be the Turing machine that computes a Turing computable enumeration 
\footnote{Chapter 7 of \cite{minsky} provides explicit details of encoding quintuples with a particular 
universal Turing machine.  Alphabet $\mathcal{A}$ was selected so that it is compatible 
with this encoding.  A careful study of chapter 7 should provide a clear path of how 
$\mathcal{M}_{\mathcal{E}}$'s instructions  can be specified to implement $\mathcal{E}_a$.}
as $\mathcal{E}_a:  \mathbb{N}  \rightarrow \{ \mathcal{A} \}^* \times \mathbb{N}$, 
where the tape  \verb|# #|\verb|a|$^n$\verb|#| represents natural number $n$.
Each $\mathcal{E}_a(n)$ is an ordered pair where the first coordinate is the Turing machine and the 
second coordinate is an initial state chosen from one of $\mathcal{E}_a(n)$'s states.


 \begin{rem}\label{rem:E_a_two_choices}  \hskip 1pc   For each $n \in \mathbb{N}$, 
 with blank initial tape and initial state  $\mathcal{E}_a(n)$, then Turing machine 
 $\mathcal{E}_a(n)$ either halts or does not halt.  
\end{rem}

\begin{proof}[Proof]
The execution behavior of Turing machine computation is unambiguous.  
For each $n$, there are only two possibilities.
\end{proof}

For our particular instance of $\mathcal{E}_a$, define the {\it halting function}  
$h_{\mathcal{E}_a}: \mathbb{N} \rightarrow \{0, 1\}$ as follows.  
For each $n$, set $h_{\mathcal{E}_a}(n) = 1$,  
whenever Turing machine $\mathcal{E}_a(n)$ with blank initial tape 
and initial state  $\mathcal{E}_a(n)$ halts.  Otherwise, set 
$h_{\mathcal{E}_a}(n) = 0$,   
if Turing machine $\mathcal{E}_a(n)$ with blank initial tape 
and initial state  $\mathcal{E}_a(n)$ does not halt.
Remark \ref{rem:E_a_two_choices} implies that function $h_{\mathcal{E}_a}(n)$ is well-defined.  
Via the halting function $h_{\mathcal{E}_a}(n)$  and definition 
\ref{defn:language_L_f},   
define the {\it halting language} $L_{h_{\mathcal{E}_a}}$.  

\begin{thm}\label{thm:x_maching_halting_evolution}  
The  ex-machine $\mathfrak{Q}(x)$ has an evolutionary path that computes halting language  
$L_{h_{\mathcal{E}_a}}$.   This evolutionary path is 
\hskip 0.2pc 
$\mathfrak{Q}(h_{\mathcal{E}_a}(0)$  \hskip 0.2pc $x)$  \hskip 0.2pc    $\rightarrow$  \hskip 0.2pc 
$\mathfrak{Q}(h_{\mathcal{E}_a}(0)$ $h_{\mathcal{E}_a}(1)$ \hskip 0.2pc $x)$ \hskip 0.2pc  $\rightarrow$ \hskip 0.2pc 
$\dots$  \hskip 0.2pc 
$\mathfrak{Q}(h_{\mathcal{E}_a}(0)$ $h_{\mathcal{E}_a}(1)$ $\dots$ $h_{\mathcal{E}_a}(m)$  \hskip 0.2pc $x)$ 
\hskip 0.2pc $\dots$
\end{thm}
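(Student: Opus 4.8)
The plan is to derive the statement directly from Theorem \ref{theorem:Q_x_evolves_to_L_f}, after checking that the halting function $h_{\mathcal{E}_a}$ is a legitimate argument to feed into it. First I would invoke Remark \ref{rem:E_a_two_choices}: for every $n$, the Turing machine $\mathcal{E}_a(n)$ started on a blank tape in state $\mathcal{E}_a(n)$ either halts or fails to halt, and these two cases are mutually exclusive and exhaustive. Hence $h_{\mathcal{E}_a} : \mathbb{N} \to \{0,1\}$ is a well-defined total function, and by Definition \ref{defn:language_L_f} it induces a well-defined language $L_{h_{\mathcal{E}_a}} = \{\mathtt{a}^n : h_{\mathcal{E}_a}(n) = 1\} \in \mathfrak{L}$. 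It is essential to notice here that Theorem \ref{theorem:Q_x_evolves_to_L_f} ranges over \emph{every} $f \in \{0,1\}^{\mathbb{N}}$, so the Turing incomputability of $h_{\mathcal{E}_a}$ is no obstruction to taking $f = h_{\mathcal{E}_a}$.

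Second, I would specialize Theorem \ref{theorem:Q_x_evolves_to_L_f} to $f = h_{\mathcal{E}_a}$. This yields at once that the sequence $\mathfrak{Q}(x)$, $\mathfrak{Q}(h_{\mathcal{E}_a}(0)\, x)$, $\mathfrak{Q}(h_{\mathcal{E}_a}(0) h_{\mathcal{E}_a}(1)\, x)$, $\dots$, $\mathfrak{Q}(h_{\mathcal{E}_a}(0)\dots h_{\mathcal{E}_a}(m)\, x)$, $\dots$ computes $L_{h_{\mathcal{E}_a}}$, where by Lemma \ref{lemma:Q_a0_a1_am} the word ``computes'' unpacks as follows: for every $m$ and every $i$ with $0 \le i \le m$, the ex-machine $\mathfrak{Q}(h_{\mathcal{E}_a}(0)\dots h_{\mathcal{E}_a}(m)\, x)$, started in state $0$ on the input tape for $\mathtt{a}^i$, halts with output $\mathtt{Y}$ exactly when $h_{\mathcal{E}_a}(i) = 1$, i.e.\ exactly when $\mathcal{E}_a(i)$ halts, and halts with output $\mathtt{N}$ otherwise.

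Third, it remains to check that this sequence really is an evolutionary path of $\mathfrak{Q}(x)$ in the sense of Definition \ref{defn:evolving}, i.e.\ that it is realized by a genuine execution history. Here I would lean on the quantum random axioms \ref{axiom_qr1} and \ref{axiom_qr2}: every finite binary string occurs with positive probability as the outcome string of the quantum random instructions, so for each $m$ there is an execution of $\mathfrak{Q}(h_{\mathcal{E}_a}(0)\dots h_{\mathcal{E}_a}(m)\, x)$ on input $\mathtt{a}^{m+1}$ whose single quantum measurement returns exactly the bit $h_{\mathcal{E}_a}(m+1)$. By the case analysis already carried out in the proof of Lemma \ref{lemma:Q_a0_a1_am}, such an execution halts and evolves $\mathfrak{Q}(h_{\mathcal{E}_a}(0)\dots h_{\mathcal{E}_a}(m)\, x)$ precisely into $\mathfrak{Q}(h_{\mathcal{E}_a}(0)\dots h_{\mathcal{E}_a}(m+1)\, x)$, in accordance with ex-machine \ref{defn:Q_a0_a1_am_listing}. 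Starting from $\mathfrak{Q}(x)$ itself -- whose execution on the empty input, with its quantum measurement returning $h_{\mathcal{E}_a}(0)$, evolves it to $\mathfrak{Q}(h_{\mathcal{E}_a}(0)\, x)$ -- and chaining these executions produces the claimed path; by Corollary \ref{corollary:Q_x_finite_resources}, every finite initial segment of the path consumes only finite computational resources and measures only finitely many quantum random bits.

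The one point that deserves care is conceptual rather than technical: one must keep firmly in view that the claim is the \emph{existence} of an evolutionary path computing $L_{h_{\mathcal{E}_a}}$, not that such a path can be effectively traced. No Turing machine, and no ex-machine following a prescribed deterministic recipe, can generate the bits $h_{\mathcal{E}_a}(0), h_{\mathcal{E}_a}(1), \dots$, since $h_{\mathcal{E}_a}$ is Turing incomputable. The whole weight of the argument rests on the fact that the quantum random axioms entitle us to speak of a measurement history whose successive outcomes happen to agree with this particular non-computable element of $\{0,1\}^{\mathbb{N}}$, without our ever exhibiting it by a mechanical procedure; granted that, the theorem is an immediate corollary of Theorem \ref{theorem:Q_x_evolves_to_L_f} and Lemma \ref{lemma:Q_a0_a1_am}.
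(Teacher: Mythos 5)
Your proposal is correct and follows essentially the same route as the paper: the paper's own proof simply notes that $h_{\mathcal{E}_a}$ is well-defined (via Remark \ref{rem:E_a_two_choices}), that it induces the language $L_{h_{\mathcal{E}_a}}$, and that the claim is then the specialization of Theorem \ref{theorem:Q_x_evolves_to_L_f} to $f = h_{\mathcal{E}_a}$. Your additional third and fourth paragraphs merely make explicit what the paper leaves to Lemma \ref{lemma:Q_a0_a1_am}, Corollary \ref{corollary:Q_x_finite_resources}, and the surrounding discussion of existence versus effective traceability, so no gap remains.
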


\begin{proof}
Theorem \ref{thm:x_maching_halting_evolution} follows from the previous discussion, 
including the definition of halting function $h_{\mathcal{E}_a}(n)$ and 
halting language $L_{h_{\mathcal{E}_a}}$ and 
theorem \ref{theorem:Q_x_evolves_to_L_f}.
\end{proof}

\subsection{ Some Observations Based on Theorem \ref{thm:x_maching_halting_evolution}  }

Although theorem \ref{thm:x_maching_halting_evolution}  provides an affirmative answer to the 
ex-machine halting problem, in practice, a particular execution of $\mathfrak{Q}(x)$ will not,  
 from a probabilistic perspective, evolve to compute $L_{h(\mathcal{E}_a)}$.  
For example, the probability  is  $2^{-128}$ that a particular execution of 
$\mathfrak{Q}(x)$ will evolve to  $\mathfrak{Q}(a_0 a_1 \dots a_{127}$ $x)$  
so that $\mathfrak{Q}(a_0 a_1 \dots a_{99}$ $x)$  correctly computes whether 
each string \verb|a|$^0$, \verb|a|, \verb|a|$^2$ $\dots$ \verb|a|$^{127}$ is a member of  
$L_{h(\mathcal{E}_a)}$ or not a member of  $L_{h(\mathcal{E}_a)}$.

Furthermore, theorem \ref{thm:x_maching_halting_evolution} provides no general method 
for infallibly testing (proving) that an evolution of $\mathfrak{Q}(x)$ to some new machine   
$\mathfrak{Q}(a_0 a_1 \dots a_m$ $x)$ satisfies $a_i = h_{\mathcal{E}_a}(i)$
 for each $0 \le i \le m$.  We also know that any such general testing method that works 
 for all natural numbers $m$  would require at least an ex-machine 
 (or some computational object more powerful than an ex-machine if that object exists)
because any general testing method cannot be implemented with a standard machine.  
Otherwise, if such a testing method could be executed by a standard machine, then 
this special standard machine could be used to solve  Turing's halting problem:  
this is logically impossible due to Turing's proof 
of the unsolvability of the halting problem with a standard machine.

Despite all of this, it is still {\it logically possible} for this evolution to happen, 
since  $\mathfrak{Q}(x)$ can in principle evolve to compute any language $L_f$ in $\mathfrak{L}$.  
In other words, every infinite, downward path in the infinite binary tree 
of figure 3 is possible and equally likely.  Clearly, $\mathfrak{Q}(x)$ is 
{\it not an ``intelligent" ex-machine}, by any reasonable definition of ``intelligent",  
since evolutionary path
$\mathfrak{Q}(h_{\mathcal{E}_a}(0)$  \hskip 0.12pc $x)$  \hskip 0.1pc    $\rightarrow$  \hskip 0.1pc 
$\mathfrak{Q}(h_{\mathcal{E}_a}(0)$ $h_{\mathcal{E}_a}(1)$ \hskip 0.12pc $x)$ \hskip 0.1pc  $\rightarrow$ \hskip 0.1pc 
$\dots$  \hskip 0.1pc 
$\mathfrak{Q}(h_{\mathcal{E}_a}(0)$ $h_{\mathcal{E}_a}(1)$ $\dots$ $h_{\mathcal{E}_a}(m)$  \hskip 0.15pc $x)$ 
\hskip 0.1pc $\dots$ 
relies solely on blind luck. 

In some ways, theorem \ref{thm:x_maching_halting_evolution}  has an analogous result in pure mathematics.          
The Brouwer fixed point theorem \cite{brouwer} guarantees that a continuous map from an 
$n$-simplex to an $n$-simplex has at least one fixed point and 
demonstrates the power of algebraic topology \cite{steenrod}. 
However,  the early proofs were indirect and provided no constructive methods for 
finding the fixed point(s).  The parallel here is that theorem \ref{thm:x_maching_halting_evolution}  
guarantees that an evolutionary path exists, but the proof provides no general method for infallibly testing 
that for an evolution up to stage $m$, then  $\mathfrak{Q}(a_0 a_1 \dots a_m$ $x)$ satisfies 
$a_i = h_{\mathcal{E}_a}(i)$ for each $0 \le i \le m$.   

Algorithmic methods for finding fixed points were developed about 60 years later \cite{scarf,yorke}.
The part of the analogy that has not yet played out could break down due to the extreme ramifications 
of reaching large enough sizes of $m$, whereby currently intractable problems in mathematics could be proven.  
However, this really depends upon the computing speeds and the ex-machine learning and 
mathematical methods developed over the next few centuries.   We believe that deeper, ex-machine learning 
and mathematical methods could have a larger impact than hardware advances   
because a clever proof can save a large number of mechanical steps over a mediocre proof.  
And the clever use of a prior theorem or symmetry in a new proof can save an infinite number 
of computational steps.

With the history of the Brouwer fixed point theorem in mind, the logical possibility, 
demonstrated by theorem \ref{thm:x_maching_halting_evolution}, suggests that there might 
be an opportunity to develop new problem solving methods and 
apply more advanced ex-machines to key instances of the halting problem.  
As far as more advanced ex-machines, a broad research direction is to explore 
the use of populations of ex-machines that evolve and also communicate formal languages 
with each other, analogous to the methods that human mathematicians use 
in their mathematical research.

The Goldbach conjecture states that  every even number greater than 2 is the sum of two primes.   
This conjecture seems to be an intractable problem in 
number theory as Goldbach posed it  \cite{goldbach} in the year 1742.  
Despite its apparent intractability, a fairly simple Turing machine can be specified;  
namely,  proving that the following Goldbach machine never halts 
provides a proof of the conjecture.

\begin{Instructions}\label{ins:goldbach}  \hskip 1pc  {\it A Goldbach Machine }   

{ \small 
\begin{verbatim}
set n = 2
set g = true
set prime_list = (2)
while (g == true)   
{
    set n = n + 2
    set g = false
    for each p in prime_list  
    {
       set x = n - p
       if (x is a member of prime_list)  set g = true
    }  

    if (n-1 is prime)  store n-1 in prime_list
}

print ("Even number " n " is not the sum of two primes.")
print ("The Goldbach conjecture is false!") 
halt 
\end{verbatim}
}
\end{Instructions}

We wrap up this section with some advice from mathematician George P\'olya.  
P\'olya emphasizes the use of heuristics for mathematical problem solving, which  
may shed further light on what we broadly have in mind for  
ex-machines that help solve intractable math problems.  Advanced 
ex-machines may help human mathematicians with the conception of a proof. 
A propos to our discussion, P\'olya \cite{polya} distinguishes between 
conceiving of a proof versus verifying a proof:  

\medskip 

\begin{quote}
The following pages are written somewhat concisely, but simply as possible, and 
are based on a long and serious study of methods of solution.  
This sort of study, called heuristic by some writers, is not in fashion 
nowadays but has a long past and, perhaps, some future.

Studying the methods of solving problems, we perceive another face of mathematics. 
Yes, mathematics has two faces;  it is the rigorous science of Euclid but it is also 
something else.  Mathematics presented in the Euclidean way appears as a systematic, 
deductive science; but mathematics in the making appears as an experimental, inductive 
science.  Both aspects are as old as the science of mathematics itself.  
\end{quote}

\section{Two Research Problems}\label{sect:two_research_problems}

P\'olya expresses a broad vision, but without some concrete research problems aimless wandering is likely.  
We propose two mathematical problems, where the goal is to express each one as a 
halting problem for a single Turing machine.  Furthermore, each problem has a different 
strategy for teaching us more about new ex-machine computation that advances beyond 
$\mathfrak{Q}(x)$'s blind luck. 

\smallskip 

{\bf Mathematical Problem 1.}  

Specify explicit initial instructions of an ex-machine that can evolve to 
compute a proof that $\sqrt{2}$ is irrational.  One reason for proposing this problem 
is that we already know the correct answer.  Another reason is that the human proof is 
short and clever.    One possible ex-machine approach follows the traditional 
method of constructing a proof by contradiction based on the theorems about odd and even natural numbers. 

A second approach is more involved.  However, new techniques, gained from this approach, 
may be applicable to other instances of halting problems.  Consider the pseudocode below that computes 
the $\sqrt{2}$, by executing a bisection algorithm on the curve $y = x^2 - 2$.  

\smallskip 

\begin{Instructions}\label{ins:sqrt_2} \hskip 1pc   {A  \hskip 0.1pc $\sqrt{2}$ \hskip 0.2pc Standard Machine}
{ \small 
\begin{verbatim}
  Set l = 1
  Set u = 2
  Set a = true
  while (a == true)   
  {
      set x = (l + u) / 2

      if  ( (x * x) > 2)   set l = x
      else   set u = x

      if (the tape representation of x so far has a periodic sequence)
          and (the periodicity will continue indefinitely) 
      {
         set a = false
      } 
  }
  halt
  \end{verbatim}
}
\end{Instructions}

The bisection approach searches for a periodic sequence of writing the tape symbols.  
The critical part is the instruction \verb|if (the tape representation|
\verb|of| \verb| x has a periodic sequence)| 
\verb|and (the periodicity| \verb| will continue| \verb|indefinitely)|.  
How does an ex-machine evolve rules to recognize a periodic sequence of symbols 
(written on the tape by the $\sqrt{2}$ standard machine)
and also guarantee that the periodic sequence on the tape will repeat indefinitely?   In other words, the 
{\it ex-machine  evolves rules that adequately represent knowledge about the 
     dynamics of the $\sqrt{2}$ \hskip 0.2pc machine's instructions}.

A consecutive repeating state cycle in a Turing machine occurs when a finite sequence of standard machine 
instructions $\{I_i\}$ is executed by the Turing machine two consecutive times: 
$I_1 \rightarrow I_2 \rightarrow \dots I_k \rightarrow I_1 \rightarrow I_2 \dots I_k$ and 
the machine configuration before the first instruction $I_1$ is executed equals the machine 
configuration after the instruction $I_k$ has completed its execution a second time.

In \cite{fiske_state_cycle}, the main theorem shows  that consecutive repeating state cycles characterize
 the periodic points of a Turing machine.  
A periodic point that does not reach a halting state indicates that the Turing machine execution is 
immortal (i.e., never halts).  Can this consecutive repeating state cycle theorem or an extension of this 
theorem be used to help an ex-machine find a proof? If the standard $\sqrt{2}$ machine writes symbols on the 
tape in a periodic sequence, this indicates that $\sqrt{2}$ is rational.   If an ex-machine can 
construct rules which prove that the standard $\sqrt{2}$ machine  \label{ins:sqrt_2} never halts, 
then these ex-machine rules provide a proof that the $\sqrt{2}$ is irrational.


\smallskip  

{\bf Mathematical Problem 2.}

Transform Collatz machine \ref{ins:collatz_machine}'s execution of each individual orbit 
$\mathcal{O}(f, n)$ into a single ex-machine computation 
that collectively makes a determination about all individual orbits.  
That is, find an ex-machine  
computation that evolves to a decision whether $1$ is in $\mathcal{O}(f, n)$ for 
all $n \in \mathbb{N}$. 
Is it possible to accomplish this with an ex-machine computation?  
If it is impossible, why?

Consider the augmentation of Collatz machine  \ref{ins:collatz_machine} to 
an enumerated Collatz machine $\mathcal{E}$.  The standard machine   
$\mathcal{E}$  iterates over the odd numbers $3, 5, 7, \dots$. $\mathcal{E}$   
first writes \verb|# #111#| on the input tape and hands this computation over 
to Collatz machine \ref{ins:collatz_machine}.  After  Collatz machine 
\ref{ins:collatz_machine}  halts at 1, then  $\mathcal{E}$ updates the input 
tape to \verb|# #11111#|, representing 5, and hands this to the 
Collatz machine again. After the Collatz machine halts at 1, then 
 $\mathcal{E}$ updates the input tape to \verb|# #1111111#|, and so on.  
If the Collatz conjecture is true, this execution of $\mathcal{E}$ never halts 
and $\mathcal{E}$ iterates over every odd number.

At least part of the challenge with  machine  $\mathcal{E}$  seems to be that 
there could exist some $n$ such that $n$'s Collatz orbit reaches a periodic 
attractor that does not contain 1.   Another possibility is that there exists some $u$  
whose Collatz orbit aperiodicly oscillates and never reaches 1.  
In this case, $u$'s orbit does not have an upper bound.  That is, 
\verb|sup| $\mathcal{O}(f, u) = \infty$.  In both cases, the orbit of $n$ and the orbit of 
$u$ do not halt at 1.  If the conjecture is true, how does one distinguish these two 
different types of immortal orbits from the enumerated Collatz machine that halts at 1 
for each odd input, but is also immortal?

Is it possible to transform (either by human ingenuity or by ex-machine evolution or a combination) 
this enumerated Collatz machine $\mathcal{E}$ into a non-vacuous, explicit Turing machine so 
that an immortal orbit proves or disproves that the Collatz conjecture is true?  
If this transformation exists, does there exist an ex-machine that can construct this transformation?
Perhaps, the answers to these types of questions will provide some insight on a famous remark by 
mathematician Erd{\"o}s about the Collatz conjecture.  
{\it Mathematics is not ready for such problems.}



\newpage




\begin{thebibliography}{999}



\bibitem{abellan}    C. Abellan, W. Amaya, M. Jofre, M. Curty, A. Acin, J. Capmany, V. Pruneri and M.W. Mitchell.
                     Ultra-fast quantum randomness generation by accelerated phase diffusion in a pulsed 
                     laser diode.  {\em Optics Express}. {\bf 22}(2), 2014, pp. 1645-1654.  


\bibitem{alda1}       V\'aclav Alda.  On 0-1 measures for projectors I.  {\em Aplikace Matematiky}. {\bf 25}, 1980, pp. 373-374.



\bibitem{alda2}       V\'aclav Alda.  On 0-1 measures for projectors II.  {\em Aplikace Matematiky}. {\bf 26}, 1981, pp. 57-58.



\bibitem{bell}       John Bell.  On the Einstein Podolsky Rosen Paradox.  
                     {\em Physics}. {\bf 1}, 1964, pp. 195-200.



\bibitem{bierhorst}   Peter Bierhorst, Emanuel Knill, Scott Glancy, Yanbao Zhang, Alan Mink, Stephen Jordan,
                      Andrea Rommal, Yi-Kai Liu, Bradley Christensen, Sae Woo Nam, Martin J. Stevens, 
                      and Lynden K. Shalm.  
                      Experimentally generated randomness certified by the impossibility of superluminal signals.
                      {\em Nature}. {\bf 556}, 2018, pp. 223-226




\bibitem{bohm_aharanov}   David Bohm and Yakir Aharanov.  Discussion of Experimental Proof for the Paradox of 
                          Einstein, Podolsky, and Rosen.   {\em Physical Review}. {\bf 108}(4), November 1957, 
                          pp. 1070-1076.    




\bibitem{bohm_gedanken}    David Bohm.  Quantum Theory.  Prentice-Hall, 1951, pp. 614-615.  






\bibitem{bohr}         Neil Bohr.  Can Quantum-Mechanical Description of Physical Reality be 
                       Considered Complete?  {\em Physical Review}. {\bf 48}, 
                       October 15, 1935, pp. 696-702.



\bibitem{born_1925}    Max Born and Pascual Jordan. Zur Quantenmechanik. 
                       {\em Zeitschrift f{\"u}r Physik}.  {\bf 34}, 1925, pp. 858-888


\bibitem{born_1926}   Max Born.  Quantenmechanik der Sto{\ss}vorg{\"a}nge.   
                      {\em Zeitschrift f{\"u}r Physik}.  {\bf 38}, 1926, pp. 803-827.


\bibitem{brouwer}     Luitzen Egbertus Jan Brouwer.    
                      {\"U}ber ein eindeutige, stetige Transformationen von Fl{\"a}chen in sich.  
                      {\em  Math Ann}.  {\bf 69}.  1910, pp. 176-180.  


\bibitem{calude}       Cristian S. Calude.  {\em Information and Randomness.  An Algorithmic Perspective.}  
                       Second Edition, Springer, 2002.  







\bibitem{calude_2008}      Cristian Calude and Karl Svozil.   Quantum randomness and value indefiniteness.   
                           Advanced Science Letters  {\bf 1}(2), 2008, pp. 165-168.



\bibitem{calude_qr2012}       Alistair Abbott, Cristian Calude, Conder, J., Karl Svozil. 
                              Strong Kochen-Specker theorem and incomputability of quantum randomness. 
                              {\em Physical Review A}. 86, 062109, 2012, pp. 1-11.  



\bibitem{calude_qr2014}       Alistair Abbott, Cristian Calude, Karl Svozil.   
                              Value Indefinite Observables are Almost Everywhere.
                              {\em  Physical Review A}. {\bf 89}, 032109, 2014.   
                             


\bibitem{calude_qr2015}       Alistair Abbott, Cristian Calude, Karl Svozil. 
                              On the Unpredictability of Individual Quantum Measurement Outcomes. 
                              (Editors: Beklemishev L., Blass A., Dershowitz N., Finkbeiner B., Schulte W.) 
                              {\em Fields of Logic and Computation II. Lecture Notes in Computer Science}. 
                              {\bf  9300}, Springer, 2015.   




\bibitem{calude_qr2017}       Alistair Abbott, Cristian Calude, Michael Dineen, and Nan Huang. 
                              Experimental evidence of the superiority of quantum randomness over 
                              pseudo-randomness.  2017.  
 


\bibitem{cantor_diagonal}    Georg Cantor.  Ueber eine elementare Frage der Mannigtaltigkeitslehre.  
                             {\em Jahresbericht der Deutschen Mathematiker-Vereinigung}. Teubner 1891, pp. 75-78. 




\bibitem{cantor_transfinite}    Georg Cantor.  {\em Contributions to the Founding of the Theory of Transfinite Numbers}.  
                                Translated by  Philip E. Jourdain.  New York: Dover Publications, 1955. 




\bibitem{chaitin}          Gregory J. Chaitin. On the Length of Programs for Computing Finite Binary Sequences. 
                           {\em Journal of the ACM}. {\bf 13}, 1966, pp. 547-569.  



\bibitem{chaitin_1969}     Gregory J. Chaitin. On the Length of Programs for Computing Finite Binary Sequences: 
                           Statistical Considerations.  {\em Journal of the ACM}. {\bf 16}, 1969, pp. 145-159.  



\bibitem{church_unsolvable}   Alonzo Church.  An Unsolvable Problem of Elementary Number Theory.  
                              {\em The American Journal of Mathematics}. {\bf 58}, 1936, pp. 345-363.  



\bibitem{church_lambda}       Alonzo Church.  The Calculi of Lambda-Conversion.  
                              {\em Annals of Mathematics}.  {\bf 6}, Princeton University Press, 1941.  



\bibitem{clauser}              John Clauser, Michael Horne, Abner Shimony, and Richard Holt. 
                               Proposed Experiment to Test Local Hidden-Variable Theories.
                               {\em  Physical Review Letters}. {\bf 23}(15), October 13, 1969, pp. 880-884.



\bibitem{conway_72}       John Horton Conway.  Unpredictable iterations.
                          {\em Proceedings of the Number Theory Conference}. 
                          University of Colorado, Boulder, CO, 1972,  pp. 49–52.

 


\bibitem{cook_p_vs_np}    Stephen A. Cook. The P versus NP Problem. 
                          \hskip 0.4pc  Manuscript, 12 pages.   




\bibitem{cook_3SAT}      Stephen A. Cook.  The complexity of theorem-proving procedures. 
                         {\em  Proceedings of the 3rd Annual ACM Symposium on Theory of Computing,  ACM}. 
                         1971, pp. 151-158.  


\bibitem{curie1}     Pierre Curie et Marie Curie. 
                    Sur la radioactivit e provoqu ee par les rayons de Becquerel. 
                    {\em C.R. Acad. Sci. Paris}. {\bf 129}: November, 1899, pp. 714-716.


\bibitem{curie2}    Pierre Curie et Marie Curie. Les nouvelles substances radioactives et les rayons quelles emettent. 
                    {\em Rapports pr esent es au Congres international de Physique} Gauthier-Villars, Paris: vol. III, 1900, pp. 79-114.


\bibitem{davis}        Martin Davis.  {\em Computability and Unsolvability}.  Dover, 1982.  


\bibitem{diehard}     George Marsaglia. Diehard Statistical Tests. 
                      \hskip 0.4pc 1996.


\bibitem{downey}       Rod Downey and Denis Hirschfeldt.  {\em Algorithmic Randomness and Complexity}. Springer. 2010, pp. 9-10.  




\bibitem{debroglie_1924}    Louis de Broglie.  Recherches sur la th{\'e}orie des quanta. 
                            {\em Ph.D. Thesis}. Paris, 1924.     




\bibitem{debroglie_1925}   Louis de Broglie.  Recherches sur la th{\'e}orie des quanta.  
                           {\em Ann. de Physique}. {\bf 10}, 3, 22, 1925.    




\bibitem{eagle}      Antony Eagle.  Randomness Is Unpredictability.  {\em British Journal of Philosophy of Science}.
                     {\bf 56}, 2005, pp. 749-790.  





\bibitem{epr}         Albert Einstein, Boris Podolsky and Nathan Rosen.  
                      Can Quantum-Mechanical Description of Physical Reality Be Considered Complete?
                      {\em Physical Review.}  {\bf 47}, 1935, pp. 777-780.  




\bibitem{feller_vol1}    William Feller.  {\em An Introduction to Probability Theory and Its Applications}.  Volume 1,   
                         Third Edition.  John Wiley \& Sons, New York, 1968; pp. 202-211, 
                         ISBN \hskip 0.1pc 0 \hskip 0.1pc 471 \hskip 0.1pc 25708-7. 



\bibitem{feller_vol2}  William Feller.  {\em An Introduction to Probability Theory and Its Applications}.  Volume 2,   
                       Third Edition.  John Wiley \& Sons, New York, 1966. 



\bibitem{feynman}      Richard Feynman.  {\em The Feynman Lectures on Physics}.  Volume III. 
                       Addison-Wesley Publishing, 1963.  



\bibitem{fiske_thesis}  Michael Stephen Fiske.  
                        {\em Non-autonomous Dynamical Systems Applicable to Neural Computation}. 
                        Northwestern University, 1996.  


\bibitem{fiske_tic}    Michael Stephen Fiske.  Turing Incomputable Computation.  
                       {\em Turing-100 Proceedings.  Alan Turing Centenary. } 
                       EasyChair, {\bf 10}, 2012, pp. 66-91.  


\bibitem{fiske_state_cycle}   Michael Stephen Fiske.  
                              Consecutive Repeating State Cycles Determine Periodic Points in a Turing Machine.  
                             {\em Selected Topics in Nonlinear Dynamics and Theoretical Electrical Engineering}.  
                             {\bf 459}, 2013, pp. 393-417.




\bibitem{fiske_qaem}   Michael Stephen Fiske.   Quantum Random Active Element Machine.
                       {\em Unconventional Computation and Natural Computation.}
                       LNCS 7956. Springer, 2013, pp. 252-254.  



\bibitem{gabriel}     Christian Gabriel, Christoffer Wittmann, Denis Sych, Ruifang Dong, 
                      Wolfgang Mauerer, Ulrik L Andersen, Christoph Marquardt, and Gerd Leuchs.
                      A generator for unique quantum random numbers based on vacuum states.
                      {\em Nature Photonics}. {\bf 4}, 2010, pp. 711-715.



\bibitem{garey}       Michael R. Garey and David S. Johnson.  {\em Computers and Intractability.  
                      A Guide to the Theory of NP-Completeness}.  W.H. Freeman, 1979. 




\bibitem{gerlach1}   Walther Gerlach and Otto Stern.   Der experimentelle Nachweis der Richtungsquantelung im Magnetfeld. 
                     {\em Zeitschrift f{\"u}fr Physik}. {\bf 9}, 1922, pp. 349-352.



\bibitem{gerlach2}   Walther Gerlach and Otto Stern. Das magnetische Moment des Silberatoms. 
                     {\em Zeitschrift f{\"u}r Physik}. {\bf 9}, 1922, pp. 353-355. 


\bibitem{gerlach3}   Walther Gerlach and Otto Stern.  Der experimentelle Nachweis des magnetischen Moments des Silberatoms. 
                     {\em Zeitschrift f{\"u}r Physik}. {\bf 8}, 1922, pp. 110–111.




\bibitem{godel_72}     Kurt G{\"o}del.  {\em Collected Works. Volume II: Publications 1938-1974.}  
                       Edited by S. Fefferman, J Dawson, S. Kleene, G. Moore, R. Solovay, 
                       and J. Heijenoort.   Oxford University Press, 1972, pp. 305-306.    



\bibitem{goldbach}    Christian Goldbach.  Letter from Goldbach to Euler.  
                      June 7, 1742.  






\bibitem{heisenberg_1926}    Werner Heisenberg. {\"U}ber quantentheoretische Umdeutung kinematischer und 
                             mechanischer Beziehungen. {\em Zeitschrift f{\"u}r Physik}.  September, 1925.


\bibitem{heisenberg_1927}    Werner Heisenberg.  {\"U}ber den anschaulichen Inhalt der quantentheoretischen   
                             Kinematik und Mechanik. 
                             {\em Zeitschrift f{\"u}r Physik}.  {\bf 43}(3-4),  1927, pp. 172-198.  


\bibitem{heisenberg_book}   Werner Heisenberg.  {\em The Physical Principles of the Quantum Theory}.  
                            University of Chicago Press, 1930.  

\bibitem{hilbert}      David Hilbert.  Mathematische Probleme. 
              {\em  Nachrichten von der Gesellschaft der Wissenschaften zu G{\"o}ttingen, Mathematische-Physikalische Klasse.} 
                       {\bf 3}, 1900, pp. 253-297.  

\bibitem{kalmbach}     Gudrum Kalmbach.  {\em Measures and Hilbert Lattices}.  World Scientific, Singapore, 1986.


\bibitem{kleene}     S.C. Kleene.  A theory of positive integers in formal logic. {\em American Journal of Mathematics}.  
                     {\bf 57}, 1935, pp. 153-173 and pp. 219-244.  



\bibitem{kochen_specker1}   Simon Kochen and Ernst Specker.   Logical structures arising in quantum theory.
                            {\em  Symposium on the Theory of Models, Proceedings of the 1963 International 
                                  Symposium at Berkeley
                            }, North Holland, Amsterdam, 1965, pp. 177-189.



\bibitem{kochen_specker2}   Simon Kochen and Ernst Specker.   The calculus of partial propositional functions.
                            {\em Proceedings of the 1964 International Congress for Logic, Methodology and 
                                 Philosophy of Science, Jerusalem}, North Holland, Amsterdam, 1965, pp. 45-57.




\bibitem{kochen}     Simon Kochen and Ernst Specker.  The problem of hidden variables in quantum mechanics. 
                     {\em Journal of Mathematics and Mechanics.}  {\bf 17}, 1967, pp. 59-87.  



\bibitem{kolmogorov}    Andrey N. Kolmogorov.  Three approaches to the quantitative definition of information.
                        {\em Problems Information Transmission}. {\bf 1}(1), 1965, pp. 1-7.






\bibitem{kulikov}     Anatoly Kulikov, Markus Jerger, Anton Poto{\u{c}}nik, Andreas Wallraff, and Arkady Fedorov.
                      Realization of a Quantum Random Generator Certified with the Kochen-Specker Theorem.
                      {\em  Physical Review Letter}. {\bf 119}, 240501, December 11, 2017.




\bibitem{larsson}     Jan-Ake Larsson.  Loopholes in Bell Inequality Tests of Local Realism.
                      {\em Journal of Physical Review A}. {\bf 47}, 424003, 2014.  




\bibitem{lewis}       Harry R. Lewis and Christos Papadimitriou.  {\em  Elements of the Theory of Computation}.
                      Prentice-Hall, 1981.  






\bibitem{ma}   Hai-Qiang Ma, Yuejian Xie, and Ling-An Wu. 
               Random number generation based on the time of arrival of single photons.
               {\em Applied Optics}. {\bf 44}, 2005, pp. 7760–7763.




\bibitem{martin_lof}       Per Martin-L{\"o}f.  The definition of random sequences.  {\em Information and Control}.   
                           1966, {\bf 9}, pp. 602-619.


\bibitem{mccarthy_symbolic}  John McCarthy.  Recursive functions of symbolic expressions. 
                             {\em Communications of the ACM}, {\bf 3}, 1960, pp. 184-195. 





\bibitem{mccarthy_LISP}      John McCarthy.  {\em The LISP 1.5 Programmer's Manual}. MIT Press, Cambridge, 1962.




\bibitem{mccarthy_math}   John McCarthy.  A basis for a mathematical theory of computation.  
                          {\em Computer Programming and Formal Systems}, Braffort and Hirschberg (Eds.), 
                          North-Holland, Amsterdam, 1963, pp. 33-70.   




\bibitem{minsky}          Marvin Minsky.  {\em Computation:  Finite and Infinite Machines}.     Prentice-Hall, 1967.




\bibitem{mueller}  Lutz Mueller.  {\em newLISP}. \hskip 0.4pc 
                   http://www.newlisp.org \hskip 0.4pc 2018.




\bibitem{NIST}    A. Rukhin, J. Soto, J. Nechvatal, et al.  NIST Special Publication. 2010. 


\bibitem{pironio}  S. Pironio, A. Acin, S. Massar, A. Boyer de la Giroday, D. N. Matsukevich, 
                   P. Maunz, S. Olmschenk, D. Hayes, L. Luo, T. A. Manning, and C. Monroe.  
                    Random numbers certified by Bell's theorem.  {\em Nature}. 
                    {\bf 464}, April 2010, pp. 1021-1024.



\bibitem{polya}  George P\'olya.  {\em How to Solve It.  A New Aspect of Mathematical Method.  }
                 Princeton Science Library, Second Edition, 1957.  


\bibitem{popper}  Karl Popper. {\em Logik der Forschung. Zur Erkenntnistheorie der modernen Naturwissenschaft.}
                  Mohr Siebeck Verlag, 1934.  


\bibitem{post}   Emil L. Post.  Recursive Unsolvability of a Problem of Thue. 
                 {\em The Journal of Symbolic Logic}, {\bf 12}, 1947, pp. 1-11. 



\bibitem{riemann}  Bernhard Riemann.  Ueber die Anzahl der Primzahlen unter einer gegebenen Gr{\"o}sse.  November 1859.
                   


\bibitem{rohe}   Markus Rohe.  A True-Random Generator Based On Radioactive Decay.  
                 {\em Security and Cryptography Research Group.  Saarland University}, 2003, pp. 1-36.  



\bibitem{rogers}  Hartley Rogers, Jr. {\em Theory of Recursive Functions and Effective Computability.}  MIT Press, 1987.  







\bibitem{rutherford1}   Ernest Rutherford.  Radioactivity produced in substances by the action of thorium compounds. 
                        {\em Philosophical Magazine.}  S.5, {\bf 49}: February 1900, pp. 161-192.
                                          

\bibitem{rutherford2}   Ernest Rutherford and F. Soddy. The cause and nature of radioactivity - Part I. 
                        {\em Philosophical Magazine.}  S.6, {\bf 4}: September 1902, pp. 370-396. 


\bibitem{rutherford3}   Ernest Rutherford and F. Soddy. The cause and nature of radioactivity - Part II. 
                        {\em Philosophical Magazine.} S.6, {\bf 4}: November 1902, pp. 569-585.








\bibitem{scarf}       Herbert Scarf.  The Approximation of Fixed Points of a Continuous Mapping.  
                      {\em SIAM Journal of Applied Mathematics.}  {\bf 15}(5), September 1967, pp.  1328-1343.


\bibitem{schmitt}     Alfred A. Schmitt.  The State Complexity of Turing Machines.  {\em Information and Control}.
                      {\bf 17}, 1970, pp. 217-225.  



\bibitem{schrodinger}    Erwin Schrodinger.  An Undulatory Theory of the Mechanics of Atoms and Molecules.
                         {\it Physical Review}. {\bf 28}(6), December 1926, 1049-1070.



\bibitem{schrodinger_1935}   Erwin Schrodinger.  Naturwissenschaften. {\bf 23}, 1935.  




\bibitem{shalm}       Lynden Shalm, et al.  
                      A Strong Loophole-Free Test of Local Realism.  
                      {\em Physical Review Letters}. {\bf 115}, 250402,   December 16, 2015.


\bibitem{shannon_information}  Claude Shannon.  The Mathematical Theory of Communication.  
                               {\em Bell System Technical Journal}. {\bf 27}, 379, 632, 1948.  



\bibitem{shannon}    Claude Shannon.  A Universal Turing Machine with Two Internal States.  
                                      {\em Automata Studies}. (editors, C.E. Shannon and John McCarthy).
                                      Princeton University Press, 1956, pp. 129-153.  






\bibitem{sipser}     Michael Sipser. {\em Introduction to the Theory of Computation}. 
                     Third Edition. Cengage Learning, 2013.  


\bibitem{solomonoff}    Ray Solomonoff.  A Preliminary Report on a General Theory of Inductive Inference.                      
                        Report V-131, Cambridge, MA, Zator Company, November, 1960.  



\bibitem{specker}    Ernst Specker.  Die Logik nicht gleichzeitig entscheidbarer Aussagen.
                     {\em Dialectica}.  {\bf 14}, 1960, pp. 239-246.   \hskip 0.4pc


\bibitem{steenrod}   Samuel Eilenberg and Norman Steenrod.  
                     {\em Foundations of Algebraic Topology.}
                     Princeton University Press, 1952.  


\bibitem{stefanov}   Andre Stefanov, Nicolas Gisin, Olivier Guinnard, Laurent Guinnard, and Hugo Zbinden. 
                     Optical quantum random number generator.   
                     {\em Journal of Modern Optics.} {\bf 47}, 2000, pp. 595–598.




\bibitem{stipcevic}  Mario Stipcevic and Rupert Ursin.  An  On-Demand Optical Quantum Random Number Generator 
                     with In-Future Action and Ultra-Fast Response.  
                     {\em Nature.  Scientific Reports.}  June 2015, pp. 1-8. 








\bibitem{svozil_qlogic}   Karl Svozil and Josef Tkadlec. Greechie diagrams,  nonexistence of measures 
                          in quantum logics and Kochen–Specker type constructions.
                          {\em Journal of Mathematical Physics}.  {\bf 37}, 1996, pp. 5380–5401.





\bibitem{svozil_3_criteria}  Karl Svozil.  Three criteria for quantum random-number generators based on beam splitters.
                             {\em Physical Review A}, {\bf 79}, 054306, May 2009.  



\bibitem{turing36}    Alan M. Turing. On computable numbers, with an application to the Entscheidungsproblem. 
                      {\em  Proc. London Math. Soc. Series 2 }  {\bf  42} (Parts 3 and 4), 1936, pp. 230-265.     
                      A correction,   ibid.   {\bf 43}, 1937, pp. 544-546.     


\bibitem{turing_oracle}  Alan M. Turing.  Systems of Logic Based on Ordinals. 
                         {\em  Proceedings of the London Mathematical Society.}  {\bf 45}(2),  1939, pp. 161-228.  


\bibitem{von_neumann1}  John von Neumann.  {\em Mathematical Foundations of Quantum Mechanics}.
                        Princeton University Press, 1955.



\bibitem{von_neumann2}  Garrett Birkhoff and  John von Neumann.  The logic of quantum mechanics.
                        {\em Annals of Mathematics}.  {\bf 37}, 1936, pp. 823-843.  



\bibitem{wahl}   Michael Wahl, Matthias Leifgen, Michael Berlin, Tino Rohlicke1, Hans-Jurgen Rahn, and Oliver Benson. 
                 An ultrafast quantum random number generator with provably bounded output bias based on photon arrival 
                 time measurements.  {\em Applied Physics Letters.}  {\bf 98}, 171105, 2011. 





\bibitem{yang}    Jie Yang, Jinlu Liu, Qi Su, Zhengyu Li, Fan Fan, Bingjie Xu and Hong Guo.   
                  A 5.4 Gbps real time quantum random number generator with compact implementation.
                  {\em Optics Express.}  {\bf 24}(24), 2016, pp. 27475-27481.  


\bibitem{yorke}    R. B. Kellogg, T. Y. Li, and J. Yorke.  
                   A Constructive Proof of the Brouwer Fixed-Point Theorem and Computational Results.  
                   {\em SIAM Journal on Numerical Analysis.} {\bf 13}(4), 1976, pp. 473-483.




\bibitem{zierler}   Neal Zierler and Michael Schlessinger.  Boolean Embeddings of Orthomodular Sets and Quantum Logic.
                    {\em  Duke Mathematical Journal}. {\bf 32}, 1965, pp. 251-262.



\end{thebibliography}


\newpage

\section{Appendix -- Execution of the Collatz Machine on $n=5$ }\label{appendix:collatz}

{ \scriptsize

\noindent   Each row shows the current tape and machine state after the instruction in that 
row has been executed.  Before the machine starts executing, the initial tape is 
 \verb|# #11111#|  and the initial state is \verb|q|.  
The space indicates the location of the tape head. 

\smallskip 

\begin{verbatim}
STATE   TAPE                    TAPE HEAD        INSTRUCTION             COMMENTS
  a     ## 11111#########       1                (q, #, a, #, 1)   
  b     ##1 1111#########       2                (a, 1, b, 1, 1)   
  c     ##11 111#########       3                (b, 1, c, 1, 1)   
  d     ##111 11#########       4                (c, 1, d, 1, 1)   
  c     ##1111 1#########       5                (d, 1, c, 1, 1)   
  d     ##11111 #########       6                (c, 1, d, 1, 1)   
  k     ##1111 1#########       5                (d, #, k, #, -1)        Compute 3*5+1
  l     ##11110 #########       6                (k, 1, l, 0, 1)         
  m     ##111100 ########       7                (l, #, m, 0, 1)   
  k     ##11110 00#######       6                (m, #, k, 0, -1)   
  k     ##1111 000#######       5                (k, 0, k, 0, -1)   
  k     ##111 1000#######       4                (k, 0, k, 0, -1)   
  l     ##1110 000#######       5                (k, 1, l, 0, 1)   
  l     ##11100 00#######       6                (l, 0, l, 0, 1)   
  l     ##111000 0#######       7                (l, 0, l, 0, 1)   
  l     ##1110000 #######       8                (l, 0, l, 0, 1)   
  m     ##11100000 ######       9                (l, #, m, 0, 1)   
  k     ##1110000 00#####       8                (m, #, k, 0, -1)   
  k     ##111000 000#####       7                (k, 0, k, 0, -1)   
  k     ##11100 0000#####       6                (k, 0, k, 0, -1)   
  k     ##1110 00000#####       5                (k, 0, k, 0, -1)   
  k     ##111 000000#####       4                (k, 0, k, 0, -1)   
  k     ##11 1000000#####       3                (k, 0, k, 0, -1)   
  l     ##110 000000#####       4                (k, 1, l, 0, 1)   
  l     ##1100 00000#####       5                (l, 0, l, 0, 1)   
  l     ##11000 0000#####       6                (l, 0, l, 0, 1)   
  l     ##110000 000#####       7                (l, 0, l, 0, 1)   
  l     ##1100000 00#####       8                (l, 0, l, 0, 1)   
  l     ##11000000 0#####       9                (l, 0, l, 0, 1)   
  l     ##110000000 #####      10                (l, 0, l, 0, 1)   
  m     ##1100000000 ####      11                (l, #, m, 0, 1)   
  k     ##110000000 00###      10                (m, #, k, 0, -1)   
  k     ##11000000 000###       9                (k, 0, k, 0, -1)   
  k     ##1100000 0000###       8                (k, 0, k, 0, -1)   
  k     ##110000 00000###       7                (k, 0, k, 0, -1)   
  k     ##11000 000000###       6                (k, 0, k, 0, -1)   
  k     ##1100 0000000###       5                (k, 0, k, 0, -1)   
  k     ##110 00000000###       4                (k, 0, k, 0, -1)   
  k     ##11 000000000###       3                (k, 0, k, 0, -1)   
  k     ##1 1000000000###       2                (k, 0, k, 0, -1)   
  l     ##10 000000000###       3                (k, 1, l, 0, 1)   
  l     ##100 00000000###       4                (l, 0, l, 0, 1)   
  l     ##1000 0000000###       5                (l, 0, l, 0, 1)   
  l     ##10000 000000###       6                (l, 0, l, 0, 1)   
  l     ##100000 00000###       7                (l, 0, l, 0, 1)   
  l     ##1000000 0000###       8                (l, 0, l, 0, 1)   
  l     ##10000000 000###       9                (l, 0, l, 0, 1)   
  l     ##100000000 00###      10                (l, 0, l, 0, 1)   
  l     ##1000000000 0###      11                (l, 0, l, 0, 1)   
  l     ##10000000000 ###      12                (l, 0, l, 0, 1)   
  m     ##100000000000 ##      13                (l, #, m, 0, 1)   
  k     ##10000000000 00#      12                (m, #, k, 0, -1)   
  k     ##1000000000 000#      11                (k, 0, k, 0, -1)   
  k     ##100000000 0000#      10                (k, 0, k, 0, -1)   
  k     ##10000000 00000#       9                (k, 0, k, 0, -1)   
  k     ##1000000 000000#       8                (k, 0, k, 0, -1)   
 
\end{verbatim}
}

\newpage

{ \scriptsize
\begin{verbatim}
STATE   TAPE                      TAPE HEAD      INSTRUCTION              COMMENTS
  k     ##100000 0000000####      7              (k, 0, k, 0, -1)   
  k     ##10000 00000000####      6              (k, 0, k, 0, -1)   
  k     ##1000 000000000####      5              (k, 0, k, 0, -1)   
  k     ##100 0000000000####      4              (k, 0, k, 0, -1)   
  k     ##10 00000000000####      3              (k, 0, k, 0, -1)   
  k     ##1 000000000000####      2              (k, 0, k, 0, -1)   
  k     ## 1000000000000####      1              (k, 0, k, 0, -1)   
  l     ##0 000000000000####      2              (k, 1, l, 0, 1)   
  l     ##00 00000000000####      3              (l, 0, l, 0, 1)   
  l     ##000 0000000000####      4              (l, 0, l, 0, 1)   
  l     ##0000 000000000####      5              (l, 0, l, 0, 1)   
  l     ##00000 00000000####      6              (l, 0, l, 0, 1)   
  l     ##000000 0000000####      7              (l, 0, l, 0, 1)   
  l     ##0000000 000000####      8              (l, 0, l, 0, 1)   
  l     ##00000000 00000####      9              (l, 0, l, 0, 1)          
  l     ##000000000 0000####     10              (l, 0, l, 0, 1)   
  l     ##0000000000 000####     11              (l, 0, l, 0, 1)   
  l     ##00000000000 00####     12              (l, 0, l, 0, 1)   
  l     ##000000000000 0####     13              (l, 0, l, 0, 1)   
  l     ##0000000000000 ####     14              (l, 0, l, 0, 1)   
  m     ##00000000000000 ###     15              (l, #, m, 0, 1)   
  k     ##0000000000000 00##     14              (m, #, k, 0, -1)   
  k     ##000000000000 000##     13              (k, 0, k, 0, -1)   
  k     ##00000000000 0000##     12              (k, 0, k, 0, -1)   
  k     ##0000000000 00000##     11              (k, 0, k, 0, -1)   
  k     ##000000000 000000##     10              (k, 0, k, 0, -1)   
  k     ##00000000 0000000##      9              (k, 0, k, 0, -1)   
  k     ##0000000 00000000##      8              (k, 0, k, 0, -1)   
  k     ##000000 000000000##      7              (k, 0, k, 0, -1)   
  k     ##00000 0000000000##      6              (k, 0, k, 0, -1)   
  k     ##0000 00000000000##      5              (k, 0, k, 0, -1)   
  k     ##000 000000000000##      4              (k, 0, k, 0, -1)   
  k     ##00 0000000000000##      3              (k, 0, k, 0, -1)   
  k     ##0 00000000000000##      2              (k, 0, k, 0, -1)   
  k     ## 000000000000000##      1              (k, 0, k, 0, -1)   
  k     # #000000000000000##      0              (k, 0, k, 0, -1)   
  n     ## 000000000000000##      1              (k, #, n, #, 1)   
  n     ##1 00000000000000##      2              (n, 0, n, 1, 1)   
  n     ##11 0000000000000##      3              (n, 0, n, 1, 1)   
  n     ##111 000000000000##      4              (n, 0, n, 1, 1)   
  n     ##1111 00000000000##      5              (n, 0, n, 1, 1)   
  n     ##11111 0000000000##      6              (n, 0, n, 1, 1)   
  n     ##111111 000000000##      7              (n, 0, n, 1, 1)   
  n     ##1111111 00000000##      8              (n, 0, n, 1, 1)   
  n     ##11111111 0000000##      9              (n, 0, n, 1, 1)   
  n     ##111111111 000000##     10              (n, 0, n, 1, 1)   
  n     ##1111111111 00000##     11              (n, 0, n, 1, 1)   
  n     ##11111111111 0000##     12              (n, 0, n, 1, 1)   
  n     ##111111111111 000##     13              (n, 0, n, 1, 1)   
  n     ##1111111111111 00##     14              (n, 0, n, 1, 1)   
  n     ##11111111111111 0##     15              (n, 0, n, 1, 1)   
  n     ##111111111111111 ##     16              (n, 0, n, 1, 1)          Compute 16/2 
  f     ##11111111111111 10#     15              (n, #, f, 0, -1)   
  f     ##1111111111111 110#     14              (f, 1, f, 1, -1)   
  f     ##111111111111 1110#     13              (f, 1, f, 1, -1)   
  f     ##11111111111 11110#     12              (f, 1, f, 1, -1)   
  f     ##1111111111 111110#     11              (f, 1, f, 1, -1)   
  f     ##111111111 1111110#     10              (f, 1, f, 1, -1)   
  f     ##11111111 11111110#      9              (f, 1, f, 1, -1)   
  f     ##1111111 111111110#      8              (f, 1, f, 1, -1)   
  f     ##111111 1111111110#      7              (f, 1, f, 1, -1)   
  f     ##11111 11111111110#      6              (f, 1, f, 1, -1)   
\end{verbatim}
}

\newpage

{ \scriptsize
\begin{verbatim}
STATE   TAPE                   TAPE HEAD        INSTRUCTION  
  f     ##1111 111111111110#      5             (f, 1, f, 1, -1)   
  f     ##111 1111111111110#      4             (f, 1, f, 1, -1)   
  f     ##11 11111111111110#      3             (f, 1, f, 1, -1)   
  f     ##1 111111111111110#      2             (f, 1, f, 1, -1)   
  f     ## 1111111111111110#      1             (f, 1, f, 1, -1)   
  f     # #1111111111111110#      0             (f, 1, f, 1, -1)   
  g     ## 1111111111111110#      1             (f, #, g, #, 1)   
  i     ### 111111111111110#      2             (g, 1, i, #, 1)   
  i     ###1 11111111111110#      3             (i, 1, i, 1, 1)   
  i     ###11 1111111111110#      4             (i, 1, i, 1, 1)   
  i     ###111 111111111110#      5             (i, 1, i, 1, 1)   
  i     ###1111 11111111110#      6             (i, 1, i, 1, 1)   
  i     ###11111 1111111110#      7             (i, 1, i, 1, 1)   
  i     ###111111 111111110#      8             (i, 1, i, 1, 1)   
  i     ###1111111 11111110#      9             (i, 1, i, 1, 1)   
  i     ###11111111 1111110#     10             (i, 1, i, 1, 1)   
  i     ###111111111 111110#     11             (i, 1, i, 1, 1)   
  i     ###1111111111 11110#     12             (i, 1, i, 1, 1)   
  i     ###11111111111 1110#     13             (i, 1, i, 1, 1)   
  i     ###111111111111 110#     14             (i, 1, i, 1, 1)   
  i     ###1111111111111 10#     15             (i, 1, i, 1, 1)   
  i     ###11111111111111 0#     16             (i, 1, i, 1, 1)   
  e     ###1111111111111 10#     15             (i, 0, e, 0, -1)   
  f     ###111111111111 100#     14             (e, 1, f, 0, -1)   
  f     ###11111111111 1100#     13             (f, 1, f, 1, -1) 
  f     ###1111111111 11100#     12             (f, 1, f, 1, -1)   
  f     ###111111111 111100#     11             (f, 1, f, 1, -1)   
  f     ###11111111 1111100#     10             (f, 1, f, 1, -1)   
  f     ###1111111 11111100#      9             (f, 1, f, 1, -1)    
  f     ###111111 111111100#      8             (f, 1, f, 1, -1)   
  f     ###11111 1111111100#      7             (f, 1, f, 1, -1)   
  f     ###1111 11111111100#      6             (f, 1, f, 1, -1)   
  f     ###111 111111111100#      5             (f, 1, f, 1, -1)   
  f     ###11 1111111111100#      4             (f, 1, f, 1, -1)   
  f     ###1 11111111111100#      3             (f, 1, f, 1, -1)   
  f     ### 111111111111100#      2             (f, 1, f, 1, -1)   
  f     ## #111111111111100#      1             (f, 1, f, 1, -1)   
  g     ### 111111111111100#      2             (f, #, g, #, 1)   
  i     #### 11111111111100#      3             (g, 1, i, #, 1)   
  i     ####1 1111111111100#      4             (i, 1, i, 1, 1)   
  i     ####11 111111111100#      5             (i, 1, i, 1, 1)   
  i     ####111 11111111100#      6             (i, 1, i, 1, 1)   
  i     ####1111 1111111100#      7             (i, 1, i, 1, 1)   
  i     ####11111 111111100#      8             (i, 1, i, 1, 1)   
  i     ####111111 11111100#      9             (i, 1, i, 1, 1)   
  i     ####1111111 1111100#     10             (i, 1, i, 1, 1)   
  i     ####11111111 111100#     11             (i, 1, i, 1, 1)   
  i     ####111111111 11100#     12             (i, 1, i, 1, 1)   
  i     ####1111111111 1100#     13             (i, 1, i, 1, 1)   
  i     ####11111111111 100#     14             (i, 1, i, 1, 1)   
  i     ####111111111111 00#     15             (i, 1, i, 1, 1)   
  e     ####11111111111 100#     14             (i, 0, e, 0, -1)   
  f     ####1111111111 1000#     13             (e, 1, f, 0, -1)   
  f     ####111111111 11000#     12             (f, 1, f, 1, -1)   
  f     ####11111111 111000#     11             (f, 1, f, 1, -1)   
  f     ####1111111 1111000#     10             (f, 1, f, 1, -1)   
  f     ####111111 11111000#      9             (f, 1, f, 1, -1)   
  f     ####11111 111111000#      8             (f, 1, f, 1, -1)   
  f     ####1111 1111111000#      7             (f, 1, f, 1, -1)   
  f     ####111 11111111000#      6             (f, 1, f, 1, -1)   
  f     ####11 111111111000#      5             (f, 1, f, 1, -1)   
  f     ####1 1111111111000#      4             (f, 1, f, 1, -1)   
\end{verbatim}
}

\newpage

{ \scriptsize
\begin{verbatim}
STATE   TAPE                   TAPE HEAD         INSTRUCTION        
  f     #### 11111111111000#       3             (f, 1, f, 1, -1)   
  f     ### #11111111111000#       2             (f, 1, f, 1, -1)   
  g     #### 11111111111000#       3             (f, #, g, #, 1)   
  i     ##### 1111111111000#       4             (g, 1, i, #, 1)   
  i     #####1 111111111000#       5             (i, 1, i, 1, 1)   
  i     #####11 11111111000#       6             (i, 1, i, 1, 1)   
  i     #####111 1111111000#       7             (i, 1, i, 1, 1)   
  i     #####1111 111111000#       8             (i, 1, i, 1, 1)   
  i     #####11111 11111000#       9             (i, 1, i, 1, 1)   
  i     #####111111 1111000#      10             (i, 1, i, 1, 1)   
  i     #####1111111 111000#      11             (i, 1, i, 1, 1)   
  i     #####11111111 11000#      12             (i, 1, i, 1, 1)   
  i     #####111111111 1000#      13             (i, 1, i, 1, 1)   
  i     #####1111111111 000#      14             (i, 1, i, 1, 1)   
  e     #####111111111 1000#      13             (i, 0, e, 0, -1)   
  f     #####11111111 10000#      12             (e, 1, f, 0, -1)   
  f     #####1111111 110000#      11             (f, 1, f, 1, -1)   
  f     #####111111 1110000#      10             (f, 1, f, 1, -1)   
  f     #####11111 11110000#       9             (f, 1, f, 1, -1)   
  f     #####1111 111110000#       8             (f, 1, f, 1, -1)   
  f     #####111 1111110000#       7             (f, 1, f, 1, -1)   
  f     #####11 11111110000#       6             (f, 1, f, 1, -1)   
  f     #####1 111111110000#       5             (f, 1, f, 1, -1)   
  f     ##### 1111111110000#       4             (f, 1, f, 1, -1)  
  f     #### #1111111110000#       3             (f, 1, f, 1, -1)   
  g     ##### 1111111110000#       4             (f, #, g, #, 1)   
  i     ###### 111111110000#       5             (g, 1, i, #, 1)   
  i     ######1 11111110000#       6             (i, 1, i, 1, 1)   
  i     ######11 1111110000#       7             (i, 1, i, 1, 1)   
  i     ######111 111110000#       8             (i, 1, i, 1, 1)   
  i     ######1111 11110000#       9             (i, 1, i, 1, 1)   
  i     ######11111 1110000#      10             (i, 1, i, 1, 1)   
  i     ######111111 110000#      11             (i, 1, i, 1, 1)   
  i     ######1111111 10000#      12             (i, 1, i, 1, 1)   
  i     ######11111111 0000#      13             (i, 1, i, 1, 1)   
  e     ######1111111 10000#      12             (i, 0, e, 0, -1)   
  f     ######111111 100000#      11             (e, 1, f, 0, -1)   
  f     ######11111 1100000#      10             (f, 1, f, 1, -1)   
  f     ######1111 11100000#       9             (f, 1, f, 1, -1)   
  f     ######111 111100000#       8             (f, 1, f, 1, -1)   
  f     ######11 1111100000#       7             (f, 1, f, 1, -1)   
  f     ######1 11111100000#       6             (f, 1, f, 1, -1)   
  f     ###### 111111100000#       5             (f, 1, f, 1, -1)   
  f     ##### #111111100000#       4             (f, 1, f, 1, -1)   
  g     ###### 111111100000#       5             (f, #, g, #, 1)   
  i     ####### 11111100000#       6             (g, 1, i, #, 1)   
  i     #######1 1111100000#       7             (i, 1, i, 1, 1)   
  i     #######11 111100000#       8             (i, 1, i, 1, 1)   
  i     #######111 11100000#       9             (i, 1, i, 1, 1)   
  i     #######1111 1100000#      10             (i, 1, i, 1, 1)   
  i     #######11111 100000#      11             (i, 1, i, 1, 1)   
  i     #######111111 00000#      12             (i, 1, i, 1, 1)   
  e     #######11111 100000#      11             (i, 0, e, 0, -1)   
  f     #######1111 1000000#      10             (e, 1, f, 0, -1)   
  f     #######111 11000000#       9             (f, 1, f, 1, -1)   
  f     #######11 111000000#       8             (f, 1, f, 1, -1)   
  f     #######1 1111000000#       7             (f, 1, f, 1, -1)   
  f     ####### 11111000000#       6             (f, 1, f, 1, -1)   
  f     ###### #11111000000#       5             (f, 1, f, 1, -1)   
  g     ####### 11111000000#       6             (f, #, g, #, 1)   
  i     ######## 1111000000#       7             (g, 1, i, #, 1)   
  i     ########1 111000000#       8             (i, 1, i, 1, 1)   
\end{verbatim}
}

\newpage

{   \scriptsize
\begin{verbatim}
STATE   TAPE                   TAPE HEAD       INSTRUCTION                 COMMENTS
  i     ########11 11000000#       9           (i, 1, i, 1, 1)   
  i     ########111 1000000#      10           (i, 1, i, 1, 1)   
  i     ########1111 000000#      11           (i, 1, i, 1, 1)   
  e     ########111 1000000#      10           (i, 0, e, 0, -1)   
  f     ########11 10000000#       9           (e, 1, f, 0, -1)   
  f     ########1 110000000#       8           (f, 1, f, 1, -1)   
  f     ######## 1110000000#       7           (f, 1, f, 1, -1)   
  f     ####### #1110000000#       6           (f, 1, f, 1, -1)   
  g     ######## 1110000000#       7           (f, #, g, #, 1)   
  i     ######### 110000000#       8           (g, 1, i, #, 1)   
  i     #########1 10000000#       9           (i, 1, i, 1, 1)   
  i     #########11 0000000#      10           (i, 1, i, 1, 1)   
  e     #########1 10000000#       9           (i, 0, e, 0, -1)   
  f     ######### 100000000#       8           (e, 1, f, 0, -1)   
  f     ######## #100000000#       7           (f, 1, f, 1, -1)   
  g     ######### 100000000#       8           (f, #, g, #, 1)   
  i     ########## 00000000#       9           (g, 1, i, #, 1)   
  e     ######### #00000000#       8           (i, 0, e, 0, -1)   
  g     ########## 00000000#       9           (e, #, g, #, 1)   
  g     ##########1 0000000#      10           (g, 0, g, 1, 1)   
  g     ##########11 000000#      11           (g, 0, g, 1, 1)   
  g     ##########111 00000#      12           (g, 0, g, 1, 1)
  g     ##########1111 0000#      13           (g, 0, g, 1, 1)   
  g     ##########11111 000#      14           (g, 0, g, 1, 1)   
  g     ##########111111 00#      15           (g, 0, g, 1, 1)   
  g     ##########1111111 0#      16           (g, 0, g, 1, 1)   
  g     ##########11111111 #      17           (g, 0, g, 1, 1)   
  j     ##########1111111 1#      16           (g, #, j, #, -1)   
  j     ##########111111 11#      15           (j, 1, j, 1, -1)   
  j     ##########11111 111#      14           (j, 1, j, 1, -1)   
  j     ##########1111 1111#      13           (j, 1, j, 1, -1)   
  j     ##########111 11111#      12           (j, 1, j, 1, -1)   
  j     ##########11 111111#      11           (j, 1, j, 1, -1)   
  j     ##########1 1111111#      10           (j, 1, j, 1, -1)   
  j     ########## 11111111#       9           (j, 1, j, 1, -1)   
  j     ######### #11111111#       8           (j, 1, j, 1, -1)         Completed 16/2.
  a     ########## 11111111#       9           (j, #, a, #, 1)   
  b     ##########1 1111111#       10          (a, 1, b, 1, 1)            
  c     ##########11 111111#       11          (b, 1, c, 1, 1)   
  d     ##########111 11111#       12          (c, 1, d, 1, 1)   
  c     ##########1111 1111#       13          (d, 1, c, 1, 1)   
  d     ##########11111 111#       14          (c, 1, d, 1, 1)   
  c     ##########111111 11#       15          (d, 1, c, 1, 1)   
  d     ##########1111111 1#       16          (c, 1, d, 1, 1)   
  c     ##########11111111 #       17          (d, 1, c, 1, 1)   
  e     ##########1111111 1#       16          (c, #, e, #, -1)         Compute 8 / 2
  f     ##########111111 10#       15          (e, 1, f, 0, -1)   
  f     ##########11111 110#       14          (f, 1, f, 1, -1)   
  f     ##########1111 1110#       13          (f, 1, f, 1, -1)   
  f     ##########111 11110#       12          (f, 1, f, 1, -1)   
  f     ##########11 111110#       11          (f, 1, f, 1, -1)  
  f     ##########1 1111110#       10          (f, 1, f, 1, -1)   
  f     ########## 11111110#        9          (f, 1, f, 1, -1)   
  f     ######### #11111110#        8          (f, 1, f, 1, -1)   
  g     ########## 11111110#        9          (f, #, g, #, 1)   
  i     ########### 1111110#       10          (g, 1, i, #, 1)   
  i     ###########1 111110#       11          (i, 1, i, 1, 1)   
  i     ###########11 11110#       12          (i, 1, i, 1, 1)   
  i     ###########111 1110#       13          (i, 1, i, 1, 1)   
  i     ###########1111 110#       14          (i, 1, i, 1, 1)   
  i     ###########11111 10#       15          (i, 1, i, 1, 1)   
  i     ###########111111 0#       16          (i, 1, i, 1, 1)    
\end{verbatim}
}

\newpage

{    \scriptsize
     \begin{verbatim}
STATE   TAPE                    TAPE HEAD        INSTRUCTION               COMMENTS
  e     ###########11111 10#       15            (i, 0, e, 0, -1)   
  f     ###########1111 100#       14            (e, 1, f, 0, -1)   
  f     ###########111 1100#       13            (f, 1, f, 1, -1)   
  f     ###########11 11100#       12            (f, 1, f, 1, -1)   
  f     ###########1 111100#       11            (f, 1, f, 1, -1)   
  f     ########### 1111100#       10            (f, 1, f, 1, -1)   
  f     ########## #1111100#        9            (f, 1, f, 1, -1)   
  g     ########### 1111100#       10            (f, #, g, #, 1)   
  i     ############ 111100#       11            (g, 1, i, #, 1)   
  i     ############1 11100#       12            (i, 1, i, 1, 1)   
  i     ############11 1100#       13            (i, 1, i, 1, 1)   
  i     ############111 100#       14            (i, 1, i, 1, 1)   
  i     ############1111 00#       15            (i, 1, i, 1, 1)   
  e     ############111 100#       14            (i, 0, e, 0, -1)   
  f     ############11 1000#       13            (e, 1, f, 0, -1)   
  f     ############1 11000#       12            (f, 1, f, 1, -1)   
  f     ############ 111000#       11            (f, 1, f, 1, -1)   
  f     ########### #111000#       10            (f, 1, f, 1, -1)   
  g     ############ 111000#       11            (f, #, g, #, 1)   
  i     ############# 11000#       12            (g, 1, i, #, 1)   
  i     #############1 1000#       13            (i, 1, i, 1, 1)   
  i     #############11 000#       14            (i, 1, i, 1, 1)   
  e     #############1 1000#       13            (i, 0, e, 0, -1)   
  f     ############# 10000#       12            (e, 1, f, 0, -1)   
  f     ############ #10000#       11            (f, 1, f, 1, -1)   
  g     ############# 10000#       12            (f, #, g, #, 1)   
  i     ############## 0000#       13            (g, 1, i, #, 1)   
  e     ############# #0000#       12            (i, 0, e, 0, -1)   
  g     ############## 0000#       13            (e, #, g, #, 1)   
  g     ##############1 000#       14            (g, 0, g, 1, 1)   
  g     ##############11 00#       15            (g, 0, g, 1, 1)   
  g     ##############111 0#       16            (g, 0, g, 1, 1)   
  g     ##############1111 #       17            (g, 0, g, 1, 1)   
  j     ##############111 1#       16            (g, #, j, #, -1)   
  j     ##############11 11#       15            (j, 1, j, 1, -1)   
  j     ##############1 111#       14            (j, 1, j, 1, -1)   
  j     ############## 1111#       13            (j, 1, j, 1, -1)   
  j     ############# #1111#       12            (j, 1, j, 1, -1)         Completed 8/2
  a     ############## 1111#       13            (j, #, a, #, 1)   
  b     ##############1 111#       14            (a, 1, b, 1, 1)   
  c     ##############11 11#       15            (b, 1, c, 1, 1)   
  d     ##############111 1#       16            (c, 1, d, 1, 1)   
  c     ##############1111 #       17            (d, 1, c, 1, 1)   
  e     ##############111 1#       16            (c, #, e, #, -1)         Compute 4/2
  f     ##############11 10#       15            (e, 1, f, 0, -1)   
  f     ##############1 110#       14            (f, 1, f, 1, -1)   
  f     ############## 1110#       13            (f, 1, f, 1, -1)   
  f     ############# #1110#       12            (f, 1, f, 1, -1)   
  g     ############## 1110#       13            (f, #, g, #, 1)   
  i     ############### 110#       14            (g, 1, i, #, 1)   
  i     ###############1 10#       15            (i, 1, i, 1, 1)   
  i     ###############11 0#       16            (i, 1, i, 1, 1)   
  e     ###############1 10#       15            (i, 0, e, 0, -1) 
  f     ############### 100#       14            (e, 1, f, 0, -1)   
  f     ############## #100#       13            (f, 1, f, 1, -1)   
  g     ############### 100#       14            (f, #, g, #, 1)   
  i     ################ 00#       15            (g, 1, i, #, 1)   
  e     ############### #00#       14            (i, 0, e, 0, -1)   
  g     ################ 00#       15            (e, #, g, #, 1)   
  g     ################1 0#       16            (g, 0, g, 1, 1)   
  g     ################11 #       17            (g, 0, g, 1, 1)   
  \end{verbatim}
}

\newpage

{    \scriptsize
     \begin{verbatim}
STATE   TAPE                    TAPE HEAD    INSTRUCTION          COMMENTS   
  j     ################1 1#       16        (g, #, j, #, -1)    
  j     ################ 11#       15        (j, 1, j, 1, -1)   
  j     ############### #11#       14        (j, 1, j, 1, -1)     Completed 4/2
  a     ################ 11#       15        (j, #, a, #, 1)   
  b     ################1 1#       16        (a, 1, b, 1, 1)   
  c     ################11 #       17        (b, 1, c, 1, 1)   
  e     ################1 1#       16        (c, #, e, #, -1)     Compute 2/2
  f     ################ 10#       15        (e, 1, f, 0, -1)   
  f     ############### #10#       14        (f, 1, f, 1, -1)   
  g     ################ 10#       15        (f, #, g, #, 1)   
  i     ################# 0#       16        (g, 1, i, #, 1)   
  e     ################ #0#       15        (i, 0, e, 0, -1)   
  g     ################# 0#       16        (e, #, g, #, 1)   
  g     #################1 #       17        (g, 0, g, 1, 1)   
  j     ################# 1#       16        (g, #, j, #, -1)   
  j     ################ #1#       15        (j, 1, j, 1, -1)   
  a     ################# 1#       16        (j, #, a, #, 1)   
  b     #################1 #       17        (a, 1, b, 1, 1)   
  h     ################# 1#       16        (b, #, h, #, -1)     n = 5 orbit reaches 1
  \end{verbatim}
}

\end{document}